%% file: main.tex
\documentclass[a4paper,UKenglish,cleveref, numberwithinsect]{lipics-v2021}
\hideLIPIcs
\usepackage[ruled, vlined, linesnumbered]{algorithm2e}
\usepackage{colonequals}
\usepackage{stmaryrd}
\usepackage{tabularx}
\usepackage{booktabs}
\usepackage{bm}
\usepackage{tikz}
\usepackage{nicefrac}
\usepackage{mathabx} %
\usetikzlibrary{decorations.pathreplacing,calligraphy} %
\usetikzlibrary{arrows.meta}
\usepackage[bibliography=common]{apxproof}

\usepackage{relsize}
\usepackage{stackengine}
\stackMath
\usepackage{nameref}
\usepackage{mathtools}

\newcommand{\TreeDP}{\texttt{TreeDP}}
\newcommand{\JoinNode}{\textup{\textsc{Join}}}
\newcommand{\IntroVertNode}{\textup{\textsc{Introduce Vertex}}}
\newcommand{\ForgetVertNode}{\textup{\textsc{Forget Vertex}}}
\newcommand{\IntroEdgeNode}{\textup{\textsc{Introduce Edge}}}
\newcommand{\LeafNode}{\textup{\textsc{Leaf}}}

\crefname{claim}{Claim}{Claims}

\newtheoremrep{theorem}{Theorem}[section]

\newtheoremrep{proposition}[theorem]{Proposition}
\newtheoremrep{fact}[theorem]{Fact}
\newtheoremrep{lemma}[theorem]{Lemma}
\newtheoremrep{claim}[theorem]{Claim}
\newtheoremrep{example}[theorem]{Example}
\newtheoremrep{definition}[theorem]{Definition}

\input{macros}
\title{Path Length Existence Problems in Bounded-Treewidth Probabilistic Graphs} %

\author{Antoine Amarilli}{LTCI, Télécom Paris, Institut Polytechnique de Paris,
  Palaiseau, France \and Univ. Lille, Inria, CNRS, Centrale Lille, UMR 9189
CRIStAL, F-59000 Lille, France  \and \url{https://a3nm.net/}}{a3nm@a3nm.net}{https://orcid.org/0000-0002-7977-4441}{}
\author{Mikaël Monet}{Univ. Lille, Inria, CNRS, Centrale Lille, UMR 9189
CRIStAL, F-59000 Lille, France \and
\url{https://mikael-monet.net/}}{mikael.monet@inria.fr}{https://orcid.org/0000-0002-6158-4607}{}

\author{Harshul Sagar}{University of Illinois at Urbana-Champaign, Champaign, Illinois}{harshul4@illinois.edu}{https://orcid.org/0009-0003-3802-872X}{}

\authorrunning{A.\ Amarilli, M.\ Monet, and H.\ Sagar}

\Copyright{Antoine Amarilli, Mikaël Monet, and H.\ Sagar}

\ccsdesc[500]{Theory of computation~Design and analysis of algorithms~Graph algorithms analysis~Shortest paths}

\keywords{probabilistic graphs, tree decompositions, walks, simple paths} %

\nolinenumbers %

\EventEditors{John Q. Open and Joan R. Access}
\EventNoEds{2}
\EventLongTitle{42nd Conference on Very Important Topics (CVIT 2016)}
\EventShortTitle{CVIT 2016}
\EventAcronym{CVIT}
\EventYear{2016}
\EventDate{December 24--27, 2016}
\EventLocation{Little Whinging, United Kingdom}
\EventLogo{}
\SeriesVolume{42}
\ArticleNo{23}

\begin{document}

\maketitle

\begin{abstract}
We consider the problem of testing the existence of a source-to-target path satisfying an input-specified length constraint in an input probabilistic graph, as a simplified model of query evaluation on a probabilistic database. More precisely, we wish to determine the probability that a suitable walk or simple path exists, and consider the cases of a length constraint requiring the length to be at least $\ell$, at most $\ell$, or exactly $\ell$, for $\ell$ a target length given as input and written in binary. 
The probabilistic graphs that we consider are directed graphs where each edge is annotated with an independent probability of existence. 
Our goal is to determine, given such a graph, endpoints, and length constraint, what is the probability that a requisite path/walk exists, i.e., the total probability of the outcomes that satisfy this requirement.
As this task is generally intractable (\#P-hard), we aim for algorithms applying to input graphs that are assumed to be of bounded treewidth.

We show that, in contrast to many other tasks on probabilistic graphs (e.g., computing the probability of MSO formulas), bounding the treewidth does not always imply tractability of our problems, and we characterise the resulting complexity landscape.
Specifically: 
  \begin{itemize}
    \item We give an XP algorithm parameterised by treewidth to compute the probability that an input probabilistic graph contains a source-to-target path/walk of length $\geq \ell$, and do the same for a path/walk of length $\leq \ell$. 
  This implies in particular that these problems are in PTIME on graphs whose treewidth is bounded by a constant. 
    \item We show that XP is likely the best one can hope for the four problems mentioned in the last item, as we show that these are W[1]-hard (under Turing reductions) parameterised by pathwidth (hence not FPT under the standard assumption that FPT$\neq$W[1]). This is in contrast to the case of non-probabilistic graphs, where the corresponding problems were known to be FPT parameterised by treewidth; and to the connectivity problem (and more generally MSO evaluation) on probabilistic graphs, where the same was known.
    \item Surprisingly, we show that it is \#P-hard to compute the probability that an input graph contains a source-to-target path/walk of length \emph{exactly}~$\ell$, even on graphs of pathwidth~$2$.
      By contrast, all our problems are tractable on graphs of treewidth~$1$.
  \item
    Last, we investigate the complexity of our problems when the edges can carry weights. This causes all our problems to become \#P-hard already on pathwidth-2
    graphs, and causes specifically the ``exactly $\ell$'' walk problem to be \#P-hard even on pathwidth-1 graphs.
  \end{itemize}
\end{abstract}

\vskip\topmattervskip\baselineskip\noindent\noindent\textcolor{lipicsGray}{\fontsize{9}{12}\sffamily\bfseries Usage of AI\enskip} {\fontsize{9}{12}\selectfont We used the free version of LLMs for two purposes. First, to find related work we might have missed and for minor uses (proofreading, LaTeX help, TikZ, etc.). Second, the proof of the W[1]-hardness result (\cref{thm:w1hard}) was found with LLM assistance.

\vfill
\pagebreak

\section{Introduction}
\label{sec:introduction}
\input{introduction}

\section{Preliminaries and Problem Statement}
\label{sec:prelims}
\input{preliminaries}
\section{Treewidth}
\label{sec:treewidth}
\input{treewidth}

\section{Introducing Probabilities}
\label{sec:probabilities}
\input{proba}
\section{Tractable Cases on Bounded-Treewidth Probabilistic Graphs}
\label{sec:tract}
\input{tract}
\section{Intractable Cases on Treewidth-2 Probabilistic Graphs}
\label{sec:untract}
\input{untract}

\section{Weighted Graphs}
\label{sec:variants}

\input{variants}

\section{Conclusion and Future Work}
\label{sec:conclusion}
\input{conclusion}
\bibliography{main}

\end{document}

%% file: macros.tex
\AddToHook{env/proposition/begin}{\crefalias{theorem}{proposition}}
\AddToHook{env/lemma/begin}{\crefalias{theorem}{lemma}}
\AddToHook{env/example/begin}{\crefalias{theorem}{example}}
\AddToHook{env/definition/begin}{\crefalias{theorem}{definition}}
\AddToHook{env/corollary/begin}{\crefalias{theorem}{corollary}}
\AddToHook{env/claim/begin}{\crefalias{theorem}{claim}}
\AddToHook{env/remark/begin}{\crefalias{theorem}{remark}}
\AddToHook{env/conjecture/begin}{\crefalias{theorem}{conjecture}}

\newcommand\probname{\mathsf{PLEP}}
\newcommand\genprob{\probname^{X}_{\bowtie}}
\newcommand\prob[2]{\probname^{#1}_{#2}}
\renewcommand\Pr{\mathrm{Pr}}

\newcommand{\spl}{\mathrm{SPL}}

\newcommand\apsp{\mathrm{ALG\!-\!\!APSP}}

\newcommand\psp{\mathrm{PSP}}
\newcommand\slpdf{\mathrm{\texttt{SLPDC}}}
\newcommand\llpdf{\mathrm{\texttt{LLPDC}}}
\newcommand\lee{\mathrm{\texttt{$\#$LE}}}

\newcommand\pptdnf{\mathrm{PP2DNF}}
\newcommand\spptdnf{\mathrm{\#PP2DNF}}

\newcommand\knapsack{\mathrm{\#KNAPSACK}}
\newcommand\multsubsm{\mathrm{\#UBK}}

\newcommand\calS{\mathcal{S}}
\newcommand\calO{\mathcal{O}}
\newcommand\calP{\mathcal{P}}
\newcommand\proj{\mathrm{Proj}}
\newcommand\length{\mathrm{length}}
\newcommand\calG{\mathcal{G}}
\newcommand\G{\mathcal{G}}

\newcommand{\NN}{\mathbb{N}}

\newcounter{sqindex}

\newcommand{\Walk}{\mathsf{Walk}}
\newcommand{\Path}{\mathsf{Path}}
\newcommand{\lppe}{\probname^{\Path}_=}
\newcommand{\lppg}{\probname^{\Path}_\geq}
\newcommand{\lppl}{\probname^{\Path}_\leq}
\newcommand{\lpwe}{\probname^{\Walk}_=}
\newcommand{\lpwl}{\probname^{\Walk}_\leq}
\newcommand{\lpwg}{\probname^{\Walk}_\geq}
\newcommand\treewidth{\mathsf{tw}}
\newcommand\ldpl{\mathrm{LDPL}}
\newcommand\cpsp{\mathrm{CPSP}}
\newcommand\CPSP{\cpsp}

\newcommand{\myparagraph}[1]{\subparagraph*{#1.}}

%% file: introduction.tex
\emph{Probabilistic graphs} have been studied in computer science both as a theoretical object of study and as an applicative model~\cite{saha2021shortest}, e.g.,
describing possible failures in a computer network, closures on road
networks, uncertain connections in social networks, etc. We consider the simplest such model:
every edge is
annotated with a probability, and the edge is then present with this
probability, independently from all other edges.
Such graphs are an arity-two case of 
\emph{tuple-independent probabilistic databases}~\cite{suciu2011probabilistic},
which have also been studied in their own
right~\cite{amarilli2017conjunctive,maniu2017indexing,amarilli2022dichotomy}.

We study the question of efficiently finding paths in such graphs, which is
related to the \emph{two-terminal network reliability} problem in
probabilistic graphs (ST-CON). This problem asks,
given a probabilistic graph $\mathcal{G}$, a source $s$, and a target $t$, what is the
probability that there is some path from~$s$ to~$t$, i.e., what is the total
probability of the graphs having such a path in the distribution represented by
the graph~$\mathcal{G}$. We can of course always solve this problem exactly by
considering every possible realisation of~$\mathcal{G}$, solving the
(non-probabilistic) $st$-reachability problem, and summing the probabilities of
the realisations where a path exists. However, in general the number of
realisations is exponential in the size of~$\mathcal{G}$, so this algorithm is
inefficient. It has in fact been shown~\cite[Theorem~1]{V79}
that ST-CON is \emph{\#P-complete}, 
so it is unlikely that substantially better algorithms exist. 

However, the tractability of ST-CON is regained when the probabilistic
graph is assumed to have bounded
\emph{treewidth}~\cite{arnborg1991easy,wolle2002framework,goharshady2020efficient}, i.e., 
to be similar to a tree. More precisely, for any
treewidth bound $k \in \mathbb{N}$, the ST-CON problem can be solved in
polynomial time when the input probabilistic graphs are restricted to have
treewidth at most~$k$; in fact the problem is fixed-parameter tractable (FPT)
when parameterised by treewidth.
This observation generalises to the fixed-parameter tractability of the class of \emph{monadic
second-order queries} (MSO) on tuple-independent probabilistic
databases parameterized by treewidth, namely, the problem is in linear time in the database up to the cost of arithmetic operations~\cite{amarilli2015provenance,amarilli2016tractable}.
While the motivation of treewidth is theoretical, its
amenability to dynamic programming techniques makes it potentially attractive
for practical applications, and the experimental treewidth of practical data has
for instance been investigated in earlier work~\cite{maniu2019experimental}.

However, these results do not address the task of finding paths of
\emph{specified lengths} in probabilistic graphs, a query which is not
generally expressible in MSO. For instance, 
we can ask for the probability that the shortest path from~$s$ to~$t$ has length less than some
threshold, e.g., to check if $t$ can be reached from~$s$ in a
road network with edge probabilities
subject to a time budget. Such questions have been
investigated~\cite{saha2021shortest,yuan2010efficiently}, in particular in
connection to the computation of  
\emph{most probable shortest paths} that form a kind of generalisation of
shortest paths in probabilistic graphs. However, while these tasks are generally
intractable on unrestricted probabilistic graphs, to our knowledge it had not
been investigated whether tractability could also be regained in this
setting by bounding the treewidth, ideally with an FPT algorithm.
The one exception we are aware of is the very
recent preprint of Ando~\cite{ando2026complexity}, which gives a \#P-hardness proof and an XP algorithm
parameterised by treewidth to compute the distribution of the longest and
shortest simple paths in a probabilistic graph. However their setting is
different because they consider
\emph{edge lengths} that are probabilistic (according to a continuous
distribution), whereas we consider graphs with edges
annotated by a probability of existence. It turns out however that some aspects of the \#P-hardness proof of~\cite{ando2026complexity} are related to the W[1]-hardness proof that we present later.

\subparagraph*{Contributions.}
In this paper, we consider the question of computing paths satisfying length
constraints in probabilistic graphs. More precisely, we consider three types of
length requirements: asking for paths of length $\leq \ell$, or $= \ell$, or
$\geq \ell$, for an integer threshold $\ell$ 
given as input (and written in binary). We consider the search for simple paths, and 
the search for walks (which may revisit a given vertex): while this choice makes no difference for shortest paths (or
paths of length $\leq \ell$), it may make a difference in general because
revisiting vertices may be useful to achieve higher lengths. This gives six
problem combinations: $\lppl$, $\lppe$, and $\lppg$ for paths; and $\lpwl$,
$\lpwe$, and $\lpwg$ for walks.

For each of these six problems, the input consists of a probabilistic graph
$\mathcal{G}$, a source vertex~$s$ and target vertex $t$, and the
threshold~$\ell$. Our task is to compute the total probability of the
realisations of~$\mathcal{G}$ where
a suitable path from $s$ to $t$ exists. Our
goal is to study the complexity of these problems when the input graph
$\mathcal{G}$ has bounded treewidth.

To our surprise, the complexity landscape turns out to be more complicated than
expected. We initially expected these problems to behave like ST-CON,
and to be fixed-parameter tractable on bounded-treewidth data by being captured
by a suitable logic
(e.g., an extension of monadic second-order logic, in the  spirit of
quantitative MSO~\cite{kreutzer2013quantitative,knop2019simplified}  or other
extensions~\cite{knop2019simplified}). However, it turns out that tractability
does not always hold, 
that they generally do not enjoy FPT algorithms (conditionally),
and that in particular the six problems do not have the same complexity.
We now present the results that we obtain, and the structure of the paper. 

After
formally defining our problems in
\cref{sec:prelims}, we first review the complexity of the six problems on
non-probabilistic graphs in \cref{sec:treewidth}. Indeed, the problems $\lppe$ and $\lppg$, which ask
for long simple paths, are NP-hard already on non-probabilistic graphs because
of an immediate reduction from the Hamiltonian path problem. 
So we introduce treewidth as a way to regain tractability, and
show that all six variants of
our problems are FPT when parameterised by treewidth.
This result is not
surprising and follows known constructions, e.g., for the  Hamiltonian path
problem on bounded-treewidth graphs~\cite{ziobro2019finding}, for which
efficient FPT algorithms are
known~\cite{bodlaender2015deterministic,cygan2018fast,cygan2022solving}
achieving $c^{\calO(k)} n^{\calO(1)}$ for some constant $c$ and with $k$ the
input treewidth. We will sketch in the section a self-contained proof of these
results, introducing notions that will be lifted later to study $\lppg$ on
bounded-treewidth probabilistic graphs.
The situation  for non-probabilistic graphs is summarised in \cref{tab:nonproba}.

We then move to probabilistic graphs in \cref{sec:probabilities}. Like ST-CON,
all six problems become \#P-hard when the input probabilistic graphs are
arbitrary; so we study the impact of bounding the treewidth. We show that,
for treewidth-1 graphs (i.e., probabilistic forests), the
six problems indeed become tractable, by an ad-hoc dynamic programming argument.

The key technical results of the paper are given in \cref{sec:tract}, where we
investigate whether tractability extends to arbitrary treewidth bounds. 
We show tractability on bounded-treewidth probabilistic graphs of four of
our six problems: $\lppl$, $\lpwl$, $\lpwg$, and $\lppg$.
Namely, we show that these algorithms are XP parameterised by treewidth, and
that they are likely not FPT because they are W[1]-hard already on acyclic
graphs and when parameterising by pathwidth. Note that this contrasts both with the
FPT results for these problems on non-probabilistic graphs, and with the FPT
results for ST-CON on probabilistic graphs.
Our XP algorithms are shown by standard dynamic programming
arguments but with significant complications to account for the probabilistic
computations: we first cover the problems $\lppl$ and $\lpwl$, then the problem
$\lpwg$ which adds the computation of handling ``arbitrarily high'' walk
lengths, then the problem $\lppg$ which asks for long simple paths. This latter
problem is the most complicated one to cover, as we must both perform
probability computations and ensure that the witnessing paths are simple,
generalising the technique used on non-probabilistic graphs.
In all these results, we use the ``monotonicity'' of the $\leq \ell$ and $\geq
\ell$ requirements: they intuitively make it possible to concisely summarise
realisations in a bottom-up dynamic programming algorithm on the tree
decomposition, by considering the best possible choices of paths that can be
achieved below each bag and using this as a way to group realisations together.
The W[1]-hardness proof is by a reduction from the problem of counting the number of linear extensions of a partially ordered set, like in \cite{ando2026complexity}.

The natural question is whether we can similarly design tractable algorithms for $\lppe$
and $\lpwe$ on bounded-treewidth probabilistic graphs. Surprisingly, we show in
\cref{sec:untract} that the answer is negative (under standard complexity
assumptions). Namely, we show that the two problems $\lppe$ and $\lpwe$ are \#P-hard already on
pathwidth-2 probabilistic graphs. 
Note that, because we are talking about the probability of existence of paths of
a certain length, 
the probability of 
existence of a path of length exactly $\ell$ cannot be obtained simply by
subtracting, e.g., the probability of existence of paths of length $\leq \ell-1$  from the
probability of existence of paths of length $\leq \ell$ (as the corresponding
outcomes are not mutually exclusive).
The situation is summarised in 
\cref{tab:proba}.

We then study in \cref{sec:variants} the variants of our problems on \emph{weighted graphs}, i.e., the edges of the graph carry
weights (written in binary) in addition to their probabilities, and the length
requirements now pertain to the total weights of paths. The use of weights can
be relevant, e.g., to model lengths or durations in routing applications. We
show that adding weights can have a significant impact on the complexity of our
problems: it makes $\lpwe$ and $\lppe$ NP-hard already on pathwidth-2
non-probabilistic graphs, and in the presence of both probabilities and weights
all our problem variants become \#P-hard on pathwidth-2
graphs. We also 
show \#P-hardness on pathwidth-1 graphs for $\lpwe$ specifically when we have
weights and probabilities, remembering that the problem $\lpwe$ was tractable on
treewidth-1 unweighted probabilistic graphs.

We conclude in \cref{sec:conclusion}. Most detailed proofs are deferred to the
appendix.

\input{tables-new}

%% file: tables-new.tex
\begin{table}[t]
\centering
\begin{subtable}[b]{0.43\linewidth}
\centering
\label{table:results:nonproba}
\begin{tabular}{l@{~}ccc}
\toprule
$\downarrow\! X$ $\bowtie \rightarrow$ & $\leq \ell$ & $= \ell$ & $\geq \ell$\\
\midrule
$\Walk$ & PTIME & PTIME & PTIME\\
\midrule
\shortstack{$\Path$\\~} & \shortstack{PTIME\\~} & \shortstack{NP-c.\\FPT} &
\shortstack{NP-c.\\FPT}\\
\bottomrule
\end{tabular}
\caption{On non-probabilistic graphs}
\label{tab:nonproba}
\end{subtable}%
\hfill
\begin{subtable}[b]{0.57\linewidth}
\centering
\label{table:results:proba}
\begin{tabular}{l@{}c@{~}c@{~}c}
\toprule
$\downarrow\! X$ $\bowtie \rightarrow$ & $\leq \ell$ & $= \ell$ & $\geq \ell$\\
\midrule
\shortstack{$\Walk$\\~\\~\\~} & \shortstack{\#P-h.\\XP (tw)\\W[1]-h.\ (pw)} & \shortstack{\#P-h. (pw
2)\\FP (tw 1)\\~} & \shortstack{\#P-h.\\XP\\W[1]-h.}\\
\midrule
\shortstack{$\Path$\\~\\~\\~} & \shortstack{\#P-h.\\XP (tw)\\W[1]-h.\ (pw)} & \shortstack{\#P-h. (pw
2)\\FP (tw 1)\\~} & \shortstack{\#P-h.\\XP (tw)\\W[1]-h.\ (pw)}\\
\bottomrule
\end{tabular}
\caption{On probabilistic graphs}
\label{tab:proba}
\end{subtable}
  \caption{Summary of the complexity of the problems $\genprob$}
\end{table}

%% file: preliminaries.tex
\subparagraph*{Graphs, walks, (simple) paths.} A (directed) graph
$G=(V,E)$ consists of a finite set $V$ of \emph{vertices} and of a
set $E\subseteq V\times V$ of \emph{edges}. For $E'\subseteq E$, we
write $G[E']$ the subgraph defined by $G[E'] = (V,E')$.
A \emph{walk} $w$ in $G$ is a non-empty sequence of vertices $w = v_1,\ldots,v_n$ of~$V$
such that $(v_i,v_{i+1}) \in E$ for $1 \leq i \leq n-1$, and its \emph{length}, denoted $|w|$, is $n-1$. 
A
\emph{simple path}, or simply \emph{path}, is a walk in which all
vertices are distinct. 
The \emph{endpoints} of a walk are the first and last vertices, and its
\emph{internal vertices} are all other vertices.
For $G$ a graph, we sometimes write $V(G)$ for its set of vertices and $E(G)$ for its set of edges.

\subparagraph*{Path length problems over non-probabilistic graphs.}
The problems studied in this paper deal with the existence of walks or paths from one vertex to another in 
graphs, where the walks/paths satisfy a length constraint.
We focus on the following constraints:
the walk/path must have length \emph{at most}, \emph{exactly}, or \emph{at
least}~$\ell$, where $\ell$ is an integer given as input in binary. This
gives rise to $6$ computational problems, which we denote
$\probname^{X}_{\bowtie}$ for ${\bowtie} \in \{{\leq}, {=},
{\geq}\}$ and $X \in \{\Walk, \Path\}$, defined as
follows (“$\probname$” stands for \emph{Path Length Existence Problem}):

\begin{center}
\fbox{\begin{tabular}{lp{10cm}}
    \small{PROBLEM} : & $\probname^{X}_{\bowtie}$ for ${\bowtie} \in \{{\leq}, {=}, {\geq}\}$ and for $X \in \{\Walk, \Path\}$ %
\\
    {\small INPUT} : & A graph $G = (V,E)$, two vertices $s,t\in V$, and an integer $\ell$ (given in binary)
\\
{\small OUTPUT} : & Decide whether a path/walk of length $\bowtie \ell$ exists from $s$ to $t$
\end{tabular}}
\end{center}

For instance, the problem $\lppl$ asks 
whether there is a path from $s$ to $t$ of length at
most $\ell$. Equivalently, it asks whether the
length of a shortest path from~$s$ to~$t$ is at
most $\ell$.
Further note that $\lppl $ and
$\lpwl$ are identical, because shortest
walks are always paths; but the same is not true of
$\lppg$ and $\lpwg$, or
of $\lppe $ and $\lpwe $.
Before moving on to the probabilistic setting, we recall known results on the complexity of these (rather standard) problems
in the \emph{non-probabilistic setting}
(see \cref{tab:nonproba}).
Problem $\prob{\Walk}{\leq} = \prob{\Path}{\leq}$ simply amounts
to computing the length of a shortest path from~$s$ to~$t$,
which is PTIME. Problems $\prob{\Path}{=}$ and $\prob{\Path}{\geq}$
are (obviously in NP and) NP-hard by reduction from Hamiltonian
path with prescribed endpoints (even when the target length is given in unary).
The problem
$\prob{\Walk}{=}$ is PTIME, using fast
exponentiation of the graph's adjacency matrix. The problem
$\prob{\Walk}{\geq}$ is PTIME too, by creating $\ell$ copies of
the graph and performing a reachability test (assuming
without loss of generality that $\ell$ is at most $|V|$).

\subparagraph*{Probabilistic graphs.}  A \emph{probabilistic graph}
is a triple $\G = (V,E,\pi)$, where $G=(V,E)$ is a graph, called the \emph{underlying non-probabilistic graph}, and
$\pi\colon E \to [0,1]$ assigns a probability value to every edge.
A \emph{realisation} of $\mathcal{G}$ is a graph
of the form $G[E']$ for $E'\subseteq E$. The probabilistic graph
$\calG$ defines a probability distribution on the set of its
realisations, by keeping or discarding each edge $e\in E$ with
probability $\pi(e)$, independently across edges. Formally, for a
given realisation $G[E']$ of~$\calG$, its probability, denoted
$\Pr_\G(E')$, is the quantity $\Pr_\G(E') = \prod_{e \in E'} \pi(e)
\cdot \prod_{e \in E \setminus E'} (1 - \pi(e))$.
Note that realisations may have probability~0.

\subparagraph*{Problem statement.}
We extend the $\probname$ problems to probabilistic graphs in the
natural way by asking for
the probability that the input graph 
satisfies the required conditions.
More precisely, for 
${\bowtie} \in \{{\leq}, {=}, {\geq}\}$ and for $X \in \{\Walk, \Path\}$, the
$\probname^{X}_{\bowtie}$ problem is 
as follows. We receive as input a
probabilistic graph $\G = (V, E, \pi)$, vertices $s$ and $t$, and an
integer $\ell\geq 0$ written in binary. 
Our task is to compute the sum over all $E' \subseteq E$
of the probabilities $\Pr_\G(E')$ of the
realisations of $\G$ that satisfy $\probname^{X}_{\bowtie}$ for $s$, $t$, and
$\ell$.
Notice that the non-probabilistic setting is a special case of the probabilistic setting in which all edges have probability $1$.

\subparagraph*{Computational model.}
Recall that \#P is the class of counting problems that can be expressed as counting the number of accepting paths of a nondeterministic polynomial-time Turing machine. 
Following Valiant~\cite{valiant1979complexity,V79}, we 
define \#P-hardness using Turing reductions (i.e., with oracles).
In all our problems, unless stated otherwise, probability values are given as rational numbers $\frac{p}{q}$ with
$(p,q) \in \mathbb{N} \times (\mathbb{N}\setminus \{0\})$, where~$p$ and~$q$ are encoded in binary. 
All probability computation problems that we study are in $\text{FP}^{\#P}$, i.e., they can be solved in polynomial time with access to a \#P oracle. 

%% file: treewidth.tex
We have seen that already in the non-probabilistic setting, the $\probname$ problems may be  
intractable: this is specifically the case of 
$\lppe$ and $\lppg$.
One common way to regain tractability of problems on graphs
is to bound their \emph{treewidth}.
In this section, we recall the definition of treewidth, defining
tree decompositions and mentioning the \emph{nice tree
decompositions} that we will use in our proofs. We then review the
folklore result that all $\probname$ problems are tractable on
non-probabilistic graphs when the input is assumed to have bounded
treewidth. This only needs to be explained for 
$\prob{\Path}{=}$ and $\prob{\Path}{\geq}$ since the other problems are
tractable on arbitrary graphs. Namely, we will review the following
result:

\begin{theorem}[Folklore]
\label{thm:nonprobatw}
    For non-probabilistic graphs, the problems $\lppe$ and $\lppg$ are FPT parameterised by treewidth.
\end{theorem}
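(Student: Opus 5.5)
Given a directed graph $G=(V,E)$ with vertices $s,t$, an integer $\ell$ in binary, and a tree decomposition of width $k$, we will decide $\lppe$ and $\lppg$ in time $f(k)\cdot\mathrm{poly}(|V|,\log \ell)$. We first reduce the value of $\ell$. A simple path has length at most $|V|-1$, so if $\ell \geq |V|$ both problems answer ``no''. Otherwise $\ell < |V|$, and we may treat $\ell$ as a number bounded by $|V|$. Next, we compute a tree decomposition of width $\calO(k)$ in FPT time (Bodlaender's algorithm, or a constant-factor approximation), and convert it into a nice tree decomposition with \LeafNode, \IntroVertNode, \IntroEdgeNode, \ForgetVertNode{} and \JoinNode{} nodes. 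We also add $s$ and $t$ to every bag, which raises the width by at most $2$.

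The algorithm is a standard bottom-up dynamic programme over partial path systems. For a node $b$ with bag $B_b$, let $G_b$ be the graph formed by the vertices and edges introduced in the subtree of $b$. A \emph{state} at $b$ consists of three parts:
\begin{itemize}
\item a degree function $d\colon B_b\to\{0,1,2\}$ recording each bag vertex's in-/out-degree pattern, with the constraints that $s$ has in-degree $0$ and out-degree at most $1$, $t$ has out-degree $0$ and in-degree at most $1$, and every other vertex has in- and out-degree at most $1$;
\item a matching on the bag vertices that are endpoints of partial paths, recording which pairs of bag vertices are joined by the same directed path fragment inside $G_b$, together with its orientation;
\item a length value $L\in\{0,\ldots,|V|\}$.
\end{itemize}
The table at $b$ stores whether some set $F\subseteq E(G_b)$ realises the state. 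For $F$ to realise the state, it must be a disjoint union of directed simple paths with total length $L$, it must be consistent with $d$ and the matching, every forgotten vertex must have either degree $0$ or in-/out-degree $1/1$, and $F$ must contain no cycle. For $\lppg$ we may cap $L$ at $\ell$, since only ``at least $\ell$'' matters.

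The number of states is $2^{\calO(k\log k)}\cdot|V|$. The transitions are the usual ones:
\begin{itemize}
\item \IntroEdgeNode: we optionally add the edge. This increments $L$, updates the degrees, and merges two fragments in the matching. We reject the addition if it would close a cycle, which is detected as the edge joining the two ends of the same fragment.
\item \ForgetVertNode: we keep only those states in which the forgotten vertex is internal to a fragment (degree $1/1$) or isolated.
\item \JoinNode: we combine pairs of states whose degree functions add up to admissible degrees. We compose the two matchings, reject the pair if the composition creates a cycle, and add the lengths. This step costs $\calO(|V|^2)$ per pair of profiles.
\end{itemize}
At the root, which we may take to have bag $\{s,t\}$, we accept if some state has a single fragment from $s$ to $t$, with $L=\ell$ for $\lppe$ or $L\geq \ell$ for $\lppg$. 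Correctness follows by induction over the decomposition: the states realisable at $b$ are exactly the profiles of partial solutions restricted to $G_b$. Every path in $G$ from $s$ to $t$ restricts to such a path system in each $G_b$, and conversely the consistency checks guarantee that the object accepted at the root is a single simple path.

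The main obstacle is the correctness of the \JoinNode{} step and the exclusion of cycles. In particular, two fragments that are each acyclic on their own can close a cycle through bag vertices, and since directed fragments must be composed consistently with their orientations, the matching must record orientation. Handling this carefully gives the $2^{\calO(k\log k)}$ bound; rank-based or cut-and-count techniques would improve it to $c^{\calO(k)}$, but this is not needed for the FPT claim. Since $\ell<|V|$, the running time is $2^{\calO(k\log k)}\cdot\mathrm{poly}(|V|)$, which is FPT.
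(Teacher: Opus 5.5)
Your proof is correct, but its dynamic programme uses a different state abstraction from the paper's. You use the classical degree-plus-matching state: each bag vertex is marked unused, endpoint or internal, the fragment endpoints carry an oriented pairing, and there is a length counter. The paper instead records, for an $S_b$-vertex-disjoint path system, its path set projection. This is, for each path, the ordered tuple of \emph{all} bag vertices the path visits, internal ones included. For each projection in $\psp_{G_b}(S_b)$ the paper stores the set of achievable total lengths.

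Your representation is more economical. Because it forgets the order in which internal bag vertices are traversed, gluing at \JoinNode{} and \IntroEdgeNode{} nodes needs an explicit check that composing the two matchings closes no cycle. With projections the gluing is dictated by the tuples themselves: a parent tuple is cut at the bag vertices where the path switches sides, and the consecutive pieces are shared out between the children. So no cycle test is needed. This tuple-based form is also the one the paper later lifts to probabilistic graphs, via the functions $\ldpl_{G,A}$ indexed by $\cpsp_A$.

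Both state spaces have size $2^{\calO(k\log k)}$ apart from the length component, so both give FPT. You also handle $\lppg$ directly by capping the length counter at $\ell$. The paper instead reduces $\lppg$ to $\lppe$, using that simple paths have length less than $|V|$; either works.

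One small wording point: a degree function into $\{0,1,2\}$ does not by itself separate in-degree from out-degree. Nothing is lost, though. For degree-$1$ vertices this is recovered from the orientation of the matching, and degree $2$ forces in/out-degree $1/1$ under your constraints.
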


Once this is done, our goal in the rest of the paper will be to study whether
this tractability result extends to the probabilistic setting. 

\myparagraph{Treewidth and nice tree decompositions}
A \textit{tree decomposition} of a graph \( G = (V,E)\) is a pair $
\mathcal{S} = (S, \{S_b\}_{b \in V(S)})$
where \( S \) is a rooted, unordered tree
and where every node $b$ 
is assigned a vertex subset $S_b \subseteq V(G)$, called a \emph{bag}, such that the following three conditions hold:
(i) \( \bigcup_{b \in V(S)} S_b = V(G) \), in other words, every vertex of \( G \) is in at least one bag;
(ii) For every $ (u,v) \in E$, there exists a node~$b$ of~$S$ such that the bag $S_b$ contains both $u$ and $v$;
(iii) For every $ u \in V$, the set $ T_u = \{ b \in V(S) : u \in S_b \} $ induces a connected subtree of~$S$.
The \textit{width} of \( \mathcal{S} = (S, \{S_b\}_{b \in V(S)}) \) is \( \max_{b \in V(S)} |S_b| - 1 \), that is, the maximum size of its bags minus 1. The \textit{treewidth} of a graph \( G \), denoted by \( \treewidth(G) \), is the minimum possible width of a tree decomposition of \( G \).
Likewise, we define the treewidth of  a probabilistic graph $\G$ to mean the treewidth of its underlying non-probabilistic graph.

We will also use \emph{path decompositions} and \emph{pathwidth} in some
hardness results. A \emph{path decomposition} is a tree decomposition whose tree
is a (rooted) path, and the \emph{pathwidth} of a graph~$G$ (resp., a
probabilistic graph $\G$) is the minimum
possible width of a path decomposition of~$G$ (resp., of the underlying
non-probabilistic graph of~$\G$). We will only use pathwidth to show hardness
results (when parameterising by it, or on pathwidth-1 or pathwidth-2 graphs),
which allows for crisper results than the corresponding results for treewidth
given that pathwidth is always no smaller than treewidth.

By contrast, all our tractability proofs will use treewidth, and for them it will be convenient 
to use so-called
\emph{nice tree
decompositions}~\cite{PA}, which
satisfy some additional properties. 
Specifically, every edge of $E$ is \emph{introduced} in
exactly one bag of the decomposition. Further, beyond introduction nodes
(which introduce a vertex or an edge), forget nodes (that remove a vertex), and leaf nodes,
other nodes $b$ 
have exactly two children, the bags
of~$b$ and its children are identical, and $b$ does not introduce any edge (we call such a node a \emph{join node}). We
define these more formally in Appendix~\ref{apx:nice} (but the precise
definition should not be needed to understand the main text of our article).

On such tree decompositions, we will use some additional notation.
Given a graph $G=(V,E)$ and a nice tree decomposition $\calS = (S,
\{S_b\}_{b \in V(S)})$ of~$G$, for $b$ a node in the tree~$S$, we
define $V_b$ as the set of vertices included in any of the bags
contained in the subtree rooted at $b$. We further define $G_b$ as
the graph whose vertices are $V_b$ and whose edges are exactly
those that are introduced in some bag of the subtree rooted at $b$. 
In particular, if $b$ is a join node with children $b_1,b_2$, then
$G_b$ is the disjoint union (in terms of edges) of $G_{b_1}$ and $G_{b_2}$.

\begin{toappendix}
  \label{apx:nice}
  In this section we describe what are nice tree decompositions.
  Intuitively, these are tree decompositions in which 
  vertices and edges are introduced one-by-one, to simplify the design of
  dynamic programming algorithms. 
  A \textit{nice tree decomposition} of width $k$ of a graph
  $G=(V,E)$ is a tree decomposition $\mathcal{S} = (S, \{S_b\}_{b
  \in V(S)})$ of width $k$ of $G$ where each node $b \in S$ is of
  one of the following five types:

\begin{itemize}

    \item \LeafNode{} node: $b$ has no children and $S_b=\emptyset$.
    
    \item \IntroVertNode{} node: $b$ has exactly one child $b'$ and $S_b = S_{b'} \cup \{v\}$ for some $v \in V$.
    
    \item \JoinNode{} node: $b$ has exactly two children ${b_1}$ and ${b_2}$ and $S_b = S_{b_1} = S_{b_2}$.

    \item \ForgetVertNode{} node: $b$ has exactly one child $b'$ and $S_b = S_{b'} \setminus \{v\}$ for some $v \in V$.
    
    \item \IntroEdgeNode{} node: $b$ has exactly one child $b'$ such that $S_b = S_{b'}$, and $b$ has two distinguished vertices $u,v \in S_b$ such that $(u,v) \in E$, demarcating the addition of an edge. We say that the edge $(u,v) \in E$ is \emph{introduced at node~$b$}. We additionally require that for every edge $(u,v)\in E$, the edge is introduced at exactly one bag.
\end{itemize}

Given a graph $G=(V,E)$ and a nice tree decomposition $\calS = (S, \{S_b\}_{b \in V(S)})$ of $G$, for any node $b$
in the tree~$S$, we define $V_b$ as the set of vertices included in any of the bags contained in the subtree rooted at $b$. We further define $G_b$ as the graph whose vertices are $V_b$ and whose edges are exactly those that are introduced in some bag of the subtree rooted at $b$. Formally, $G_b = (V_b, E_b) \text{ where } E_b = \{ e \mid e \text{ is introduced in the subtree rooted at } b \}$.
In particular, for $b_{\text{root}}$ the root bag,
we have $V_{b_{\text{root}}} = V$ and $G_{b_{\text{root}}} = G$.
\end{toappendix}

It follows from standard techniques \cite{bodlaender1988dp,PA} that we can
compute nice tree decompositions in FPT-linear time and further ensure that the
root bag precisely contains the source and target vertices.
Formally, we will always use the following implicitly in our proofs:

\begin{toappendix}
  We will use the following result, from~\cite{bodlaender1988dp}:
\end{toappendix}

\begin{lemmarep}[\cite{bodlaender1988dp}]
  \label{lem:computeNiceTreeDec}
  There is an algorithm and a polynomial $p$ which, given as input a graph $G = (V, E)$ with two distinguished vertices $s, t \in V$
  and an integer $k$, determines if $\treewidth(G) \leq k$, and if yes, computes a nice tree decomposition of $G$ of width at most  $\treewidth(G)+1$
  whose root bag is $\{s, t\}$,
  in time $\mathcal{O}(2^{p(k)}|V|)$.
\end{lemmarep}    

\begin{toappendix}
  Its proof intuitively follows from combining two lemmas: the fact
  from~\cite{bodlaender1988dp} that tree decompositions can be computed in
  FPT-linear time, and the fact from~\cite{PA} that tree decompositions can be
  made nice:

\begin{lemma}[\cite{bodlaender1988dp}]
\label{lem:treewidth}
  There is an algorithm and a polynomial $p$ such that, given as input a graph $G = (V, E)$ and integer $k$, determines if $\treewidth(G) \leq k$, and if yes, computes a tree decomposition of $G$ of width $\treewidth(G)$ in time $\mathcal{O}(2^{p(k)}|V|)$.
\end{lemma}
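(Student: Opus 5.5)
This is Bodlaender's linear-time treewidth algorithm, so in the paper the honest proof is a citation: apply the algorithm of~\cite{bodlaender1988dp} to the underlying undirected graph of $G$. Orientation plays no role in the definition of tree decompositions given above, so the treewidth and the decompositions of $G$ are exactly those of its undirected version. If I had to reconstruct the argument, I would follow Bodlaender's self-reduction scheme, aiming for running time $T(n) \le T((1-c_k)n) + f(k)\, n$ with $c_k > 0$ depending only on $k$. This recurrence solves to $\mathcal{O}(f(k)\, n / c_k)$, which is of the form $\mathcal{O}(2^{p(k)}|V|)$ for suitable choices of $f$ and $c_k$.

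The key steps, in order, are as follows.

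\textbf{Edge count.} First check that $|E| \le k|V|$, since a graph of treewidth at most $k$ has this many edges at most. If the bound fails, answer no.

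\textbf{Friendly vertices.} Next, do a combinatorial case analysis showing that a graph of treewidth at most $k$ either has a maximal matching of size $\Omega(n/k^{\mathcal{O}(1)})$, or has $\Omega(n/k^{\mathcal{O}(1)})$ ``I-simplicial'' vertices. These are low-degree vertices whose neighbourhood is contained in a set of vertices that is improved to a clique by adding edges between pairs with more than $k$ common neighbours.

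\textbf{Compression.} In the first case, contract the matching, which yields a graph of treewidth at most $k$ with a constant fraction fewer vertices, and recurse on it. Uncontracting the resulting decomposition gives one of width at most $2k+1$. In the second case, remove the I-simplicial vertices from the improved graph and recurse. Their neighbourhoods are cliques in the improved graph, so each lies inside some bag, and the removed vertices can be reattached as new leaf bags.

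\textbf{Width recovery.} Given a decomposition of width $\mathcal{O}(k)$, run the Bodlaender--Kloks dynamic programming algorithm over it. This algorithm works with ``characteristics'' of partial decompositions on the bags, maintained through typical sequences. It either decides that the treewidth exceeds $k$ or builds an optimal decomposition of width $\treewidth(G)$, in time $2^{\mathcal{O}(k^3)} n$. To get width exactly $\treewidth(G)$ rather than merely at most $k$, iterate this for $k' = 0, 1, \ldots, k$ and stop at the first success, which costs only a factor $k$.

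The main obstacle is this last step, the Bodlaender--Kloks procedure. Showing that the characteristic of a partial decomposition has size bounded by a function of $k$ alone, and that it composes correctly at introduce, forget and join nodes, is the technical heart of the argument. Everything else is a routine counting argument. For the purposes of this paper I would not reprove it, and would simply invoke~\cite{bodlaender1988dp} for the full statement.
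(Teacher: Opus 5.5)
Your proposal is correct and takes the same approach as the paper: the paper states this lemma purely as a citation to Bodlaender, with no proof of its own, and applying it to the underlying undirected graph is right because orientation and self-loops do not affect tree decompositions. Your outline of the algorithm is a fair sketch with one slip, namely that the Bodlaender--Kloks step only returns a decomposition of width at most $k$, not an optimal one; your iteration over $k' = 0, \ldots, k$ already fixes this at the cost of a factor $k+1$.
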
    
\begin{lemma}[\cite{PA}]
\label{lem:make-nice}
  If a graph $G = (V,E)$ admits a tree decomposition of width $k$, then it also admits a nice tree decomposition of width $k$. Moreover, given a tree decomposition $S$ of~$G$ of width $k$, one can in time $\mathcal{O}(k^2 \cdot \max(|S|, |V|))$ compute a nice tree decomposition of~$G$ of width $k$ that has at most $\mathcal{O}(k|V|)$ nodes in $\mathcal{O}(k^2 \cdot \max(|S|, |V|))$ time.
\end{lemma}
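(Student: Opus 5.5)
The statement to prove is \cref{lem:make-nice}: given a tree decomposition $S$ of $G$ of width $k$, we build a nice one of the same width. The plan is a sequence of local rewritings of the tree, each of which keeps every bag a subset of some original bag. Because of that invariant, the width never grows beyond $k$. We root $S$ at an arbitrary node.

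First, we bound the size of $S$. We repeatedly contract every edge $(b,b')$ with $S_{b'}\subseteq S_b$, merging $b'$ into $b$. A standard counting argument then gives at most $|V|$ nodes: orient the tree from the root, and note that each non-root node has a vertex not present in its parent, namely a vertex whose subtree $T_u$ has that node as its top; distinct nodes get distinct such vertices. This step costs $\calO(k\cdot|S|)$ if bags are sorted sets.

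Second, we normalise the tree shape.
\begin{itemize}
\item \emph{Branching.} A node $b$ with $d\geq 3$ children is replaced by a binary tree of $d-1$ copies of $S_b$. The copies are join nodes whose leaves receive the original children. Copies of $S_b$ are also inserted above each child so that every join node has children with identical bags.
\item \emph{Unary edges.} Every edge between a parent bag $P$ and a child bag $C$ is replaced by a path. Starting from $C$, this path forgets the vertices of $C\setminus P$ one at a time, then introduces the vertices of $P\setminus C$ one at a time. Each intermediate bag is a subset of $C$ or of $P$, so the width stays at most $k$.
\item \emph{Leaves.} Below each original leaf with bag $L$ we append a chain of $|L|$ introduce-vertex nodes ending in an empty leaf node.
\end{itemize}
Each original node and edge yields $\calO(k)$ new nodes. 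Together with the first step this gives $\calO(k|V|)$ nodes, each with a bag of size $\leq k+1$, built in $\calO(k^2\max(|S|,|V|))$ time.

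Third, we handle edges. For each edge $(u,v)\in E$, we pick one node $b$ whose bag contains both endpoints. Such a node exists by condition (ii), and it survives the rewriting because the original bags are kept as copies. Directly above $b$ we insert an introduce-edge node with the same bag $S_b$. Choosing a single node per edge guarantees that each edge is introduced exactly once. Locating such a node for every edge can be done by scanning the bags. Each bag has at most $\binom{k+1}{2}$ pairs, which can be charged against the edges, for $\calO(k^2)$ work per node. Inserting several edges at the same node simply stacks them into a chain.

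Finally, we verify that the result is a tree decomposition. Condition (i) is immediate. Condition (ii) holds because bags are preserved. Condition (iii) holds because the forget/introduce paths only extend or trim each $T_u$ along a connected path. We also check that every node matches one of the five types. The main obstacle is bookkeeping rather than an idea: the first step must actually yield $\calO(|V|)$ nodes so that the final count is $\calO(k|V|)$ and the running time stays $\calO(k^2\max(|S|,|V|))$. The second step also needs set differences of sorted bags in $\calO(k)$ per edge.
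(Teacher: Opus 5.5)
Your proposal is correct and is essentially the standard construction behind this lemma, which the paper does not prove itself but imports from \cite{PA}: contract subset edges to reach $\mathcal{O}(|V|)$ nodes, binarise branching nodes with copies of the bag, interpolate along tree edges with forget/introduce chains, pad leaves with introduce chains, and insert one introduce-edge node per edge above a bag containing both endpoints. Two bookkeeping points should be stated. First, the $|E|$ introduce-edge nodes added in your third step keep the node count at $\mathcal{O}(k|V|)$ only because a graph of treewidth $k$ has $|E|\leq(2k+1)|V|$ (counting both directions and self-loops). Second, scanning bag pairs needs constant-time adjacency tests, which you do not justify; instead you can attach each edge $(u,v)$ at the deeper of the top nodes of $T_u$ and $T_v$, which lies in both subtrees and is found in constant time per edge.
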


And we will also use the following lemma to ensure that the root bag is exactly
$\{s,t\}$, which is also convenient for our algorithms.

\begin{lemma}
\label{lem:nice-st}
  Consider a graph $G=(V,E)$, a nice tree decomposition \( \mathcal{S} = (S, \{S_b\}_{b \in V(S)}) \) of $G$ and two vertices $s,t \in V$. Then we can compute in linear time in $\calS$ another nice tree decomposition $\mathcal{S'}$ where the root bag is exactly $\{s,t\}$ and where the width of $\mathcal{S'}$ is at most one more than the width of~$\mathcal{S}$.
\end{lemma}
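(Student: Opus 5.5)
The plan is to add $s$ and $t$ to every bag, reshape the root so that its bag is exactly $\{s,t\}$, and then repair niceness locally. Let $\mathcal{S} = (S, \{S_b\})$ be the given nice tree decomposition, of width $k$.

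First, define $\mathcal{S}''$ with the same tree $S$ and bags $S''_b = S_b \cup \{s,t\}$. This is still a tree decomposition.
\begin{itemize}
\item Vertex coverage and edge coverage are preserved, since bags only grew.
\item For a vertex $u \notin \{s,t\}$, the set of nodes containing it is unchanged.
\item For $u\in\{s,t\}$, that set is now all of $S$, which is connected.
\end{itemize}
Bag sizes grow by at most two, so this naive step gives width $k+2$ rather than $k+1$.

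To obtain width $k+1$, I would add only one of the two vertices everywhere, say $s$. The vertex $t$ is already present in some connected subtree $T_t$; let $r$ be the root of $T_t$, i.e., its highest node. Then:
\begin{itemize}
\item add $s$ to all bags, so every bag has size at most $k+2$;
\item add $t$ only to the bags on the path from $r$ up to the root of $S$.
\end{itemize}
Connectivity of $T_t$ is preserved, because we extend it along the path to the root. The difficulty is that bags on this path now may contain both $s$ and $t$ on top of a full bag of size $k+1$. So I would first restructure: above $r$, $t$ has been forgotten at some forget node $f$ just above the top of $T_t$. I would delete that forget of $t$ and instead postpone it, reinterpreting the later bags as containing $t$. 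This again risks $k+2$ plus $s$.

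The cleaner route is to make the width bound come out automatically. Every bag of the original nice decomposition has size at most $k+1$. Above the topmost node containing $t$, bags do not contain $t$. I would therefore add $t$ to those bags and $s$ to bags above the topmost node containing $s$, and cap with a chain on top. The main obstacle is that a bag above both $r_s$ and $r_t$ could be full and need both added. To handle this, I would reroot the decomposition: the tree can be re-rooted at a node of $T_s$, namely a bag containing $s$. Then $T_s$ contains the root, so only $t$ needs propagating along the path from that root to $T_t$, adding at most one vertex per bag and giving width $k+1$. Rerooting destroys the parent/child roles of introduce and forget nodes, so afterwards I would re-apply \cref{lem:make-nice} to restore niceness; it keeps the width, runs in linear time for fixed $k$, and keeps edge introductions unique.

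Finally, I would make the root bag exactly $\{s,t\}$. Above the new root, which contains both $s$ and $t$, I would append a chain of forget nodes that removes every other vertex one at a time. If niceness was redone by \cref{lem:make-nice}, I would also check that it does not alter the root bag, or else apply the chain after it. The chain of forget nodes is trivially nice. All steps are linear in $|\mathcal{S}|$ for fixed $k$.
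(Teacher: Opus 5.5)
Your final construction has the right shape: pick a node $r$ whose bag contains $s$, root there, add $t$ only to the bags on the tree path from $r$ to $T_t$ (so each bag grows by at most one vertex and $T_t$ stays connected), restore niceness, and cap with forget nodes down to $\{s,t\}$. It differs from the paper's proof. The paper adds $t$ to \emph{every} bag, hangs a chain of forget nodes off some node containing $s$, declares the end of that chain the root, and only repairs the leaves (whose bags became $\{t\}$) by adding empty leaves below them. The paper does not say how niceness is restored along the rerooted path. Your explicit re-niceification addresses that, but it creates the one real gap in your argument.

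\Cref{lem:make-nice} is a black box that says nothing about which node ends up as the root or what its bag is. In the standard construction behind it (\cite{PA}), the tree is rooted and then forget nodes are added above the root until its bag is empty. So its output has root bag $\emptyset$, and neither of your two options works. You cannot simply ``check'' that the root bag is preserved. Appending forget nodes afterwards cannot produce $\{s,t\}$, because forget nodes only shrink bags. Re-introducing $s$ or $t$ above an empty root would disconnect $T_s$ or $T_t$. Putting the forget chain before the niceification does not help either, since the procedure would again empty the root.

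To close the gap, use the niceification procedure in a white-box way. Run it with $r$ as the root, and in its final phase forget the vertices of the bag of $r$ other than $s$ and $t$, stopping at $\{s,t\}$ rather than $\emptyset$. Equivalently, cut its top forget chain at the node carrying the bag of $r$ and attach your own chain there. Alternatively, skip the global re-niceification and repair niceness locally along the path from the old root to $r$: there introduce and forget roles swap, former join nodes acquire a child with a different bag, and the old root may become a leaf with a non-empty bag.
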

\begin{proof}
  We add $t$ to every bag of the tree decomposition. Now there must be at least some node $b$ with a bag $S_b$ containing both $s$ and $t$
  and potentially some other vertices. If $S_b =\{s,t\}$ then we make $b$ the root and we are done.
  Else, create another \ForgetVertNode{} node $b'$ %
  and join it to $b$. Thus, $S_b$ and $S_{b'}$ differ only by a single vertex $v \in S_b$ such that $v \neq s,t$.
  Repeat the process to form a linear chain of nodes until we arrive at a node
  with the corresponding bag containing only $s,t$ and set it to be the root.
  Note that due to the addition of $t$, the bags corresponding to the
  \LeafNode{} nodes may no longer be empty.
  Hence, to preserve the nice tree decomposition property, for every \LeafNode{} node $b_1$, 
  we have $S_{b_1} = \{t\}$ after the addition of $t$, 
  so we create a new node
  $b_2$ with the corresponding bag $S_{b_2} = \emptyset$ and attach it to $b_1$. Thus $b_2$ is now a \LeafNode{} node and $b_1$ becomes an $\IntroVertNode{}$ node.
By following the process mentioned above we obtain another tree decomposition
which is nice, contains only $s$ and $t$ at the root, and has width at most one
more than that of~$S$. It is easy to see that this procedure takes linear time.
\end{proof}

We can now prove Lemma~\ref{lem:computeNiceTreeDec}:

\begin{proof}[Proof of Lemma~\ref{lem:computeNiceTreeDec}]
Given a graph $G=(V,E)$ one may apply \cref{lem:treewidth} to verify if
$\treewidth(G) \leq k$. Then, if indeed $\treewidth(G) \leq k$, we compute a tree decomposition $\mathcal{S}_{1}$ of width $k$. This step takes $\mathcal{O}(2^{p(k)}|V|)$ time, and so $\mathcal{S}_1$ has size bounded by the same expression. Further using Lemma~\ref{lem:make-nice}, we may obtain from this tree decomposition a nice tree decomposition  $\mathcal{S}_{2}$ of width $k$ that has at most $\mathcal{O}(k|V|)$ nodes. Finally using \cref{lem:nice-st} we may compute, in linear time, another nice tree decomposition $\mathcal{S}$ where the root bag is exactly $\{s,t\}$ and where the width of $\mathcal{S}$ is at most one more than the width of~$\mathcal{S}_2$. Overall, up to changing the polynomial $p$ if necessary, this takes $\mathcal{O}(2^{p(k)}|V|)$ time, establishing the result.
\end{proof}
\end{toappendix}

\myparagraph{Sketching the proof of \cref{thm:nonprobatw}}
We now review the high-level ideas to prove \cref{thm:nonprobatw},
as some of our later results will generalise these techniques.
We first notice that the problem $\prob{\Path}{\geq}$ reduces to
$\prob{\Path}{=}$ in polynomial time (using the fact that a simple path cannot
be longer than $|V(G)|$). Hence, we only need to show that 
$\prob{\Path}{=}$ is FPT parameterised by treewidth.

To do this, we adapt the so-called naïve approach described
in~\cite[Section 2.2]{ziobro2019finding} that is used to find Hamiltonian paths.
While this seems to be folklore, we could not find a reference that
explains how to do this for simple paths with a length constraint and specified endpoints.
The closest we found is \cite[Theorem 3.20]{bodlaender2015deterministic}, but there the endpoints are not specified and the
techniques used are much more evolved (i.e., this is not the “naïve approach”).
At a high level, the idea is to solve
the problem by dynamic programming, processing the tree
decomposition bottom-up to determine, for each bag $b$, what are
the “partial solutions” of the problem in the subgraph $G_b$. We
need a few definitions to precisely explain what partial solutions are, which we
sketch here and will formally redefine and extend for algorithms on probabilistic graphs later.
In these definitions, the set of vertices $A$ will intuitively be a bag of the tree decomposition.

\begin{definition}
 For $G=(V,E)$ a graph and $\mathcal{P}$ a set of paths in $G$,
 we call $\mathcal{P}$ \emph{vertex-disjoint} if for any two
 distinct paths $P_1,P_2 \in \mathcal{P}$, no vertex of $P_1$
 occurs in $P_2$. The \emph{length} of $\mathcal{P}$, denoted $\length(\calP)$, is then the
 sum of lengths of the paths that $\calP$ contains. For $A
 \subseteq V$, a set of paths $\mathcal{P}$ is
 \emph{$A$-vertex-disjoint} if it is vertex-disjoint, and for each
 path~$P \in \mathcal{P}$, the endpoints of $P$ are vertices 
 in~$A$.
\end{definition}

\begin{definition}
 Let $G=(V,E)$ be a graph, $A\subseteq V$ a set of vertices of $G$,
 and $P$ a path in $G$ whose endpoints are vertices of $A$. We
 define the \emph{projection of~$P$ relative to~$G$ and~$A$},
 written $\proj_{G,A}(P)$,
 to be the tuple comprising those
 vertices of $P$ that are in $A$, in the order in which they occur in $P$.
  Note that the tuple has at least one element, as it contains at least the
  endpoints of the path (and there is only one endpoint in case $|P|=0$, i.e.,
  $P$ is a single-vertex path), but
  it may have more if some internal vertices of $P$ are in $A$.

 We extend this notion to define the \emph{projection} of an
 $A$-vertex-disjoint set of paths $\mathcal{P}$ of~$G$ as
 $\proj_{G,A}(\calP) = \{\proj_{G,A}(P) \mid P \in \calP\}$; we call
 such a set a \emph{path set projection}.
 Finally, we define the \emph{disjoint path set projection of $A$
 relative to $G$}, denoted by $\psp_G(A)$, to be the set of the path set projections of all
 $A$-vertex-disjoint sets of paths of~$G$, formally,
 $\psp_G(A) \coloneq \{\proj_{G,A}(\mathcal{P}) \mid \mathcal{P} \text{ is an
 $A$-vertex-disjoint set of paths of $G$}\}$. 
\end{definition}

The algorithm that we propose to solve $\lppe$ over non-probabilistic
bounded-treewidth graphs is then as follows. We process the tree
decomposition bottom-up and, for every node~$b$ of the
decomposition, we compute the following information: (1.) the set $\psp_{G_b}(S_b)$,
and (2.) for every
path set projection $\Pi \in \psp_{G_b}(S_b)$, the
set of natural numbers 
\[\{\length(\calP) \mid \calP \text{ is an
$S_b$-vertex-disjoint set of paths of $G_b$ such that
$\proj_{G_b,S_b}(\calP) = \Pi$}\}.\]

We briefly sketch how this information can be computed bottom-up.
If $b$ is a 
leaf node the computation is easy, and if $b$ is a 
join node we can
compute the information from the two children $b_1,b_2$ of~$b$.
This is because, by the properties of nice tree
decompositions, when $\calP$ is an $S_b$-vertex-disjoint set of paths
of $G_b$,
$\calP$ can be decomposed as a concatenation of some $S_{b_1}$-vertex-disjoint sets of
paths $\calP_1$ of $G_{b_1}$ and some $S_{b_2}$-vertex-disjoint sets of
paths $\calP_2$ of~$G_{b_2}$. 
Furthermore, since
$\psp_{G_b}(S_b)$ is a set of sets of tuples of vertices of $S_b$,
its size 
depends only on~$k$, so the total size of
the information that we compute for every node $b$ of the tree
decomposition is bounded\footnote{As the reader might have
understood, our focus in this paper is not to get the best $f(k)$ possible.} by $n \times f(k)$.
When we reach the root bag $b_{\text{root}}$,
with $S_{b_{\text{root}}} = \{s,t\}$ and $G_{b_{\text{root}}} = G$, we have the information we need to
decide if there exists a simple path of length $\ell$ in $G$ from
$s$ to $t$,
namely, consider the path set projection
$\calP = \{(s,t)\}$ (or $\{(s)\}$ if $s=t$) and see whether the length $\ell$ was achieved for $\calP$.
All in all, this gives us an algorithm with complexity $f(k) \times
n^{\calO(1)}$, i.e., FPT parameterised by treewidth, as promised.

%% file: proba.tex
We now move to the study of the $\probname$ problems on probabilistic graphs, presenting in this section their complexity in two extreme regimes. 

First, we show that the $\probname$ problems are all intractable on probabilistic graphs of unbounded treewidth:
this simply follows from the fact that source-to-target reachability (ST-CON) is
\#P-hard~\cite{V79}, and we recall a standard self-contained 
reduction from the \#PP2DNF problem for the reader's convenience and to
introduce the techniques used in our subsequent hardness results.  %
Second, we show that the $\probname$ problems are all tractable on probabilistic graphs of treewidth 1, i.e., directed graphs whose underlying undirected graph is a forest: this follows from a somewhat tedious dynamic programming computation.
We will then show in the next two sections 
that, unexpectedly, this tractability result generalises to higher treewidth
bounds only for some of the $\probname$ problems (\cref{sec:tract}), and that hardness can be shown for the others (\cref{sec:untract}).

\subparagraph*{Intractability on unbounded-treewidth probabilistic graphs.}
We first show that these problems are all intractable when we allow arbitrary graphs (of unbounded treewidth) as input, which justifies that we only study them on bounded-treewidth graphs afterwards:

\begin{theoremrep}
\label{thm:bla}
On probabilistic graphs, the problems $\probname^{X}_{\bowtie}$ for ${\bowtie} \in \{{\leq}, {=},
{\geq}\}$ and $X \in \{\Walk, \Path\}$ are all \#P-hard, even
when the target length is given in unary.
\end{theoremrep}

To show Theorem~\ref{thm:bla}, we reduce from the classical problem $\spptdnf$:

\begin{definition}
  \label{def:pp2dnf}
  Let $X = \{X_1,\ldots, X_{n_1}\}$ and $Y =
\{Y_1,\ldots,Y_{n_2}\}$ be two disjoint sets of variables.
  A \emph{positive, partitioned, 2DNF (PP2DNF) over $X,Y$} is a Boolean formula of the form $\varphi
= \bigvee_{r=1}^m X_{i_r} \land Y_{j_r}$, where $i_r\in
\{1,\ldots,n_1\}$ and $j_r \in \{1,\ldots,n_2\}$ for $1\leq r \leq
m$. 
An \emph{assignment} for $\varphi$ is a function $\nu \colon X \cup Y \to
\{0,1\}$; we call $\nu$ \emph{satisfying} if there is some $1 \leq r \leq m$ for
which we have $\nu(X_{i_r}) = 1$ and $\nu(Y_{j_r}) = 1$.
The problem $\spptdnf$ takes as input such a $\pptdnf$ $\varphi$
and outputs $\#\varphi$, the number of satisfying assignments of~$\varphi$.
\end{definition}

The \#PP2DNF problem is known to be \#P-hard~\cite{provan1983complexity}. We can now show \cref{thm:bla}:

\begin{proof}[Proof sketch of \cref{thm:bla}] 
  We use a standard construction, used for instance to show the hardness of
  some conjunctive queries on probabilistic
  databases~\cite[Theorem~3.2]{suciu2011probabilistic}.
  We code the PP2DNF formula as a
  bipartite graph, with variables $X$ and $Y$ as nodes respectively in the first and second partition, and connecting an
  $X$-variable to a $Y$-variable by an edge of probability $1$ if
  they occur together in a term. We then add a source vertex $s$
  connected to all variables $X$ by edges of probability
  $\frac{1}{2}$, and likewise a target vertex $t$ probabilistically
  connected to all variables $Y$. The probability that there exists
  a walk or path from $s$ to $t$ having length at most, at least, or exactly $3$ is
  then easily seen to be equal to
  $\frac{\#\varphi}{2^{|\mathrm{Vars}(\varphi)|}}$.
\end{proof}

\begin{toappendix}
\begin{proof}
  We first prove this result for the problem $\lppe$.
Let $\varphi$ be a PP2DNF formula as in Definition~\ref{def:pp2dnf}.
Define for
convenience $n = n_1 + n_2$. 
Define $\G_\varphi$ to be the probabilistic graph having as set of
vertices $X \cup Y \cup \{s,t\}$ (with~$s$ and~$t$ being fresh
nodes) and edges $\{(s,X_i) \mid 1\leq i \leq n_1\} \cup \{(Y_i,t)
\mid 1\leq i \leq n_2\} \cup \{(X_{i_r}, Y_{j_r}) \mid 1 \leq r
\leq m\}$, with all edges of the form $(s,X_i)$ and $(Y_i,t)$
having probability $\nicefrac{1}{2}$ and all other edges having
probability $1$. 
 For instance, letting $\varphi = (X_1 \land Y_1)\lor (X_1 \land Y_2) \lor (X_2
  \land Y_2)$, we show $\G_{\varphi}$ in~\cref{fig:exmpl}.
  
Now, let~$q$ be the probability that there exists a
  path from $s$ to $t$ in~$\G_\varphi$; notice that we have
  $q = \frac{\#\varphi}{2^n}$, so that it is enough to argue that we can
  compute $q$.

\begin{figure}
\centering
  \begin{tikzpicture}
\begin{scope}[every node/.style={circle,thick,draw}]
    \node (s) at (0,0) {$s$};
    \node (X1) at (2,1) {$X_1$};
    \node (X2) at (2,-1) {$X_2$};
    \node (Y1) at (4,1) {$Y_1$};
    \node (Y2) at (4,-1) {$Y_2$};
    \node (u) at (6,0) {$t$};
\end{scope}

\begin{scope}[>={Stealth[black]},
              every node/.style={fill=white,circle},
              every edge/.style={draw,thick}]
              \path [->] (s) edge node {$\frac{1}{2}$} (X1);
              \path [->] (s) edge node {$\frac{1}{2}$} (X2);
              \path [->] (X1) edge node {$1$} (Y1);
              \path [->] (X1) edge node {$1$} (Y2);
              \path [->] (X2) edge node {$1$} (Y2);
              \path [->] (Y1) edge node {$\frac{1}{2}$} (u);
              \path [->] (Y2) edge node {$\frac{1}{2}$} (u);
\end{scope}
\end{tikzpicture}
\caption{The probabilistic graph $\calG_{\varphi}$ from the proof of \cref{thm:bla}.
Probabilities are indicated in the middle of edges.}
  \label{fig:exmpl}
\end{figure}

 Let us call our oracle to $\lppe $ to determine if there exists a path of length $3$ from $s$ to $t$. Since all paths from $s$ to $t$ have length $3$, if the output of the oracle is $q$, we may return $q \cdot 2^{n}$ as the number of satisfying solutions to the $\spptdnf$ formula.
 This concludes the proof in the case of $\lppe$.

Now, observe that since all paths from $s$ to $t$ have length $3$ for the graph
constructed in our reduction, the probability that a path/walk of length
exactly $3$ exists is equivalent to the probability that a path/walk of length
at least/at most $3$ exists, from $s$ to $t$. Hence the same reduction shows
that $\lppe, \lppl,\lppg,\lpwe,\lpwl,\lpwg $ are all \#P-hard.
\end{proof}
\end{toappendix}

\subparagraph*{Tractability on treewidth-1 probabilistic graphs.}
As a first step towards studying bounded-treewidth probabilistic graphs, we now
show that all six $\probname$ problems are tractable on treewidth-1
probabilistic graphs (i.e., probabilistic forests). We will be able to generalise this result to higher constant treewidth bounds in \cref{sec:tract} for some problems only; by contrast we will show in \cref{sec:untract} that for the other problems they do not generalise:

\begin{theoremrep}
  \label{thm:forest}
For probabilistic graphs of treewidth $1$, the problems
$\probname^{X}_{\bowtie}$ for ${\bowtie} \in \{{\leq}, {=}, {\geq}\}$ and $X
\in \{\Walk, \Path\}$ can all be solved in linear time, up to the cost of
arithmetic operations.
\end{theoremrep}
\begin{proofsketch}
  As $\calG$ has treewidth one, there is at most one directed path
  from $s$ to $t$. If no such path exists then the answer is zero,
  otherwise call the path $P$. The problems $\lppe$, $\lppl$
  (equivalently $\lpwl$), and $\lppg$ are then easy to solve, by
  first determining the length~$m$ of~$P$, then comparing $m$ to
  the target input length $\ell$, and returning, depending on this
  comparison, either~$0$, or the probability that $P$ exists (i.e.,
  the product of the probabilities of its edges). For
  $\lpwe$, if $m>\ell$ we can return~$0$ and if $m = \ell$ we can
  again return the probability that $P$ exists. 
  Otherwise, we need that all edges of~$P$ exist, but we must also take into account 
  the possibility to have, in a realisation of $\calG$, a
  path from $s$ to $t$ which is not simple and achieves the requisite length. If $m$ and $\ell$ do not have the same parity, this can only be achieved thanks to a self-loop, either on a vertex of~$P$ or on some other vertex reached by a back-and-forth excursion. Whenever this is possible, then we can get arbitrarily long walks, but we must consider whether the excursion is doable without exceeding the target length. If $m$ and $\ell$ have the same parity, it is additionally possible to lengthen the walk by multiples of two by doing a back-and-forth either on an edge dangling from the path~$P$ or using some return edge for an edge of~$P$. These probabilities can all be computed in linear time by dynamic programming if we define the events carefully enough to ensure that they are independent.
The case of $\lpwg$ is similar.
\end{proofsketch}

\begin{toappendix}

\begin{algorithm}
\SetAlgoLined
\KwIn{A graph $\G=(V,E, \pi \colon E \to [0,1])$, vertices $s,t$, and an integer $\ell$ representing the desired walk length }
\KwOut{Probability $p$ that a walk of length $\ell$ exists from $s$ to $t$ in $\G$}

Let $P = (n_0 = s, n_1, \dots, n_k = t)$ be the unique path from $s$ to $t$ (if no such path exists, return $0$ as the probability)\;

\If{$\ell < k$}{
    \Return{$0$}\;
}
\If{$\ell = k$}{
    $p \gets$ $\Bigl(\prod_{i = 0}^{k-1}  \pi(n_i,n_{i+1})\Bigr)$\;
    \Return{$p$}\;
}

    \If{$(\ell - k) \bmod 2 = 0$}{
    $Q = \{u_1,u_2,u_3 \dots u_j\} \gets \text{set of all vertices} \in P \text{
    which have a self-loop in } \mathcal{G}$\;
    \tcp{$p_Q$ is the probability that there no vertex of~$P$ has a self-loop}
    $p_Q \gets \prod_{u \in Q}\Bigl(1-\pi(u,u)\Bigr)$\;
    $R = \{(v_1,w_1),(v_2,w_2) \dots (v_i,w_i)\} \gets \text{set of all
    vertex pairs}\; 
            (v,w) \text{ with } v \in P \text{ and } w \notin P \text{ and
            the edges } (v,w) \text{ and } (w,v) \text{ exist in } \mathcal{G}$\;
    \tcp{$p_R$ is the probability that there is no back-and-forth edge pair
    outside of~$P$ incident to a vertex of~$P$}
    $p_R \gets \prod_{(v,w) \in R}\Bigl(1-\pi(v,w)\pi(w,v)\Bigr)$\;
    $S = \{(v_1,w_1),(v_2,w_2) \dots (v_j,w_j)\} \gets \text{set of all
    vertex pairs}\; 
            (v,w) \text{ with } v,w \in P  \text{ and the edge } (v,w)
            \text{ occurs in } P \text{ and the back edge } (w,v) \text{ exist in }
            \mathcal{G}$\;
    \tcp{$p_S$ is the probability that there is no return edge for any forward
    edge of~$P$}
    $p_S \gets \prod_{(v,w) \in S}\Bigl(1-\pi(w,v)\Bigr)$\;
    \tcp{return probability that P exists and we have one self-loop on a
      vertex of P or back-and-forth on P or on an edge incident to P}
    \Return{$\Bigl(\prod_{i = 0}^{k-1}  \pi(n_i,n_{i+1}) \Bigr)
    \Bigl(1-p_{Q}p_{R}p_{S}\Bigr)$}\; 
    }\Else{
        \tcp{critical length to reach a self-loop and change parity}
    $\mathsf{cl} \gets \frac{\ell - k - 1}{2}$\;

    \ForEach{$v \in P$}{
        \tcp{explore the tree rooted at~$v$ without the edges of~$P$}
        $E_P \gets \text{edges of~$P$}$\;
        $\mathcal{T}_{v} \gets \text{tree rooted at $v$ of the connected
        component of~$v$ in~$\mathcal{G}[E \setminus E_P]$} $\;
        \tcp{compute the probability that we can reach a self-loop from~$v$}
        $p_v \gets \texttt{GetExProb}{(\mathcal{T}_v, v, \mathsf{cl})}$\;
    }
    \tcp {return probability that P exists and there is a vertex of P from which
    we can reach a self-loop}
    \Return{$\left(\prod_{i = 0}^{k-1}  \pi(n_i,n_{i+1}) \right) \times
    \left(1 - \prod_{v \in P} \left(1 - p_v\right)\right)$ }\;
}

\caption{Compute probability that a probabilistic treewidth-$1$ graph contains
an $st$-walk of length~$\ell$}\label{alg:twg1eq}
\end{algorithm}

\begin{proof}

Note that if a graph $\mathcal{G}=(V,E,\pi)$ has treewidth $1$, then it implies that either there is a unique directed path from $s$ to $t$ or $t$ is not reachable from $s$. Indeed, otherwise the underlying undirected graph would contain a cycle, implying that its treewidth is greater than or equal to~$2$.   
Note that if it exists, this unique directed path may be found using a breadth first search in linear time in the size of the input. Thus given an instance of $\lppe $ where we need to find the probability of existence of a path of length $\ell$, we can check if the length of the unique directed path from $s$ to $t$ is $\ell$. If there is no $s$ to $t$ path or if the length of the unique directed path from $s$ to $t$ does not equal $\ell$ we return $0$, else we return the multiplication of the probabilities over all the edges in this path. For $\lppl$ (or equivalently $\lpwl $) where the desired path length threshold is $\ell$, we again check if a path from $s$ to $t$ exists, check whether its length is less than or equal to $\ell$ if it exists, and return $0$ or the multiplication of the probabilities over all the edges in this path accordingly.
A similar argument works for $\lppg$.

The only remaining cases are $\lpwg$ and $\lpwe $. We propose \Cref{alg:twg1eq} for $\lpwe$, which may be adapted to $\lpwg$  as well. 
Throughout the algorithm, the probability $\pi(u,v)$ is always implicitly~$0$ if there is no edge from~$u$ to~$v$.
We give more details here on why the algorithm works.

Let $P$ be the unique directed path from $s$ to $t$. (If no such path exists, we simply return $0$). If the length of $P$ is $k$, and the desired walk length threshold is $\ell$ then the case where $\ell < k$ is easy since there cannot exist a walk from $s$ to $t$ with length less than that of the unique directed path $P$. Likewise when $\ell = k$, the only possible walk from $s$ to $t$ with length $\ell$ is $P$ itself, hence in this case we return the multiplication of the probabilities associated with the edges of $P$. 

We now turn to the case where $\ell > k$. We make two cases on the basis of the parity of $(\ell -k)$. Note that if $(\ell -k)\mod2 =0 $, then a realisation $G$ of $\G$ contains a walk of length $\ell$ from $s$ to $t$ if and only if at least one of these three kinds of edges is present in $G$, in addition to the directed path~$P$.

\begin{itemize}
 \item{There exists a vertex $v \in P$ such that there exists a self-loop at the vertex $v$ in $G$ (i.e., an edge from $v$ to $v$)} 
 \item{There exist two adjacent vertices $v_1,v_2 \in P$ such that there is an edge from $v_1$ to $v_2$ and an edge from $v_2$ to $v_1$} in $G$
 \item{There exist two vertices $v_1,v_2$ such that $v_1 \in P, v_2 \notin P$ and there exists an edge from $v_1$ to $v_2$ and an edge from $v_2$ to $v_1$} in $G$
\end{itemize}

Therefore if $(\ell -k)\mod2 =0 $, then the required probability of a walk of length $\ell$ from $s$ to~$t$ being present in $\G$ is the probability that the path $P$ is present, and at least one of these three conditions outlined above is satisfied. Note that up to the parity of $(\ell-k)$, these conditions are independent of the specific value of $\ell$.

Let us now look at the case when $(\ell-k) \mod 2 =1 $. In such a case, we can
show that given a realisation $G$ of $\G$ there exists a walk of length $\ell$
from $s$ to $t$ in $G$ if and only if there exists a vertex $v \in P$  such that a \emph{retraceable} path of length at most $\frac{\ell-k-1}{2}$ exists from $v$ to another vertex $v'$, possibly not in $P$, such that $v'$ contains a self-loop, i.e., there is an edge from $v'$ to $v'$. By a \emph{retraceable} path we mean a path in $G$, where besides every forward edge $(p,q)$ in the path $P$, the corresponding back edge $(q,p)$ must also be present in $G$. The idea is that we can take the path $P$ from $s$ to $v$, diverge from $v$ to $v'$, take the self-loop attached to $v'$ sufficiently many times, retrace back to the vertex $v$ and then continue along the path $P$ from $v$ to $t$. However, this self-loop cannot be located too far from $v$, in particular, if it is at a distance of greater than $\frac{\ell-k-1}{2}$ from $v$, then visiting $v'$ from $v$ and returning back to $v$ will necessarily cause the length of walk from $s$ to $t$ to exceed $\ell$. So in this case, the natural algorithm is to compute, for every vertex $v$ on the path $P$, the probability that there is a path of length less than or equal to the maximum length $\texttt{cl}$ from $v$ to another vertex with self-loop, where $\texttt{cl}=\frac{\ell-k-1}{2}$. 

We compute this quantity using \cref{alg:recurse}. Intuitively, at every node $w$ reachable from $v$ without going through an edge of~$P$ and without exceeding the maximum length, letting $\pi(w,w)$ be the probability that the current node $w$ has a self-loop, then the probability that we can reach a self-loop from~$w$ and come back within our allotted maximum length can be expressed as follows.
It is $1$ if the self-loop on $w$ is present, which happens with probability $\pi(w,w)$. Otherwise, it is the independent OR of the probabilities that we can reach a self-loop from~$w'$ for each child $w'$ and come back, weighted by the probabilities that the edges to~$w'$ and back are present, and multiplying by the recursively computed quantity on~$w'$ with the maximum length having been decremented. Note that, here, the recursively computed probabilistic events are indeed independent because they pertain to disjoint parts of the tree. The same holds between the probabilistic events across the various choices of~$v$ along the path, thanks to the fact that the underlying probabilistic graph is a forest.

Now, we briefly sketch how this algorithm can be adapted to $\lpwg$. Again, we find the unique directed path $P$ from $s$ to $t$ (if it exists, else we return $0$). Let its length be $k$. If the desired length $\ell$ is such that $\ell \leq k$, then clearly such a walk exists if and only if the directed path $P$ exists.

Otherwise, if $\ell > k$, 
we take the independent disjunction across all possible ways to make the walk
longer than~$P$. (We need not care about parity this time, because we do not need to reach
a precise length.) More precisely, we take the independent disjunction across
all self-loops of $P$, all back-edges corresponding to a forward edge of~$P$,
and all pairs of back-and-forth edges not in~$P$ that lead out of a vertex
of~$P$; and we multiply this by the probability that all edges of~$P$ exist (i.e., we do precisely like the case of $(\ell - k) \bmod 2 = 0$ in \cref{alg:twg1eq}).
\end{proof}

\begin{algorithm}
\SetAlgoLined
\SetKwFunction{ComputeT}{GetExProb}
\KwIn{directed rooted tree $\mathcal{T} =(V_{\mathcal{T}},E_{\mathcal{T}},\pi\colon E_{\mathcal{T}} \to [0,1])$, current node $v$, length budget $\mathsf{cl}$}
\KwOut{Probability that we can reach a self-loop within budget $\mathsf{cl}$ and come back}

\SetKwProg{Fn}{Function}{:}{}
\Fn{\ComputeT{$\mathcal{T}, v, \mathsf{cl}$}}{
    \If{$\mathsf{cl} < 0$}{
        \Return{$0$}\;
    }

    \tcp{$\pi(v,v)$ is implicitly $0$ if $(v,v) \notin E$}
    \Return{
        $\pi(v,v)+(1-\pi(v,v))\times\Bigl(
        1 - \displaystyle\!\!\!\!\!\!\!\!\!\!\prod_{\substack{v' \text{child of } v \text{ in } \mathcal{T} \\ (v,v'),(v',v) \in E}}\!\!\!\!\!\!\!\!\!\!\!\!
        \Bigl(1-\pi(v,v') \cdot\pi(v',v)
    \cdot \text{\emph{\texttt{GetExProb}}}(\mathcal{T}, v', \mathsf{cl}-1)\Bigr)\Bigr)$}
}
\caption{Compute the probability that we can reach a self-loop with the  indicated budget}
\label{alg:recurse}
\end{algorithm}
\end{toappendix}

%% file: tract.tex
This section contains two results. The first one,
whose proof we will sketch in detail in the rest of the
section, is that when the input probabilistic graphs
have bounded treewidth, all $\probname$ problems
except the ``exact-length'' variants are tractable. 

\begin{theorem}
  \label{thm:tract}
    For any constant $k$, the problems $\probname^{X}_{\bowtie}$ for ${\bowtie}
    \in \{{\leq}, {\geq}\}$ and $X \in \{\Walk, \Path\}$ are tractable on
    probabilistic graphs of treewidth at most $k$ (i.e., XP parameterised by
    treewidth).
\end{theorem}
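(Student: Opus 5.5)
We will prove \cref{thm:tract} by a bottom-up dynamic programming over a nice tree decomposition of width at most $k+1$ with root bag $\{s,t\}$, obtained via \cref{lem:computeNiceTreeDec}. Unlike \cref{thm:nonprobatw}, we do not store, for each bag, merely the \emph{set} of achievable configurations. Instead we store a probability distribution over \emph{summaries} of the realisation of $G_b$. The key idea is monotonicity: for $\leq \ell$ (resp.\ $\geq \ell$) the only useful information about a realisation of $G_b$ is, for each path set projection, the \emph{minimal} (resp.\ \emph{maximal}) length achievable with that projection, or $\bot$ if it is not achievable. A summary is therefore a function $\sigma$ from the path set projections over $S_b$ to $\{0,\ldots,L\}\cup\{\bot\}$, and we maintain $\Pr(\sigma)$ for every summary with nonzero probability. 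Here $L=|V|$ suffices, since simple paths have length at most $|V|$. For the $\Walk$ variants we use walk projections instead.

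The lengths are bounded by $n$ and the number of projections is $f(k)$, so the number of summaries is at most $(n+2)^{f(k)}$, which is polynomial for fixed $k$; this is exactly the XP bound, and why we do not get FPT. The transitions are as follows.
\begin{itemize}
\item At a \LeafNode{} node there is one summary with probability $1$.
\item At an \IntroVertNode{} node we extend $\sigma$ by adding the single-vertex path.
\item At an \IntroEdgeNode{} node for $(u,v)$, we branch with probability $\pi(u,v)$ on its presence; if present, we recompute $\sigma$ by gluing the edge onto the projections (concatenating tuples ending in $u$ and starting in $v$) and taking min/max.
\item At a \ForgetVertNode{} node we drop $v$ from the tuples, keeping only projections where $v$ is internal or unused, and merge via min/max.
\item At a \JoinNode{} node, edge-disjointness of $G_{b_1}$ and $G_{b_2}$ gives independence, so $\Pr_b(\sigma)=\sum_{\sigma_1\oplus\sigma_2=\sigma}\Pr_{b_1}(\sigma_1)\Pr_{b_2}(\sigma_2)$, where $\oplus$ computes optimal lengths of interleaved combinations of projections exactly as in the non-probabilistic algorithm.
\end{itemize}
At the root we sum the probabilities of the summaries whose value on $\{(s,t)\}$ satisfies the constraint. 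Correctness is shown by induction, proving that the summary of a realisation of $G_b$ is a function of the summaries of its parts; this is precisely where monotonicity is needed. Optimal lengths compose by min/max under concatenation, whereas exact-length sets would not be bounded in size.

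For $\lppl=\lpwl$, shortest walks are paths, so the projection approach applies directly, with lengths truncated at $\ell+1$. For $\lpwg$, walks may revisit vertices and get arbitrarily long. We therefore use projections as sequences of bag-vertex visits, with repetition bounded, plus a special value $\infty$ recording that a pumpable cycle is reachable inside the segment. Lengths are capped at $\min(\ell,|V|^2)$, since any walk longer than a bounded threshold implies one that can be pumped. We then argue that optimal walks can be assumed to cross each bag vertex a bounded number of times in terms of $k$.

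The main obstacle is $\lppg$. There, maximality must be combined with simplicity, and the join operator must check that the combined paths from both children remain vertex-disjoint. This works because vertices shared between $G_{b_1}$ and $G_{b_2}$ lie in $S_b$, so disjointness is visible in the projections. The crux is to show that keeping only the maximum per projection loses nothing, i.e.\ that the optimal lengths for the parts combine additively into the optimal length for the whole. I would first prove an exchange lemma: given a combined witness, replacing the part in $G_{b_1}$ by a maximal one with the same projection still yields a valid disjoint path set.
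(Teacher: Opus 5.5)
Your proposal is correct and has the same architecture as the paper. It is a bottom-up DP on a nice tree decomposition with root bag $\{s,t\}$, storing a distribution over min/max-length summary functions. There are $n^{f(k)}$ such functions, hence XP. Independence is used at \JoinNode{} nodes, and an exchange argument shows that the summary of a union is determined by the summaries of its parts.

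For $\lppg$ your proof coincides with the paper's:
\begin{itemize}
\item your summaries are the functions $\ldpl_{H_b,S_b}$ on $\cpsp_{S_b}$;
\item your join is the extension-based combination of \cref{changeclaimlong};
\item your exchange lemma is the argument in its proof.
\end{itemize}

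The difference is in the easier variants. There the paper avoids projections altogether, because no vertex-disjointness needs to be tracked. For $\lppl$ it stores the distribution of the pairwise function $\spl_{H_b,S_b}$. It computes joins by an all-pairs-shortest-path computation on the bag with weights $\min(f_1,f_2)$ (\cref{changeclaim}). For $\lpwg$ it stores pairwise longest-walk lengths with an extra value $\infty$. Without a pumpable cycle all walks are simple paths, so finite values stay below $|V|$.

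Your uniform projection-based summary also works for $\lppl$. The exchange argument goes through verbatim with min, at the cost of more states.

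For $\lpwg$, your bounded-repetition walk projections are unnecessary, and one step must be made explicit. When combining at a \JoinNode{} or \IntroEdgeNode{} node, any combined segment that revisits a bag vertex must be set to $\infty$. A pumpable cycle can arise from two acyclic pieces, e.g.\ a path $x \to y$ in $G_{b_1}$ and a path $y \to x$ in $G_{b_2}$. An $\infty$ flag inherited only from inside individual segments would miss this, and the algorithm would report a bounded maximum where it should be $\infty$.
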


The second result is that, in terms of parameterised complexity, XP is likely
the best one can hope for, as we show that the problem variants from \cref{thm:tract} are all W[1]-hard,
hence not FPT 
assuming the standard hypothesis that FPT$\neq$W[1]. We show that this holds
already for acyclic graphs and already when parameterising by pathwidth:

\begin{toappendix}
  \subsection{Proofs of the W[1]-Hardness Result}
In this section we prove the following result.
\end{toappendix}

\begin{theoremrep}
  \label{thm:w1hard}
  On probabilistic acyclic graphs, the problems $\probname^{X}_{\bowtie}$ for
  ${\bowtie} \in \{{\leq}, {\geq}\}$ and $X \in \{\Walk, \Path\}$ are W[1]-hard
  (under Turing reductions) parameterised by pathwidth (hence also by
  treewidth).
\end{theoremrep}
\begin{proofsketch}
The reduction is done in two steps, starting from the problem of counting
  linear extensions of a partially ordered set (poset) $\calP$, that we write $\lee$,  which is W[1]-hard
  parameterised by pathwidth~\cite{eiben2019counting}.\footnote{The authors of
  \cite{eiben2019counting} show hardness when parameterised by treewidth, but
  in fact the authors of \cite{belmonte2022grundy} (cf.\
  \cite[Section~1.1]{belmonte2022grundy} and text before)
  notice that their proof also works for pathwidth.}
  The main idea of this proof was obtained with an LLM; we digested the proof and
  simplified its presentation. The technique is very reminiscent of the \#P-hardness proof of \cite{ando2026complexity} (in their different context with probabilistic continuous edge lengths). Their proof relies on the hardness of counting linear extensions of posets \emph{of height 2}, and an additional trick (similar to what is done in \cite[Section
  2]{Dittmer_Pak_2020}) allows us to reduce instead from $\lee$.

   We now sketch the reduction. We first reduce $\lee$ to
   the problem of counting the number of \emph{order-preserving maps of a poset
   into a chain}. For a poset $\calP=(A,\leq_\calP)$ and $q \geq 2$, a \emph{$q$-order-preserving map} is a function $f\colon A\to \{1,\ldots,q\}$ such that $f(x)\leq f(y)$ whenever $x\leq_\calP y$, and the \emph{order polynomial} of $\calP$ is the function that counts the number of such $q$-order-preserving maps. This first reduction relies on the known fact that the leading coefficient of the order polynomial of $\calP$ 
  is equal to its number of linear extensions, divided by $|\calP|!$, so we can recover it via polynomial interpolation, through a Turing reduction that uses polynomially many calls to the oracle. 

  Second, we reduce this latter problem to our problem, with a reduction that again uses polynomially many 
  calls to the target problem to recover the number of order-preserving maps. In
  fact, we use a variant of our problems where 
  we instead simply count the number of
  realisations that contain a given number of edges and that satisfy the path length constraint,
  which we show to be reducible to our
  problem under Turing reductions again via standard polynomial interpolation techniques
(see, e.g., \cite[Proof of Proposition 3.1]{deutch2022computing}, or
\cite[Appendix~A]{bienvenu2024shapley}; we prove it as \cref{lem:interpol}).

The second step of the reduction is illustrated in \cref{fig:constr}: we code the poset as a
bipartite graph where every covering relation $x \prec_\calP y$ of $\calP$ gives a
non-probabilistic
edge $(u_x, v_y)$ and further every vertex gives a non-probabilistic edge
$(u_x,v_x)$. We then have probabilistic edges that give paths of every possible
length from~$s$ to each $u_x$  and from each $v_x$ to~$t$. We can show that, on
realisations satisfying the path length requirements, the $(u_x,v_x)$ edges
force us to discard at least some number of probabilistic edges: our restriction to realisations that contain a given number of edges then enforces that we
make consistent choices so that we can assign each element of~$\calP$ to a position.
The $(u_x,v_y)$ edges then enforce that this assignment corresponds to an
order-preserving map.
\end{proofsketch}

  \begin{figure}
  \centering
  \begin{tikzpicture}[xscale=0.6,yscale=0.5,
    >=stealth,
    main/.style={circle, draw, fill=white, inner sep=2pt},
    aux/.style={circle, draw, fill=white, inner sep=1.5pt}
]

\node[main] (s)  at (0,0)   {$s$};

\node[main] (u1) at (6,5)   {$u_z$};
\node[main] (u2) at (6,0)   {$u_y$};
\node[main] (u3) at (6,-5)  {$u_x$};

\node[main] (v1) at (10,5)  {$v_z$};
\node[main] (v2) at (10,0)  {$v_y$};
\node[main] (v3) at (10,-5) {$v_x$};

\node[main] (t)  at (16,0)  {$t$};

\draw[->] (u1) -- (v1);
\draw[->] (u2) -- (v2);
\draw[->] (u3) -- (v3);
\draw[->] (u2) -- (v1);
\draw[->] (u3) -- (v1);

\draw[->, thick, dotted, line cap=round, draw=blue] (s) to[out=70,in=200] (u1);

\node[aux] (a1) at (3,2.5) {};
\draw[->, thick, dotted, line cap=round, draw=blue] (s) -- (a1);
\draw[->] (a1) -- (u1);

\node[aux] (a2) at (2.6,1.2) {};
\node[aux] (a3) at (4.5,2.8) {};
\draw[->,thick,dotted, line cap=round, draw=blue] (s) -- (a2);
\draw[->] (a2) -- (a3);
\draw[->] (a3) -- (u1);

\draw[->, thick, dotted, line cap=round, draw=blue] (s) to[out=20,in=160] (u2);

\node[aux] (b1) at (3,0) {};
\draw[->, thick, dotted, line cap=round, draw=blue] (s) -- (b1);
\draw[->] (b1) -- (u2);

\node[aux] (b2) at (2,-0.5) {};
\node[aux] (b3) at (4,-0.5) {};
\draw[->, thick, dotted, line cap=round, draw=blue] (s) -- (b2);
\draw[->] (b2) -- (b3);
\draw[->] (b3) -- (u2);

\draw[->, thick, dotted, line cap=round, draw=blue] (s) to[out=-20,in=110] (u3);

\node[aux] (c1) at (3,-2.5) {};
\draw[->, thick, dotted, line cap=round, draw=blue] (s) -- (c1);
\draw[->] (c1) -- (u3);

\node[aux] (c2) at (1.5,-3) {};
\node[aux] (c3) at (3,-4.0) {};
\draw[->, thick, dotted, line cap=round, draw=blue] (s) -- (c2);
\draw[->] (c2) -- (c3);
\draw[->] (c3) -- (u3);

\draw[->, thick, dotted, line cap=round, draw=blue] (v1) to[out=-20,in=110] (t);

\node[aux] (d1) at (13,2.5) {};
\draw[->, thick, dotted, line cap=round, draw=blue] (v1) -- (d1);
\draw[->] (d1) -- (t);

\node[aux] (d2) at (11.5,2.8) {};
\node[aux] (d3) at (13.4,1.2) {};
\draw[->, thick, dotted, line cap=round, draw=blue] (v1) -- (d2);
\draw[->] (d2) -- (d3);
\draw[->] (d3) -- (t);

\draw[->, thick, dotted, line cap=round, draw=blue] (v2) to[out=20,in=160] (t);

\node[aux] (e1) at (13,0) {};
\draw[->, thick, dotted, line cap=round, draw=blue] (v2) -- (e1);
\draw[->] (e1) -- (t);

\node[aux] (e2) at (12,-0.5) {};
\node[aux] (e3) at (14,-0.5) {};
\draw[->, thick, dotted, line cap=round, draw=blue] (v2) -- (e2);
\draw[->] (e2) -- (e3);
\draw[->] (e3) -- (t);

\draw[->, thick, dotted, line cap=round, draw=blue] (v3) to[out=70,in=200] (t);

\node[aux] (f1) at (13,-2.5) {};
\draw[->, thick, dotted, line cap=round, draw=blue] (v3) -- (f1);
\draw[->] (f1) -- (t);

\node[aux] (f2) at (13,-4) {};
\node[aux] (f3) at (14.5,-3) {};
\draw[->, thick, dotted, line cap=round, draw=blue] (v3) -- (f2);
\draw[->] (f2) -- (f3);
\draw[->] (f3) -- (t);

\end{tikzpicture}
\caption{Example construction of the graph $G$ from the proof of
\cref{prp:w1best}, for $q=4$ and for the poset $\calP$ containing three elements
$x$, $y$ and $z$, with $x\leq_\calP z$ and $y \leq_\calP z$ and $x$ and $y$
being incomparable. The dotted blue lines are the \emph{coding edges}, which are defined in the proof and intuitively correspond to probabilistic edges.}
\label{fig:constr}
\end{figure}

\begin{toappendix}
  To prove this, we will need the following general
  tool, intuitively saying that if we can efficiently
  compute the probability that a randomly chosen
  subset of edges of a probabilistic graph has a
  given property, then we can also efficiently
  compute the \emph{number} of such sets of a given \emph{cardinality}
  that have the property. Formally a \emph{property}
  in this context is a function that takes as input a
  graph $G=(V,E)$, a subset $S\subseteq E$ of the
  edges, and some parameters $\Pi$, and that is
  either true or false for these inputs. For
  instance, $P$ could take a graph $G=(V,E)$, a
  subset $S\subseteq E$, and as parameters a source $s\in V$ and target
  $t \in V$ and integer $\ell$, and is satisfied only
  if there exists a path of length $\leq \ell$ from
  $s$ to $t$ in $G[S]$. For such a property $P$,
  define the \emph{probabilistic computation problem
  for $P$} to be the problem where the input is a
  probabilistic graph $\calG=(G=(V,E),\pi)$ and some
  parameters $\Pi$, and the output is the sum,
  across all $S\subseteq E$ such that $(G,S,\Pi)$
  has the property $P$, of $\Pr_\calG(S)$. Define
  also the \emph{generalised cardinality computation problem for
  $P$} to take as input a graph $G=(V,E)$, a subset $T\subseteq E$, some
  parameters $\Pi$, and some natural number $c\in
  \mathbb{N}$, and the goal is to compute how many
  subsets $S\subseteq E\setminus T$ of size $c$ are such that
  $(G,T\cup S,\Pi)$ satisfies property $P$. 
  Intuitively, the edges of~$T$ are always present and not accounted for, i.e., they correspond to \emph{exogenous facts}.
  We then recall the following, which is
  folklore (see, e.g., \cite[Proof of Proposition 3.1]{deutch2022computing}, or
  \cite[Appendix~A]{bienvenu2024shapley}):

  \begin{lemma}
    \label{lem:interpol}
    For any property $P$ as above, there is a polynomial-time
    Turing reduction from the generalised cardinality problem for
    $P$ to the probabilistic computation problem for
    $P$.
  \end{lemma}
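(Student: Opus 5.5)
The plan is the standard polynomial interpolation argument: assign the same probability to every edge of $E\setminus T$, use the oracle at many probability values, and read the counts off as polynomial coefficients. Fix an input $(G=(V,E),T,\Pi,c)$ of the generalised cardinality problem, and let $m=|E\setminus T|$. For $0\leq j\leq m$, write $N_j$ for the number of sets $S\subseteq E\setminus T$ with $|S|=j$ such that $(G,T\cup S,\Pi)$ satisfies $P$. The goal is $N_c$, which is $0$ if $c>m$.

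For a rational $x\in(0,1)$, let $\calG_x$ be the probabilistic graph on $G$ with $\pi(e)=1$ for $e\in T$ and $\pi(e)=x$ for $e\in E\setminus T$. Every realisation not containing all of $T$ has probability $0$. A realisation $T\cup S$ has probability $x^{|S|}(1-x)^{m-|S|}$. The oracle for the probabilistic computation problem on $(\calG_x,\Pi)$ therefore returns
\[
  F(x)=\sum_{j=0}^{m} N_j\, x^j(1-x)^{m-j}.
\]
We restrict to $x\neq 1$ and set $y=x/(1-x)$. Then $F(x)/(1-x)^m=\sum_{j=0}^m N_j y^j=:Q(y)$, a polynomial of degree at most $m$ in $y$ with unknown coefficients $N_0,\ldots,N_m$.

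Next we pick $m+1$ distinct values $x_i=\frac{i}{i+1}$ for $i=1,\ldots,m+1$, which give $y_i=i$. We call the oracle on each $\calG_{x_i}$, which is $m+1$ polynomially many calls. Each $x_i$ has polynomial bit size, so each instance is polynomial-size. From the answers we compute $Q(i)=F(x_i)(i+1)^m$. The numbers $N_j$ are then the unique solution of the Vandermonde system $\sum_j N_j i^j=Q(i)$. The Vandermonde matrix is invertible because the $y_i$ are distinct, and the system can be solved exactly over the rationals in polynomial time, for instance by Gaussian elimination with bounded bit growth or by Lagrange interpolation. We output $N_c$.

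The only delicate step is checking that all arithmetic stays polynomial in bit size. The $Q(i)$ are integers bounded by $2^m (m+1)^m$, so they have polynomially many bits. Standard bounds on exact rational Gaussian elimination (Bareiss, or Edmonds' argument) then give polynomial time. We also note that the reduction never changes $G$ or $\Pi$, only the probabilities. This matters because it means structural restrictions (acyclicity, pathwidth) are preserved, which is what the later use in \cref{thm:w1hard} needs.
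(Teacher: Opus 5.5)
Your proposal is correct and is essentially the paper's proof. The paper gives the edges of $E\setminus T$ probability $\frac{z}{1+z}$ for distinct natural numbers $z$, which is your choice $x_i=\frac{i}{i+1}$ with $y_i=i$. It then multiplies each oracle answer by $(1+z)^n$ and inverts the resulting Vandermonde system to recover all the counts, including the one for the target cardinality $c$.
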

  \begin{proof}
Let $G=(V,E)$, $T\subseteq E$, $\Pi$ and $c$ the input instance for
the generalised cardinality problem for $P$, and let $n=|E\setminus T|$. For
$r\in \{0,\ldots,n\}$ define $N_r$ to be the number
of subsets $S\subseteq E\setminus T$ of size $r$ such that
$(G,T\cup S,\Pi)$ satisfies property $P$. We will show how to
compute all the $N_r$ values, which is enough since
we only need $N_c$.

For $z$ a natural number, define $\calG_z$ to be the
probabilistic graph $\calG_z=(V,E,\pi_z)$ where
$\pi_z((v,v'))=1$ for every $(v,v')
\in T$ and 
$\pi_z((v,v'))=\frac{z}{1+z}$ for every $(v,v')
\in E\setminus T$. Then, denoting $O_z$ the result of the probabilistic computation problem for $P$ on $\calG_z$ and~$\Pi$, we have:

\begin{align}
  O_z &= \sum_{\substack{S\subseteq E\\ (G,S,\Pi)\text{ has property $P$}}} \Pr_{\calG_z}(S) \nonumber\\
  &= \sum_{\substack{S\subseteq E\setminus T\\ (G,T\cup S,\Pi)\text{ has property $P$}}} \Pr_{\calG_z}(T\cup S) \nonumber\\
  &= \sum_{r=0}^{n} \sum_{\substack{S\subseteq E\setminus T, |S|=r\\ (G,T\cup S,\Pi)\text{ has property $P$}}} \Pr_{\calG_z}(T\cup S) \nonumber\\
                                             &= \sum_{r=0}^{n} \sum_{\substack{S\subseteq E\setminus T, |S|=r\\ (G,T\cup S,\Pi)\text{ has property $P$}}} \left(\prod_{e\in S} \frac{z}{1+z}\right)\left(\prod_{e\in E\setminus (T\cup S)} \left(1- \frac{z}{1+z}\right)\right) \nonumber\\
                                             &= \sum_{r=0}^{n} \sum_{\substack{S\subseteq E\setminus T, |S|=r\\ (G,T\cup S,\Pi)\text{ has property $P$}}} \left(\frac{z}{1+z}\right)^{r}\left(\frac{1}{1+z}\right)^{n-r} \nonumber\\
                                             &= \frac{1}{(1+z)^n} \sum_{r=0}^{n} z^r \sum_{\substack{S\subseteq E\setminus T, |S|=r\\ (G,T\cup S,\Pi)\text{ has property $P$}}} 1 \nonumber\\
                                             &= \frac{1}{(1+z)^n} \sum_{r=0}^{n} z^r N_r \label{eq:vdm}
\end{align}

We then use the oracle to the probabilistic computation problem for $P$ to compute values $(1+z)^n
O_z$ for $n+1$
pairwise different values of $z$. We then solve the system of
    linear equations given by~\cref{eq:vdm} (which
    is invertible because its matrix is a Vandermonde matrix with pairwise different values of~$z$), to compute in polynomial
time all the values $N_r$ for $r\in
\{0,\ldots,n\}$ as we wanted. This concludes the proof.
  \end{proof}

  With this in place, we now focus on proving
  \cref{thm:w1hard}. Remember that the problems
  $\probname^{\Walk}_{\leq}$ and
  $\probname^{\Path}_{\leq}$ are the same
  problems.
  We first provide a proof of a weaker statement that applies only to $\probname^{\Path}_{\leq}$ for general probabilistic graphs (i.e., without restricting to acyclic graphs), and then a second proof (which is the one that we sketched in the maintext) that works for all variants and only uses acyclic graphs.  
Both arguments were originally suggested by an LLM (but written by the authors):
  even though the first proof is subsumed by the second one, we give it nevertheless
  because it is much simpler and the source problem is different. 

  \subsubsection{First Proof of W[1]-hardness for $\probname^{\Path}_{\leq}$}

First, we show that 
  $\probname^{\Path}_{\leq}$ is W[1]-hard when we
  consider \emph{undirected} graphs (as opposed to
  the directed graphs we usually consider in this
  article).  The definition of
  $\probname^{\Path}_{\leq}$ on undirected graphs is
  the same as when we have directed graphs, except
  that the input is now an undirected probabilistic
  graph $\calG=(V,E,\pi)$ (with $E$ containing pairs
  of the form $\{v,v'\}$ for $v,v'\in V$, $v\neq v'$). 
  We show:

  \begin{proposition}
    \label{prp:w1-leq-undirected}
    On probabilistic \emph{undirected} graphs, the problem 
    $\probname^{\Path}_{\leq}$ is W[1]-hard (under Turing reductions)
  parameterised by pathwidth.
  \end{proposition}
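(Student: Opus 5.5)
We plan to reduce from the \emph{Length-Bounded Cut} problem on undirected graphs. Its input is an undirected graph $G=(V,E)$, two vertices $s,t$, and integers $L$ and $k$. It asks whether there is a set $F\subseteq E$ with $|F|\leq k$ such that $G[E\setminus F]$ has no $s$--$t$ path of length at most~$L$. Dvo\v{r}\'ak and Knop showed that this problem is W[1]-hard parameterised by the pathwidth of~$G$. If that result is only stated for a related parameter, we would check that their construction has bounded pathwidth. The attraction of this source is that it needs no gadgets: we apply the oracle to the \emph{same} graph~$G$, so pathwidth is preserved trivially.

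The reduction goes through \cref{lem:interpol}, which holds verbatim for undirected graphs since it only manipulates edge subsets. We take the property ``$G[S]$ contains an $s$--$t$ path of length at most $L$'' with parameters $\Pi=(s,t,L)$. With $T=\emptyset$, the lemma uses polynomially many calls to the undirected $\probname^{\Path}_{\leq}$ oracle, each on a copy of~$G$ with probabilities $\frac{z}{1+z}$. From these calls it computes, for every $r\in\{0,\ldots,|E|\}$, the number $N_r$ of edge subsets $S\subseteq E$ of size~$r$ such that $G[S]$ has an $s$--$t$ path of length $\leq L$. Writing $m=|E|$, the number of sets $F$ of size exactly $j$ whose removal destroys all short $s$--$t$ paths is $\binom{m}{j}-N_{m-j}$. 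Hence the Length-Bounded Cut instance is a yes-instance if and only if $\binom{m}{j}-N_{m-j}>0$ for some $j\leq k$. This is checked in polynomial time once the $N_r$ are known.

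It remains to verify the bookkeeping of the parameterised reduction. Every oracle query has the same underlying graph~$G$, so its pathwidth equals that of the source instance. The number of queries is $m+1$, and the rational probabilities have polynomial bit size. This gives an FPT (indeed polynomial-time) Turing reduction, and W[1]-hardness transfers. The only real obstacle is the first step: pinning down a citable W[1]-hardness result for Length-Bounded Cut parameterised by pathwidth on \emph{undirected} graphs, with the path length measured in number of edges (unweighted). If the known result uses edge lengths or only treewidth, we would subdivide edges, which keeps pathwidth bounded up to an additive constant. We would likewise argue that any multi-edges it uses can be simulated with bounded pathwidth increase. The rest of the argument is routine given \cref{lem:interpol}.
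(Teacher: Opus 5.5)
Your proposal is correct and follows essentially the same route as the paper: reduce from Minimum Length-Bounded Cut, which is W[1]-hard parameterised by pathwidth (Dvo\v{r}\'ak--Knop, Bentert et al.), and apply \cref{lem:interpol} with $T=\emptyset$ to the unmodified graph to count the non-cuts of each size, then compare with $\binom{m}{j}$. The paper cites the pathwidth hardness result directly, so your fallback steps (subdivision, multi-edges) are not needed.
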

  \begin{proof}
  We reduce from the Minimum Length-Bounded Cut
  problem (MLBC). The input to this problem is an
  undirected graph $G=(V,E)$, two vertices $s,t\in V$
  (called source and target), and two integers
  $\ell,g\in \mathbb{N}$. For a subset of edges
  $C\subseteq E$, we say that $C$ is an \emph{$\ell$-cut} if
  every path from~$s$ to~$t$ of length at most $\ell$
  contains an edge in $C$; in other words, if $s$
  and~$t$ are at distance $> \ell$ in the graph where
  we have removed all edges of $C$. The goal of
  the problem is then to determine if there exists an
  $\ell$-cut of size at most $g$. This problem has
  been shown to be W[1]-hard parameterised by
  pathwidth
  in~\cite{dvovrak2018parameterized,bentert2022length}.

  We now explain the reduction. Let $G=(V,E)$, $s,t
  \in V$, $\ell,g \in \mathbb{N}$ be the input instance to
  MLBC, and let $n=|E|$. For $c \in \{0,\ldots,n\}$, define $N_c$ to
  be the number of $\ell$-cuts of size $c$. Notice
  then that the goal is simply to determine if $\sum_{c=0}^g
  N_c >0$. For $c \in \{0,\ldots,n\}$, denote by $\bar{N_c}$ the number of subsets of size $c$ of $E$ that are \emph{not} $\ell$-cuts, i.e., $\bar{N_c} = \binom{n}{c} - N_c$. Then, by what precedes, if we can compute all the numbers $\bar{N_c}$ for $c \in \{0,\ldots,n\}$, we can solve the MLBC problem on this instance. 
  We will do this using our oracle for $\probname^{\Path}_{\leq}$, and more
  specifically we invoke
  \cref{lem:interpol} with the property $P$ being the
  following: it takes a graph $G=(V,E)$, a subset $S\subseteq E$
  and parameters $s,t\in V$ and $\ell\in \mathbb{N}$, and it is
  satisfied if $E\setminus S$ is \emph{not} an $\ell$-cut for $s,t$. The probabilistic
  computation problem for this property $P$ is precisely our target problem
  $\probname^{\Path}_{\leq}$.
  Thus, via the reduction of
  \cref{lem:interpol}, we can 
  use our $\probname^{\Path}_{\leq}$ oracle to solve the generalised cardinality
  computation problem for~$P$, i.e., to
  compute in polynomial time for every
  $r\in \{0,\ldots,n\}$ the number of subsets $S\subseteq E$ of
  size $r$ such that $E\setminus S$ is \emph{not} an $\ell$-cut (we set
  $T\coloneq \emptyset$, i.e., there are no exogenous facts).
  Since the number of subsets $S\subseteq E$ of
  size $r$ that are not an $\ell$-cut is equal to the number of
  subsets $S'\subseteq E$ of size $n-r$ such that $E\setminus S'$
  is not an $\ell$-cut, we can indeed compute in polynomial time
  all the numbers $\bar{N_c}$, as desired.

Last, we notice that the reduction is indeed a valid
FPT (Turing) reduction, since the graph, in
particular its pathwidth, was not modified at all. This
concludes the proof.
\end{proof}

Hence, all that is left to prove \cref{thm:w1hard} is to explain
how to adapt the proof to the \emph{directed} probabilistic graphs
that we study in this article. To do this, we consider essentially the same
reduction from \cref{prp:w1-leq-undirected}, but we replace every
undirected edge $\{v,v'\}$ of the undirected graph $G=(V,E)$ with the gadget shown in
Figure~\ref{fig:dir-undir}, with $a_{\{v,v'\}}$ and $b_{\{v,v'\}}$
being fresh vertices. 
(The gadget in question was proposed by an LLM but appears to be folklore, and is mentioned, e.g., to be a well known 
reduction in \cite[Section~2]{azar2003optimal}.)
Call $\overrightarrow{G}$ the directed graph obtained from~$G$ in this fashion. Let us write $U$ and call the \emph{coding edges} the set of edges of $\overrightarrow{G}$ of the form
$a_{\{v,v'\}} \to b_{\{v,v'\}}$, and $D$ the set of all other edges, that we
call the 
\emph{exogenous} edges.
For a set of edges $S\subseteq E$, let $S'$ be the set of coding edges of
$\overrightarrow{G}$ that they correspond to, i.e., $S' = \{\{a_{\{v,v'\}},
b_{\{v,v'\}}\}\mid \{v,v'\} \in S\}$. We then observe the following: $S$ is not
an $\ell$-cut of $G$ if and only if $S'$ is not a $3\ell$-cut of
$\overrightarrow{G}$. Adapting the reduction is then straightforward, using
$T\coloneq D$ in the generalised cardinality problem to force the exogenous edges not to
be considered in a cut,
and also noting that the transformation from~$G$ to $\overrightarrow{G}$ only
multiplies the treewidth by a constant factor (indeed, the two fresh vertices introduced
in the gadget can be added in any bag witnessing the original undirected edge,
and this bag now witnesses the five directed edges of the gadget).

\begin{figure}
\centering
  \begin{tikzpicture}[xscale=1.1]
    \node (v) at (0,0) {$v$};
    \node (a) at (2,1) {$a_{\{v,v'\}}$};
    \node (b) at (2,-1) {$b_{\{v,v'\}}$};
    \node (v') at (4,0) {$v'$};
    \draw[->] (v) -> (a);
    \draw[->] (v') -> (a);
    \draw[->] (a) -> (b);
    \draw[->] (b) -> (v);
    \draw[->] (b) -> (v');
  \end{tikzpicture}
  \caption{Gadget to adapt the reduction of \cref{prp:w1-leq-undirected} to directed graphs.}
  \label{fig:dir-undir}
\end{figure}

Last, we remark here that this proof could be adapted to show the
W[1]-hardness of the $\geq$-variants, if one could show that the corresponding
cut problems (i.e., we want that there is no path of length
\emph{at least} $\ell$ connecting $s$ and $t$) are W[1]-hard; however to
our knowledge this is not known.

\subsubsection{A Second Proof of W[1]-hardness that Works for All Variants}

In this section we give a full proof of
\cref{thm:w1hard} that works for all variants of the
problems $\probname^{X}_{\bowtie}$ for ${\bowtie} \in
\{{\leq}, {\geq}\}$ and $X \in \{\Walk, \Path\}$, and that in addition shows hardness already
  for the case of \emph{acyclic} probabilistic graphs.
We focus first on the case of $\probname^{\Path}_{\geq}$ and explain after that why the proof also works for the other variants.
For any integer $q\geq 1$, we write $[q]$ for
the set $\{1,\ldots,q\}$. 

The source of hardness will be the problem of
counting the linear extensions of a partially ordered
set (poset). Recall that a \emph{poset}
$\calP=(A,\leq_\calP)$ is a set $A$ together with a
relation ${\leq_\calP} \subseteq A^2$ that is
reflexive, transitive and antisymmetric. For $x,y\in
A$, we say that $y$ \emph{covers} $x$, written
$x\prec_\calP y$, if we have $x \neq y$, we have $x \leq_\calP y$, and
for every $z\in A$ such that $x \leq_\calP z
\leq_\calP y$ it holds that $z=x$ or $z=y$. The
\emph{cover graph} of $\calP$ (also called the Hasse diagram) is the graph
$G_\calP=(A,\prec_\calP)$, i.e., whose vertices are
the elements of $A$, and which contains a directed edge
$x \to y$ whenever $x\prec_\calP y$. A \emph{linear
extension} of $\calP$ is a 
permutation $\pi\colon A \to [|A|]$ which ensures that, for every $x,y \in A$,
if $x \leq_\calP y$ then $\pi(x) \leq \pi(y)$.

Every poset admits linear extensions, which can be computed in linear time using any topological sorting algorithm.
By contrast,
the authors of \cite{eiben2019counting} show that the problem of counting the linear extensions of a (finite) poset $\calP$ is W[1]-hard under Turing reductions parameterised by the treewidth of $G_\calP$, and those of
  \cite{belmonte2022grundy} (cf. Section 1.1 and text before) actually notice that their proof also holds for pathwidth. So we will reduce from the following:

\begin{theorem}[\cite{eiben2019counting, belmonte2022grundy}]
  \label{thm:linexcount}
  Counting the linear extensions of a finite poset is W[1]-hard (under Turing reductions) parameterised by the pathwidth of (the cover graph of) the poset.
\end{theorem}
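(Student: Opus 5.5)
The plan is not to build a new reduction but to reuse the existing one of~\cite{eiben2019counting} and check that it bounds pathwidth, not just treewidth. That paper gives an FPT Turing reduction from a W[1]-hard problem with parameter $k$ to counting linear extensions. I expect the source problem to be a multicoloured-clique–type problem; I would confirm this against the paper. Each oracle query is a poset $\calP$ whose cover graph $G_\calP$ has treewidth at most $g(k)$ for some computable $g$. So it suffices to show that every queried $G_\calP$ in fact has pathwidth at most $g'(k)$ for some computable $g'$. The Turing reduction itself, including its number of queries and its post-processing by interpolation or arithmetic, stays unchanged. This is exactly the observation attributed to~\cite{belmonte2022grundy}, and the proof should make it explicit.

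The key steps, in order, are as follows. First, recall the shape of the queried posets. They are built from a bounded number (in terms of $k$) of long chains, one per colour class or per coordinate of the encoding. Consecutive segments of these chains are linked by constant-size gadgets encoding vertex choices and edge checks, together with a bounded number of global elements or bounded-size connector structures. Second, fix a linear order along which every gadget appears, namely the index along the chains. Then write down a path decomposition sweeping in that order. The bag at position $i$ contains the current element of each of the $O(f(k))$ chains, the elements of all gadgets active at position $i$, and all the global elements. Third, check the three path-decomposition conditions. Vertex coverage and edge coverage hold because every cover relation is either between consecutive chain elements, inside a gadget, or between a gadget and the chain positions it attaches to; in every case both ends appear in a common bag. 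Contiguity holds because each element is active only on an interval of positions. Fourth, bound the bag size by a function of $k$, and conclude that the reduction is an FPT Turing reduction parameterised by pathwidth. This combines with W[1]-hardness of the source problem.

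The main obstacle is the second step. The original argument bounds treewidth, and its decomposition may branch, for instance one subtree per pair of colour classes hanging off a central structure. A naive linearisation of such a tree decomposition can blow up the width by a factor of the logarithm of the instance size, which is not bounded in $k$. I would first try to show that the pairwise edge-check gadgets can be scheduled one after another along a single global sweep. For this, I would process the $\binom{k}{2}$ pairs sequentially while keeping only the $O(k)$ ``current selection'' elements in every bag, so that branches become consecutive intervals of the path. If some gadget genuinely attaches to arbitrarily distant chain positions, I would instead add those attachment elements to all bags in the intervening range. This is harmless as long as only $f(k)$ such long-range elements are simultaneously active, and verifying that count is the step I would check most carefully.
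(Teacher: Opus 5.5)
Your plan matches the paper's treatment. The paper gives no new argument: it takes the FPT Turing reduction of \cite{eiben2019counting}, which bounds the treewidth of the queried cover graphs, together with the remark in \cite{belmonte2022grundy} that the same construction bounds pathwidth. One caveat is that your description of the queried posets (boundedly many chains linked by local gadgets, swept in chain order) is a hypothesis, so to make the pathwidth verification explicit rather than cite it, you would have to check it against the actual gadgets of \cite{eiben2019counting}.
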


We will reduce that problem to
$\probname^{\Path}_{\geq}$, in two steps. First, we
reduce to the problem of counting the number of
\emph{order-preserving maps to a chain}, as defined
next. A
\emph{$q$-order-preserving map} from a poset
$\calP=(A,\leq_\calP)$ is a map $f\colon A \to [q]$
which ensures that, for every $x,y\in A$, if $x
\leq_\calP y$ then $f(x)\leq f(y)$. Notice that
$f(x)=f(y)$ is allowed; in fact,
notice that the linear extensions of $\calP$ correspond exactly to the $|A|$-order-preserving maps of $\calP$ that are bijective.
The problem of \emph{counting
order-preserving maps for posets} then takes as input a
poset $\calP$, an integer $q\geq 2$ given in unary, and outputs the
number of $q$-order-preserving maps of $\calP$.
We claim the following, which is the first step of the proof:

\begin{proposition}
  \label{prp:linexttoopm}
  The problem of counting linear extensions of posets has an FPT (Turing) reduction, parameterised by the pathwidth of $G_\calP$, to the problem of counting the number of order-preserving maps of posets.
\end{proposition}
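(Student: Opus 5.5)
The plan is to use the classical fact, already mentioned in the proof sketch of \cref{thm:w1hard}, that the number $\Omega_\calP(q)$ of $q$-order-preserving maps of a poset $\calP=(A,\leq_\calP)$ with $n=|A|$ is a polynomial in $q$ of degree $n$. Its leading coefficient is $e(\calP)/n!$, where $e(\calP)$ is the number of linear extensions.

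To see this, we group order-preserving maps $f\colon A\to[q]$ by the ordered set partition they induce. Let $j=|f(A)|$ and let $B_1,\ldots,B_j$ be the preimages of the used values in increasing order. The map $f$ is order-preserving if and only if the surjection $A\to[j]$ defined by $x\mapsto i$ for $x\in B_i$ is order-preserving. Each such surjection extends to exactly $\binom{q}{j}$ maps into $[q]$, by choosing which $j$ values are used. Hence $\Omega_\calP(q)=\sum_{j=1}^{n} s_j\binom{q}{j}$, where $s_j$ is the number of order-preserving surjections $A\to[j]$. This is a polynomial in $q$ of degree at most $n$. Its $q^n$ coefficient is $s_n/n!$, and the order-preserving surjections onto $[n]$ are exactly the bijective order-preserving maps, i.e., the linear extensions. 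So $s_n=e(\calP)$.

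The reduction itself proceeds as follows. Given $\calP$, we call the oracle for counting order-preserving maps on the unchanged poset $\calP$ with $q=1,\ldots,n+1$; these values are given in unary and are polynomial in size. This yields the values $\Omega_\calP(1),\ldots,\Omega_\calP(n+1)$. The statement of the problem asks for $q\geq 2$, so to be safe we instead use $q=2,\ldots,n+2$, since any $n+1$ distinct points suffice. We then recover all coefficients of $\Omega_\calP$ by Lagrange interpolation, or equivalently by solving a Vandermonde system over $\mathbb{Q}$. This runs in polynomial time because the values are at most $(n+2)^n$ and so have polynomially many bits. Finally we output $n!$ times the leading coefficient.

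The parameter is preserved exactly, since every oracle query is on $\calP$ itself and so has the same cover graph $G_\calP$ and the same pathwidth. The number of oracle calls is polynomial and the running time is polynomial, so this is an FPT Turing reduction. The only step needing care is the polynomiality and leading-coefficient argument, which the surjection decomposition handles; I expect no real obstacle beyond checking that $\binom{q}{j}$ is a polynomial in $q$ of degree $j$ with leading coefficient $1/j!$.
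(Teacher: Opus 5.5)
Your proposal is correct and follows the paper's proof. In both, the oracle is queried on the unchanged poset $\calP$ at $|A|+1$ distinct unary values of $q$, the order polynomial $\Omega_\calP$ is recovered by Vandermonde/Lagrange interpolation, and the answer is $|A|!$ times its leading coefficient. The differences are minor: the paper cites Stanley for the polynomiality and leading-coefficient fact, whereas you prove it directly via $\Omega_\calP(q)=\sum_{j} s_j\binom{q}{j}$ over order-preserving surjections, and you also make sure to query only values $q\geq 2$, as the problem definition requires.
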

\begin{proof}
  For a poset $\calP=(A,\leq_\calP)$,
  let $\Omega_\calP(q)$ be the number of
  $q$-order-preserving maps of $\calP$. It is known that $\Omega_\calP(q)$ is a
  polynomial of degree $|A|$ in the variable~$q$, and that its leading coefficient is equal to the
  number of linear extensions of $\calP$ divided by $|A|!$: see for
  instance~\cite[paragraph before Example 3.12.2]{stanley2011enumerative}. Thus,
  using the problem of counting order-preserving maps as an oracle to compute
  $\Omega_\calP(q)$ on $|A|+1$ different values of $q$ (which we write in unary), we can recover in polynomial
  time the number of linear extensions of $\calP$ (this is again inverting a
  Vandermonde matrix, or equivalently, performing Lagrange interpolation to
  compute the $|A|+1$ coefficients of the degree-$|A|$ polynomial from
  the evaluation in $|A|+1$ points). 
This is clearly a valid FPT reduction as the poset $\calP$ has not been modified.
\end{proof}

Thus, all that is needed to show the W[1]-hardness of $\probname^{\Path}_{\geq}$ is the following:

\begin{proposition}
  \label{prp:w1best}
  The problem of counting the number of order-preserving maps of
  posets, parameterised by pathwidth of their cover graph, has an FPT (Turing) reduction to
  $\probname^{\Path}_{\geq}$, parameterised by pathwidth, on acyclic probabilistic graphs.
\end{proposition}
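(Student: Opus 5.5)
We plan a direct construction, followed by \cref{lem:interpol} to move from probabilities to counts. Let $\calP=(A,\leq_\calP)$ and $q\geq 2$ be the input. We build an acyclic graph $G$ with vertices $s$, $t$, and $u_x,v_x$ for each $x\in A$. Its \emph{exogenous} edges $T$ are $u_x\to v_x$ for every $x$ and $u_x\to v_y$ for every covering pair $x\prec_\calP y$. In addition, for every $x\in A$ and every $i\in[q]$ we add a fresh directed path of length $i$ from $s$ to $u_x$, and a fresh directed path of length $i$ from $v_x$ to $t$. The first edge of each such path is a \emph{coding edge}; all other edges of the path are exogenous. Every $s$--$t$ path has the form $s\leadsto u_x\to v_y\leadsto t$ with $x=y$ or $x\prec_\calP y$, so $G$ is acyclic and simple paths and walks coincide. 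We set the threshold to $\ell \coloneq q+2$.

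Fix a set $S$ of kept coding edges. Let $a_x$ be the largest $i$ such that the length-$i$ gadget from $s$ to $u_x$ is kept, and let $b_x$ be defined likewise on the $t$-side, with value $0$ if no gadget is kept. Then $G[T\cup S]$ has an $s$--$t$ path of length $\geq q+2$ iff $a_x+1+b_y\geq q+2$ for some pair with $x=y$ or $x\prec_\calP y$, both gadgets being present (this holds automatically once the sum exceeds $q+1$, since $a_x,b_y\leq q$ forces both to be positive). Call $S$ \emph{good} if no such path exists, i.e., $a_x+b_y\leq q$ on all these pairs.

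The key counting step is as follows. For a good $S$, the number of kept coding edges attached to $x$ is at most $a_x+b_x\leq q$, so $|S|\leq q|A|$. Equality holds iff for every $x$ the kept $s$-side gadgets are exactly those of lengths $\{1,\ldots,a_x\}$ and the kept $t$-side gadgets are exactly those of lengths $\{1,\ldots,q-a_x\}$. For such an $S$, the condition for a covering pair $x\prec_\calP y$ reads $a_x+q-a_y\leq q$, i.e., $a_x\leq a_y$. Conversely, any order-preserving $a\colon A\to\{0,\ldots,q\}$ yields such a good set. Checking only covering pairs suffices because order preservation along covers implies it along $\leq_\calP$ by transitivity. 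Hence the number of good sets of size exactly $q|A|$ equals $\Omega_\calP(q+1)$, the number of $(q+1)$-order-preserving maps after shifting the range to $[q+1]$.

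By \cref{lem:interpol}, applied to the property ``there is an $s$--$t$ path of length $\geq \ell$'' with exogenous set $T$, polynomially many oracle calls to $\probname^{\Path}_{\geq}$ give the number $M$ of size-$q|A|$ subsets of coding edges that are \emph{not} good. We then output $\binom{2q|A|}{q|A|}-M$. Computing $\Omega_\calP(q')$ for all required values then amounts to calling this procedure with $q=q'-1$. Since $q$ is given in unary, $G$ has polynomial size.

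The remaining point, and the step we expect to need the most care, is that the pathwidth of $G$ is bounded by a function of the pathwidth $k$ of $G_\calP$. We start from a path decomposition of $G_\calP$ of width $k$, replace each $x$ by the pair $u_x,v_x$, and add $s$ and $t$ to every bag; this covers all exogenous edges between the $u$'s and $v$'s and has width $2(k+1)+1$. The gadget paths are pendant paths between $s$ and $u_x$, or between $v_x$ and $t$. For each $x$ we pick one bag containing $u_x,v_x$ and insert after it a sequence of bags, each adding two consecutive vertices of one gadget path, sweeping along each path in turn. This raises the width by at most $2$ and preserves the connectivity condition. Acyclicity is already established, so the reduction is an FPT Turing reduction, as claimed.
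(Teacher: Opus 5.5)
Your proposal is correct and is essentially the paper's proof: with your $q$ set to $q'-1$ it is the same graph (gadget lengths in $[q'-1]$, threshold $q'+1$, cardinality $|A|(q'-1)$, pathwidth at most $2k+5$) and the same interpolation/bijection argument. The only differences are cosmetic: you count kept coding edges rather than deleted ones, which avoids the paper's complementation step, and you bound the per-element budget cleanly via $a_x+b_x\leq q$; also drop your initial assumption $q\geq 2$, since input $q'=2$ requires (and your argument allows) $q=1$.
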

\begin{proof}
  Let $\calP=(A,\leq_\calP)$ be the input poset and $q\geq 2$ be an integer
  given as input (in unary); our goal is to compute the number of 
  $q$-order-preserving maps of $\calP$. Let $n\coloneq |A|$.
  Build a directed graph $G=(V,E)$ as follows:

  \begin{enumerate}
    \item for every element $x\in A$, we have a vertex $u_x$ and a vertex $v_x \in V$;
    \item \label{itm:identityedges} for every $x\in A$, we have an edge $(u_x, v_x)$;
    \item \label{itm:internal} for every $x,y\in A$ such that $x\prec_\calP y$, we have an edge $(u_x, v_y)$;
    \item we have $s,t \in V$ for two fresh source and target vertices $s$ and $t$;
    \item \label{itm:from} for every $x\in A$ and every $r\in [q-1]$, we have a fresh path of length $r$ from $s$ to $u_x$;
    \item \label{itm:to} for every $x\in A$ and every $r\in [q-1]$, we have a fresh path of length $r$ from $v_x$ to $t$;
  \end{enumerate}

  Notice that $G$ is indeed a directed acyclic graph.
  We now let $U$ be the set of edges containing
  the first edge of each path from \cref{itm:from}, and the first 
  edge of each path in \cref{itm:to}; we call $U$ the \emph{coding edges}. We
  let $D$ denote all other edges, and call them \emph{exogenous}. 
This construction is illustrated in \cref{fig:constr} on a small poset with $3$ elements.
  For $x\in A$, let $U_x$ be the set of edges of $U$ that are on the path from $s$ to $u_x$ or on the path from $v_x$ to $t$, and call them the \emph{$x$-related coding edges}. Notice that $U = \bigcup_{x\in A} U_x$ with the union being disjoint.
    It is not hard to check that the pathwidth of $G$ is at most $2k+5$,
    for $k$ the pathwidth of the cover graph of $\calP$. Indeed, from a path
    decomposition of the cover graph of~$\calP$, we can simply replace each
    element $x$ of $\calP$ by the two vertices $u_x$ and $v_x$, add $s$ and $t$
    to every bag, and insert one sequence of bags for each path of 
    \cref{itm:from} and of \cref{itm:to} above.

  We now use our oracle for $\probname^{\Path}_{\geq}$, via
  \cref{lem:interpol}, to solve the generalised cardinality computation problem
  for the property $P$ defined as follows: given a graph $G = (V, E)$,
  vertices $s$ and $t$, the length~$q$, and a subset $S \subseteq E$, the
  property $P$ is satisfied if $G[S]$ contains a path of length $> q$. Observe
  that the probabilistic computation problem for this property~$P$ is precisely
  our target problem $\probname^{\Path}_{\geq}$. Now, consider the generalised
  cardinality computation problem with the set of exogenous edges
  $T\coloneq D$, and the target cardinality
  $c \coloneq n(q-1)$ for the sets $S$.
  The oracle for the generalised cardinality computation problem thus allows us to
  compute
  the number $\omega'$ of subsets $S\subseteq U$ of cardinality
  $n(q-1)$ such that $G[D \cup S]$ contains a path from $s$ to $t$
  of length $> q$. By changing variables and counting over the complements
  of~$S$ relative to~$U$ instead, this is equal to the number of subsets $S \subseteq U$
  of cardinality $|U|-n(q-1) = 2n(q-1) - n(q-1) = n(q-1)$ such that $G[D \cup (U
  \setminus S)] = G[E\setminus S]$ contains a path from $s$ to~$t$ of length~$>
  q$. From this, we deduce the number $\omega \coloneq 
  {2n(q-1) \choose n(q-1)} - \omega'$ of subsets $S \subseteq U$ of cardinality
  $n(q-1)$ such that $G[E\setminus S]$ does not contain a path from~$s$
  to~$t$ of length~$>q$.
  We now claim that $\omega$
  is equal to the number of $q$-order-preserving maps of~$\calP$,
  which would conclude the proof because we can then 
  simply return $\omega$, and the overall reduction is indeed a
  valid FPT (Turing) reduction.

  To prove this claim, we show that there is a bijection between
  the subsets $S\subseteq U$ of size $n(q-1)$ such that $G[E\setminus S]$
  contains no path from $s$ to $t$ of length $> q$, and the
  $q$-order-preserving maps of~$\calP$.

  Let $S$ be such a set. For $x\in A$, let $a_x$ be one plus the length of the longest path from $s$ to $u_x$ in $G[E\setminus S]$ if such a path exists, and $1$ otherwise. Notice that $a_x$ is in $[q]$.
  We will show that the function $f_S\colon A\to [q]$ that associates $x\in A$ to $a_x$ is a $q$-order-preserving map. Let
   $b'_x$ be one plus the length of the longest path from $v_x$ to $t$ in $G[E\setminus S]$ if such a path exists, and $1$ otherwise; and define $b_x = q+1-b'_x$. Thanks to the edges from \cref{itm:identityedges}, we have that
  $a_x + b'_x \leq q+1$ (as otherwise we would have a path of length
  $>q$ from $s$ to $t$ in $G[E\setminus S]$), i.e., $a_x \leq
  b_x$. Furthermore, having $a_x +b'_x \leq
  q+1$ implies that $|U_x \cap S| \geq q-1$: this is intuitively because we have paths of
  every possible length in $[q-1]$ from $s$ to $u_x$ and likewise from $v_x$
  to~$t$ so that we
  must remove one edge each time we decrement the sum of the lengths of the
  longest remaining paths, so bringing the sum of lengths down from~$2(q-1)$ to~$q-1$
  requires $q-1$ edge removals. Now, because
  $U=\bigcup_{x\in A} U_x$ with the union being disjoint, and
  because $|S|=n(q-1)$, we must have in fact that  $|U_x \cap S| = q-1$ for each $x
  \in A$.
  This implies that we must have $a_x=b_x$ for $x \in A$. Indeed, we know that
  $a_x \leq b_x$, and if we have
  $a_x < b_x$ then we have $a_x + b_x' < q+1$ which by the decrement argument
  earlier would require $>q-1$ edge removals, exceeding our budget of $|U_x \cap
  S| = q-1$ for this $x$.
  Now, thanks to the edges of \cref{itm:internal}, for every $x,y\in A$ with
  $x\prec_\calP y$, we must also have that $a_x + b'_y \leq q+1$, i.e., $a_x \leq
  b_y$, so $a_x \leq a_y$ because we showed that $b_y = a_y$.
  We have established that $f_S$ is order-preserving on the cover relation
  $\prec_\calP$, which thanks to transitivity is sufficient to establish that we
  have $f_S(x) \leq f_S(y)$ for all $x,y\in A$ with $x \leq_\calP y$, e.g., by
  considering a chain $x \prec_\calP z_1 \prec_\calP \cdots \prec_\calP z_k
  \prec_\calP y$ and combining the inequalities.
  Thus, we have indeed shown
  that the function $f_S$ is a $q$-order-preserving map. 

  Conversely, if $f\colon A \to [q]$ is a $q$-order-preserving map, let us
  define $S_f \subseteq U$ as follows. For $x\in A$, we put in
  $S_f$ all the edges of $U_x$ that lie on a path from $s$ to $u_x$
  of length at least $f(x)$ (there are $q-f(x)$ such
  edges), as well as all those that lie on a path from $v_x$ to $t$
  of length at least $q-f(x)+1$ (there are $f(x)-1$ such edges). It
  is clear then that $|S_f|=n(q-1)$, so we only need to argue why it is
  the case that $G[E\setminus S_f]$ contains no path of length
  $>q$ from $s$ to $t$. Let $p$ be a longest path in
  $G[E\setminus S_f]$ from $s$ to $t$, if it exists (otherwise there is nothing to show). By construction of~$G$, paths from $s$
  to~$t$ must either go through an edge of
  the form $(u_x, v_y)$ with $x \neq y$ (case 1), or through an edge of the form $(u_x, v_x)$ (case 2). For case~1, by definition of $S_f$ and by
  construction of $G$, we have that $1<f(x)$ and $f(y)<q$, and we see that the part of $p$ between $s$ and
  $u_x$ has length $f(x)-1$ and that the part between $v_y$ and $t$ has length
  $q-f(y)$, so that the total length of the path is
  $f(x)-1+q-f(y)+1=q+f(x)-f(y)$. But $f(x)\leq f(y)$ since $f$ is
  order-preserving, hence the length of $p$ is at most
  $q$. For case 2, if a path uses an edge of the form $(u_x, v_x)$, then we have $1<f(x)<q$ and the length of the path is exactly $q$. Therefore, 
  $G[E\setminus S_f]$ indeed contains no path of length
  $>q$ from $s$ to $t$.

  The last two paragraphs establish the existence of the bijection that we
  claimed, so this concludes the proof.
\end{proof}

We now conclude the proof of \cref{thm:w1hard}. Our proof given so far works for
the variants $\probname^{\Walk}_{\geq}$ and equivalently
$\probname^{\Path}_{\geq}$ because the constructed graph is acyclic. Namely, the
W[1]-hardness follows directly from \cref{prp:w1best} together with
\cref{prp:linexttoopm}.
We now explain why this reduction can be adapted to the $\leq$ variants of
$\probname^{X}_{\bowtie}$, i.e., 
to $\probname^{\Path}_{\leq}$ (or $\probname^{\Walk}_{\leq}$ which is the same
problem).
We use oracle calls
to $\probname^{\Path}_{\leq}$ to compute the number of subsets $S\subseteq U$
of size $n(q-1)$ such that $G[E\setminus S]$ contains no path of length $<q+2$. 
It is then routine to check that the rest
of the proof also adapts. Namely, we have a bijection between the subsets
$S\subseteq U$ of size $n(q-1)$ such that $G[E \setminus S]$ contains no path from
$s$ to $t$ of length $<q+2$, and the $q$-order-preserving maps of~$\calP$.
This is shown in the forward direction by defining, for each set $S\subseteq U$, the value $a_x$ is the length of the \emph{shortest} path from $s$
to $u_x$ if such a path exists and $q$ otherwise, and similarly for $b'_x$. We deduce that $a_x + b_x' \geq q+1$ so $a_x
\geq b_x$, and the disjoint union argument with the dual observation that
incrementing the length of shortest paths requires one edge removal per
increment implies that $a_x = b_x$ for each~$x$. The rest of the proof
of the forward direction follows by defining $f_S(x) \coloneq q-a_x+1$, i.e.,
whenever $x \prec_\calP y$ we must have $a_x \geq a_y$ so $f_S$ as we now define
it is an order-preserving map. 
Likewise, the proof of the backward direction
follows, except that we now consider shortest paths and that 
we construct $S_f$
from~$f$ by deleting the edges of the $q-f(x)$ \emph{shortest} paths from $s$ to~$u_x$
and of the $f(x)-1$ \emph{shortest} paths from~$v_x$ to~$t$.
This concludes the proof of \cref{thm:w1hard}.
\end{toappendix}

Note that this is in contrast with the non-probabilistic case, which as we
recalled is FPT (\cref{thm:nonprobatw}), and also with the fact that ST-CON is FPT parameterised by treewidth on probabilistic graphs. We now focus on proving \cref{thm:tract}
in this section, before turning in the next section to the intractability of
the exact-length variants. 

We first deal with the at-most-length variants (for both paths and walks),
which follows a dynamic-programming approach and introduces the key ideas. We then 
adapt this to the at-least-length variant for walks following similar lines. We
last cover the more complicated case of the at-least-length variant for simple
paths: there the dynamic programming algorithm is more elaborate (to handle
both the length requirements and the simple-path requirements), in particular
it generalises the tractability of $\lppg$ (see \cref{thm:nonprobatw}) to
probabilistic graphs.

Before we start, we quickly generalise the notation $G_b$ to
probabilistic graphs in the expected way, as follows. For $\calG$ a
probabilistic graph with underlying non-probabilistic graph $G$,
for $\calS = (S, \{S_b\}_{b \in V(S)})$ a tree decomposition of $G$, and
for $b$ a node of $S$, we write $\calG_b$ the probabilistic graph
whose underlying non-probabilistic graph is $G_b$ and where the edge
probabilities are as indicated by~$\calG$. Notice that, by the properties of nice tree
decompositions, if $b$ is a \textsc{Join} node with children $b_1,b_2$, then the probability distribution of~$\calG_b$
is the independent product of those of $\calG_{b_1}$
and $\calG_{b_2}$.

\begin{toappendix}
\subsection{Proofs for At-Most Length Variants}
\end{toappendix}
\subsection{At-Most Length Variants: $\lppl$ and $\lpwl$}\label{sub:tractless}
Remember that $\lppl$ and $\lpwl$ are different names for the exact same problem; we will use the name $\lppl$ here. Further note that in our subsequent
discussion we shall assume without loss of generality that $\ell < |V|$. This is because if $\ell \geq |V|$, then the probability that a path of length at most
$\ell$ exists is the same as the probability that a path of length at
most $|V|-1$ exists, hence it suffices to look at the instances where
$\ell < |V|$.

To prove \cref{thm:tract} in the case of $\lppl$, we introduce an intermediate problem,
that we call the \emph{shortest length probability distribution
computation} problem ($\slpdf$). The plan is to then reduce $\lppl$
to $\slpdf$, and then to show that $\slpdf$ is tractable on probabilistic graphs of bounded treewidth.

\myparagraph{Shortest Length Probability Distribution Computation}
In the \emph{shortest length probability distribution computation} problem, we are given a probabilistic graph $\G$, two vertices~$s$ and~$t$, and our aim is to compute, for every integer $\ell$ ranging from $0$ to $|V|-1$, the probability that the length of the shortest path from $s$ to $t$ 
is $\ell$. In addition, we will want to compute the probability that no path from $s$ to $t$ exists at all: in this case we say that the shortest path length from $s$ to $t$ is $\bot$. Formally:

\begin{center}
\fbox{\begin{tabular}{lp{10cm}}
    \small{PROBLEM} : & $\slpdf$ \hspace{.5cm}(\emph{Shortest Length Probability Distribution Computation})
\\
{\small INPUT} : & A probabilistic graph $\G = (V,E,\pi)$, two vertices $s,t\in V$\\
{\small OUTPUT} : &  For every $\ell  \in \{0,1,\ldots, |V|-1\} \cup \{\bot\}$ the probability that the shortest path from $s$ to $t$ is of length $\ell$
\end{tabular}}
\end{center}

We now reduce $\lppl$ to $\slpdf$. Let $\G$ be a probabilistic graph. Consider the $\slpdf$ output for $\G$ and the vertices $s$ and $t$, seen as a table $T$ with $T[\ell]$ giving the probability that the shortest path from~$s$ to~$t$ has length $\ell$, for $\ell \in \{0,1, \ldots, |V|-1\} \cup \{\bot\}$.
Since all of these correspond to mutually exclusive events, the probability that there is a path of length at most $\ell$ from~$s$ to~$t$ can simply be computed as
$\sum_{k=0}^{\ell}T[k]$.
Therefore, to prove \cref{thm:tract} in the case of $\lppl$, it suffices to obtain a
polynomial time algorithm for $\slpdf$ on bounded-treewidth
probabilistic graphs. The key idea is to compute the distribution
over \emph{shortest path length functions} for each node of the tree
decomposition; we explain this next.

\myparagraph{Shortest path length functions} 
Given a graph $G=(V,E)$, define the \emph{shortest path length function
for $G$}, written $\spl_G\colon V\times V \to \{0,1, \ldots, |V|-1\} \cup
\{\bot\}$, to be the function such that for every $u,v \in V$,
the value $\spl_G(u,v)$ equals the length of a shortest path from~$u$ to~$v$
in $G$, and $\bot$ if no path from $u$ to~$v$ exists.  
For $A\subseteq V$ a
set of vertices of $G$, we denote by $\spl_{G,A}$ the restriction
of $\spl_G$ to the domain $A\times A$.

Let $\calG$ be the probabilistic graph given as input of $\slpdf$,
with $\calS = (S, \{S_b\}_{b \in V(S)})$ a nice tree decomposition.
The key idea to solve $\slpdf$ is to compute, by bottom-up induction on
$\calS$, for every node $b$ of $\calS$ and for every possible
function~$f \colon S_b\times S_b \to \{0,1, \ldots, |V|-1\} \cup \{\bot\}$,
the probability that, in a random realisation $H_b$ 
of~$\calG_b$,
the function $f$ is precisely equal to $\spl_{H_b,S_b}$.
Note that, because there are $|V|^{\calO(\treewidth(G)^2)}$ such
functions for each node $b$, the resulting algorithm will be in XP
parameterised by treewidth, and not FPT as was the case of
\cref{thm:nonprobatw} for non-probabilistic graphs.
Formally:

\begin{toappendix}
  \label{apx:lppl}
\end{toappendix}

\begin{toappendix}
In this section we first prove \cref{induction1} used for showing the tractability of $\lppl$ for probabilistic graphs of bounded treewidth (refer \cref{sub:tractless}), by providing an algorithm that computes the shortest path length function for each node in a bottom-up manner (\cref{alg:TreeDP}). The computation is fairly simple when the node under consideration is an $\IntroVertNode{}$ node, a $\ForgetVertNode{}$ node, or an $\IntroEdgeNode{}$ node. The case of a $\JoinNode{}$ node is more interesting, as given a realisation $H_b$ of a $\JoinNode{}$ node $b$ with children $b_1$ and $b_2$ which can be obtained by picking independently realisations $H_{b_1}$ and $H_{b_2}$ of $\mathcal{G}_{b_1}$ and $\mathcal{G}_{b_2}$, we need additional machinery to determine the shortest path length function $\spl_{H_b}$ from the shortest path length functions $\spl_{H_{b_1}}$, $\spl_{H_{b_2}}$. In particular we shall use the fact that it is possible to determine, in a weighted directed graph, the length of the shortest path from a given vertex $u$ to another $v$. To that effect, we define the \emph{All Pairs Shortest Path Problem}.  

\myparagraph{All Pairs Shortest Path Problem} In this problem, a weighted
directed graph $G = (V, \delta \colon V \times V \to \NN{} \cup \{\bot\})$ is given and we must determine the length of the shortest path from vertex $u$ to vertex $v$ for all such $(u,v)$ pairs of vertices, where $\bot$ denotes that the edge does not exist, and where the values returned can be $\bot$ when no path exists from one vertex to another. 
(In other words, $\bot$ behaves like $+\infty$, but we do not use $+\infty$ because it is used with a different semantics in \cref{sec:atleastwalk}.)
Here the length of the path refers to the sum of weights of its constituent edges.
The problem is known to be solvable in polynomial time ($\mathcal{O}(|V|^3)$)
using algorithms such as the Floyd-Warshall algorithm. From now on we denote by
$\apsp{}$ an algorithm that solves in polynomial time the following All Pairs Shortest Path Problem:

\begin{center}
\fbox{\begin{tabular}{lp{10cm}}
    \small{Algorithm}: & $\apsp$($G$) :  \hspace{.5cm}(\emph{All Pairs Shortest Path Algorithm})
\\
{\small INPUT}: & A weighted graph $G = (V, \delta \colon V \times V \to \NN{} \cup \{\bot\})$
\\
{\small OUTPUT}: & A table $T$ with values in $\NN \cup \{\bot\}$ such that for any pair of vertices $u,v \in V$, the length of a shortest path from~$u$ to~$v$ in $G$ is given by $T(u,v)$ (with $\bot$ denoting that no such path exists).
\end{tabular}}
\end{center}

\noindent Next, we define a \emph{change-over vertex} of a path $P$ from a vertex $s$ to another vertex $t$. 

\begin{definition}\label{def:changeoververtices}
Let $G_1 = (V_1,E_1)$ and $G_2 = (V_2,E_2)$ be two graphs, with $E_1 \cap E_2 =
\emptyset $ and possibly some vertices in common, and let $G = (V_1 \cup V_2,
E_1 \uplus E_2)$ be the union of these two graphs.  Let $P$ be any path from $s$
to $t$ in $G$. A \emph{change-over vertex} of $P$ is a vertex $v$ satisfying either of the following properties:

\begin{itemize}
    \item $v$ is an endpoint of $P$, namely $v = s$ or $v = t$.
    \item Otherwise, letting $v_1$ and $v_2$ be the vertices immediately before
      and after $v$, then the edge joining $v_1$ and $v$ belongs to $E_1$ and
      the edge joining $v$ and $v_2$ belongs to $E_2$, or vice versa.
\end{itemize}
\end{definition}

Now, we prove the key claim that makes it possible to compute the shortest path
length function for the $\JoinNode{}$ node in \cref{alg:TreeDP}. This essentially depends on the fact that bags of tree decomposition are separators.

\begin{claim}\label{changeclaim}
  Let $\calG = (V,E,\pi)$ be a probabilistic graph with 
  underlying
  non-probabilistic graph $G=(V,E)$, 
  let $\calS$ be a nice tree decomposition of $\calG$, and 
  let $b$ be a \textsc{Join} node with children ${b_1}$ and ${b_2}$.
  Note that $S_b=S_{b_1}=S_{b_2}$. Recall that we denote by $\mathcal{G}_{b_1}$ and $\mathcal{G}_{b_2}$ the probabilistic subgraphs induced by the subtrees rooted at $b_1$ and $b_2$ respectively. Let $G_1=(V_1,E_1)$ and $G_2=(V_2,E_2)$ be realisations of $\mathcal{G}_{b_1}$ and $\mathcal{G}_{b_2}$ respectively. Define $H_b = (V_1 \cup V_2, E_1 \cup E_2)$ to be the graph formed by taking the union of $G_1$ and $G_2$. 
  Consider any path $P$ in $H_b$ from a vertex $s$ to a vertex $t$ where $s,t \in S_b$. Then:

    \begin{enumerate}
        
    \item \label{enum1} Any \emph{change-over vertex} belonging to the path $P$ must necessarily belong to $S_b$.

    \item \label{enum2} Define $H(S_b, \delta)$ to be the weighted directed
      graph with set of vertices $S_b$ and with weights $\delta(v_i,v_j) = \min(f_1(v_i,v_j),f_2(v_i,v_j))$ for all $v_i,v_j \in S_b$, where 
      $f_1(v_i,v_j) \coloneq \spl_{G_1,S_b}(v_i,v_j)$ and 
      $f_2(v_i,v_j) \coloneq \spl_{G_2,S_b}(v_i,v_j)$,
      where $\min$ ignores~$\bot$ in its arguments except when both arguments are~$\bot$ in which case it evaluates to~$\bot$.
      Then we have $\spl_{H_b,S_b} = f$, where $f$ is given as follows: For every $v_i,v_j \in S_b$, $f(v_i,v_j)$ equals the length of a shortest path from $v_i$ to $v_j$ in the weighted directed graph $H(S_b, \delta)$
    \end{enumerate}

\end{claim}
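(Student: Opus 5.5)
For part~\ref{enum1}, I will use the separator property of the bag $S_b$. Let $v$ be a change-over vertex of $P$ that is not an endpoint. Then $v$ is incident to an edge of $E_1$ and to an edge of $E_2$. Edges of $E_1$ are introduced in the subtree rooted at $b_1$, so every endpoint of such an edge appears in some bag of that subtree; hence $v \in V_{b_1}$, and likewise $v \in V_{b_2}$. The subtrees rooted at $b_1$ and $b_2$ are disjoint, so condition (iii) of tree decompositions says the nodes whose bags contain $v$ form a connected subtree meeting both of them. That subtree must therefore contain $b$, which gives $v \in S_b$. The endpoints $s,t$ lie in $S_b$ by hypothesis.

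For part~\ref{enum2}, I fix $v_i,v_j \in S_b$ and prove the two inequalities, writing $\bot$ for $+\infty$.

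\emph{First, $f(v_i,v_j) \leq \spl_{H_b}(v_i,v_j)$.} Take a shortest path $P$ from $v_i$ to $v_j$ in $H_b$. List its change-over vertices $v_i=c_0,c_1,\dots,c_m=v_j$ in order; by part~\ref{enum1} they all lie in $S_b$. Between consecutive change-over vertices, every edge of $P$ comes from the same $E_r$. So each segment from $c_{k-1}$ to $c_k$ is a path in $G_1$ or in $G_2$, and its length is at least $\min(f_1,f_2)(c_{k-1},c_k)=\delta(c_{k-1},c_k)$. Summing over the segments gives a walk $c_0,\dots,c_m$ in $H(S_b,\delta)$ of weight at most $|P|$. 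Hence the shortest-path value $f(v_i,v_j)$ is at most $|P|$. If no path exists in $H_b$, the inequality is trivial.

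\emph{Second, $\spl_{H_b}(v_i,v_j) \leq f(v_i,v_j)$.} Take a shortest path $v_i=c_0,\dots,c_m=v_j$ in $H(S_b,\delta)$. Each edge $(c_{k-1},c_k)$ has weight $\delta(c_{k-1},c_k)$, which is realised by a path of that length in $G_1$ or in $G_2$, and both are subgraphs of $H_b$. Concatenating these paths gives a walk in $H_b$ from $v_i$ to $v_j$ of length $f(v_i,v_j)$. A shortest walk can be shortcut to a path, so $\spl_{H_b}(v_i,v_j)$ is at most this length. If $f(v_i,v_j)=\bot$, the inequality is trivial.

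The main care point is the decomposition in the first inequality, since $P$ may revisit the bag $S_b$ at vertices that are not change-over vertices. I will cut $P$ only at change-over vertices, which keeps each segment inside a single $G_r$. It is also worth noting that the segments may revisit vertices of $S_b$ without harm: we only need upper bounds on their lengths. Finally, the $\bot$ conventions for $\min$ and for path weights must be handled consistently in both directions.
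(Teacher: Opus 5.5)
Your proof is correct and follows the same route as the paper. Part~1 uses the connectedness condition of the tree decomposition to force any vertex seen in both child subtrees into $S_b$. Part~2 cuts a shortest $H_b$-path at its change-over vertices for one inequality, and expands a shortest path of $H(S_b,\delta)$ into paths of $G_1$ or $G_2$ for the other. Your version is slightly tidier than the paper's: you only need each segment's length to be at least $\delta$ (the paper claims equality), and you explicitly shortcut the concatenated walk to a path, a step the paper glosses over.
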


\begin{proof} 
  We first prove \Cref{enum1}. Let $w$ be a change-over vertex. If $w$ is $s$
or $t$ then $w\in S_b$ by hypothesis. Otherwise,
there are $w',w''\in V$ with $(w',w)$ an edge of
$G_1$ and $(w,w'')$ an edge of $G_2$. By definition
of $G_1$ and $G_2$, this implies that $w$ occurs both
in the subtree rooted at $b_1$ and in the subtree
rooted at $b_2$, hence by the connectedness property
of tree decompositions $w$ must also occur in $S_b$.

We now prove \Cref{enum2}. First, we  prove that, for any $s,t \in S_b$, the
shortest path from $s$ to $t$ in $H(S_b,\delta)$ is no longer than the shortest path from $s$ to $t$ in $H_b$. Let $P$ be any shortest path from  $s$ to $t$ in $H_b$.
Let the change-over vertices appearing in $P$, in that order,
be $u_1,u_2, \dots, u_k$ ($s=u_1, t=u_k$); by Item 1 we know that they are all in $S_b$. Moreover by
the definition of change-over vertex, the
edges appearing between any two consecutive
change-over vertices $u_i$ and  $u_{i+1}$ for $i
\in [1, 2, 3, \dots, k-1]$ lie exclusively in either
$E_1$ or $E_2$. But the shortest path from $u_i$
to $u_{i+1}$ lying exclusively in $G_1$ has length
$f_1(u_i,u_{i+1})$ whereas the shortest path lying
exclusively in $G_2$ has length $f_2(u_i,u_{i+1})$,
therefore if we consider the shorter of these two
paths then it has length
$\min(f_1(v_i,v_j),f_2(v_i,v_j))$ and is shorter than
any other path lying exclusively in $G_1$ or $G_2$.
Therefore the length of the section of the path $P$ between
$u_i,u_{i+1}$ must be equal to
$\min(f_1(u_i,u_{i+1}),f_2(u_i,u_{i+1}))$ (otherwise $P$ would not be a shortest path from $s$ to $t$). The total
length of the path $P$ is therefore
$\sum^{k-1}_{i=1}\min(f_1(u_i,u_{i+1}),f_2(u_i,u_{i+1}))$.
Now consider the path $s=u_1, u_2, \ldots, u_k=t$ in $H(S_b, \delta)$. Observe
that the length of this path is $\sum^{k-1}_{i=1}\delta(u_i,u_{i+1})$ which is
precisely the length of $P$, i.e., the length of any shortest path from $s$ to
$t$ in $H_b$.  This establishes that, indeed, the shortest path from~$s$ to~$t$ in~$H(S_b,\delta)$ is no longer than the shortest path from~$s$ to~$t$ in~$H_b$.

Next we show that the path $u_1, u_2, \ldots, u_k$ is indeed the shortest path
in $H(S_b, \delta)$. Suppose by contradiction that there is a shorter path $v_1,
v_2, \dots, v_l$ where $v_1 = s, v_l = t $ and
$\sum^{l-1}_{j=1}\delta(v_j,v_{j+1}) < \sum^{k-1}_{i=1}\delta(u_i,u_{i+1})$.
Then we replace every edge $(v_i,v_{i+1})$ of weight $\delta(v_i,v_{i+1})$ with a path of equal length in $H_b$: if $\delta(v_i,v_{i+1}) = f_1(v_i,v_{i+1})$ then replace the edge with a shortest path from $v_i$ to $v_{i+1}$ in $G_1$, otherwise, if $\delta(v_i,v_{i+1}) = f_2(v_i,v_{i+1})$, then replace it with a shortest path from $v_i$ to $v_{i+1}$ in~$G_2$.
Thus, we now obtain a path from $s$ to $t$ in $H_b$ with length
$\sum^{l-1}_{i=1}\delta(v_i,v_{i+1})$. Further by assumption
$\sum^{l-1}_{i=1}\delta(v_i,v_{i+1}) < \sum^{k-1}_{j=1}\delta(u_j,u_{j+1})$.
However, $\sum^{k-1}_{j=1}\delta(u_j,u_{j+1})$ is the length of $P$, a shortest
$s$-$t$ path in $H_b$. Thus, we arrive at a contradiction. Hence, $u_1, u_2,
\ldots, u_k$ is indeed the shortest path from $s$ to $t$ in $H(S_b, \delta)$,
and its length is the same as that of shortest path from $s$ to $t$ in $H_b
(=\sum^{k-1}_{i=1} \delta(u_i,u_{i+1}))$, thus proving the claim.
\end{proof}

Having established \cref{changeclaim}, we are now in a position to prove \cref{induction1}, whose statement we recall here:
\end{toappendix}
\begin{claimrep}\label{induction1}
  We can compute in XP parameterised by the treewidth of~$\mathcal{G}$,
  for each bag $S_b$ of~$\calS$,
  a table~$T_b$ 
  indexed with all possible functions $f\colon S_b
  \times S_b \to \{0,1, \ldots, |V|-1\} \cup \{\bot\}$ such that
  $T_b[f]$ equals the probability that in a random realisation
  $H_b$ of $\calG_b$, the function
  $\spl_{H_b,S_b}$ is equal to $f$.
\end{claimrep}
\begin{proofsketch}
  We proceed by bottom-up induction,
  and use the fact that $S_b$ is a separator of $G_b$
  in order to be able to multiply the probabilities of events that are
  independent, and the fact that the information we compute is precise enough so
  that we can sum over mutually exclusive events. More precisely, each
  realisation $H_b$ of $\mathcal{G}_b$ contributes to precisely one cell of the
  table $T_b$. Moreover, each realisation $H_b$ of a \textsc{Join} node $b$ with
  children $b_1$ and $b_2$ can be obtained by picking independently
  realisations $H_{b_1}$ and $H_{b_2}$ of $\mathcal{G}_{b_1}$ and
  $\mathcal{G}_{b_2}$, and then the shortest path length function
  $\spl_{H_b,S_b}$ is entirely determined from the shortest path length
  functions $\spl_{H_{b_1},S_{b_1}}$ and $\spl_{H_{b_2},S_{b_2}}$ of the
  children. Indeed, shortest paths in~$H_b$ between vertices of $S_b=S_{b_1}=S_{b_2}$ can be
  decomposed into some combination of shortest paths in $H_{b_1}$ between vertices of
  $S_{b_1}$ and shortest paths in $H_{b_2}$ between vertices of~$S_{b_2}$, all of which
  are shortest paths. Hence, the total probability of the realisations $H_b$
  achieving a specific shortest path length function can be computed by the
  total probability of the realisations $H_{b_1}$ and $H_{b_2}$ achieving
  specific shortest path length functions.
\end{proofsketch}

\begin{proof}
We prove that this is what \Cref{alg:TreeDP} does, by structural induction on the nice tree decomposition. Clearly for a leaf node, the claim
holds, since leaf nodes contain no
vertices and the only function describing the
shortest distance over the induced empty subgraph is the empty function
with probability one, which is
what our algorithm assigns. Therefore the base case
holds. Assume that the claim holds for the
children of~$b$. We now show that it holds for
$b$ as well:

\begin{description}
\item[Case 1: $b$ is a \ForgetVertNode{} node]
Note that our definition of the subgraph induced at a
node $b$ covers all the edges introduced until that
point and all the vertices (in the bag or any of its
descendants). Hence, let the child of $b$ be $b'$; then
$\G_b = \G_{b'}$. Therefore for any realisation $H$ of $\G_{b'}$
such that $\spl_{H,S_{b'}}$ is given by $f$, we can map it to the same
realisation of $\G_b$ occurring with the same probability, the shortest path
length function $f'=\spl_{H,S_b}$ now being given by the restriction of $f$ to
$S_b \times S_b$. Therefore the correctness of $T_b$ follows directly from
the correctness of $T_{b'}$ obtained by induction hypothesis.

  \item[Case 2: $b$ is an \IntroVertNode{} node.]
Nodes of this kind do not introduce any edges, and they
only add an isolated vertex to the associated
subgraph: they also have precisely one child. 
The possible realisations of $\mathcal{G}_{b'}$ are thus in bijection with those of~$\mathcal{G}_b$, where the bijection adds the single isolated vertex $v_0$.
Hence, the distribution on the shortest path length functions of the new node is obtained by considering all the outcomes at the child bag (where the distribution of the shortest path length function was correctly computed by induction hypothesis), and extending each such function to say that there is a path of length~$0$ from $v_0$ to itself, no path from $v_0$ to any other vertex or to $v_0$ from any other vertex, and that the shortest path length among other vertices is unchanged. The correctness of~$T_b$ again follows directly from that of~$T_{b'}$ obtained by induction hypothesis.

\item[Case 3: $b$ is a \JoinNode{} node]
Let $b$ be a \JoinNode{} node with
children $b_1$ and $b_2$.
Consider realisations of $\G_{b_1}$ and $\G_{b_2}$, say
$G_1 = (V_1,E_1)$ and $G_2 = (V_2,E_2)$ respectively, such that the functions
$f_1 = \spl_{G_1,S_{b_1}}$ and $f_2=\spl_{G_2,S_{b_2}}$ are known. Consider the union of $G_1$ and $G_2$ which we define by $H_b=(V_1 \cup V_2, E_1 \cup E_2)$. Now, note that $H_b=(V_1 \cup V_2, E_1 \cup E_2)$ is a realisation of~$\G_{b}$, the probabilistic graph associated with the node $b$. By \Cref{changeclaim}, we know that $\spl_{H_b,S_{b}}$ is given by the function $f$ where for all $u,v \in S_b$, the value $f(u,v)$ equals the length of the shortest path from $u$ to $v$ in the weighted graph $H(S_b, \delta)$ with weight function $\delta(u,v) = \min
(f_1(u,v),f_2(u,v))$, and this holds for every realisation.
Thus we can map pairs of realisations of $\G_{b_1}$
and $\G_{b_2}$ to realisations of $\G_{b}$, with the shortest path function of the realisation of~$\G_b$ being determined from the shortest path functions of~$\G_{b_1}$ and~$\G_{b_2}$. 
The resulting shortest path function is computed 
by \Cref{alg:joinatmost} as prescribed by \Cref{changeclaim}.

\item[Case 4: $b$ is an \IntroEdgeNode{} node]
\Cref{alg:TreeDP} treats the addition of the new edge
from $u_0$ to $v_0$ as taking a union of the graph
under consideration with another graph with a single
edge from $u_0$ to $v_0$ and all other vertices
isolated. For the latter graph, the probability
distribution over shortest path length functions is trivially known and then we
use \Cref{alg:joinatmost} like for the case of a \textsc{Join} node.
\qedhere
\end{description}
\end{proof}

Once we have this, it is then easy to see that we can solve
$\slpdf$ on probabilistic graphs of bounded treewidth in XP as
well. Indeed, take a nice tree decomposition of $\calG$ whose root
bag is precisely $\{s,t\}$, and then we apply the algorithm underlying \cref{induction1} on the root bag.
What we get as a result is, for every possible function $f\colon \{s,t\}^2 \to 
\{0,1,\ldots, |V|-1\} \cup \{\bot\}$, a value $T[f]$ that is the probability that in a random realisation $H$ of $\calG$, $f$ equals $\spl_{H,\{s,t\}}$.
Now, since these are mutually exclusive events, for $\ell \in \{0,1,\ldots, |V|-1\} \cup \{\bot\}$ the probability
that the shortest path from $s$ to $t$ is of length $\ell$ is
simply given by $\sum_g T[g]$, for $g$ ranging over all
functions $g\colon \{s,t\}^2 \to  \{0,1,\ldots, |V|-1\} \cup \{\bot\}$ such
that $g(s,t)=\ell$. This concludes the proof (sketch) of \cref{thm:tract} in the case of $\lppl$.

\begin{toappendix}
{\footnotesize
\begin{algorithm}
\DontPrintSemicolon
\SetKwProg{Fn}{Function}{:}{}
\SetKwFunction{TreeDP}{TreeDP}

\KwIn{A probabilistic graph $G=(V,E,\pi\colon E \to [0,1])$ of treewidth $k$  and a nice tree decomposition $\mathcal{S}$ with root $b_\mathsf{root}$}
\KwOut{A table $T$ indexed with all possible functions $S_{b_\mathsf{root}} \times S_{b_\mathsf{root}} \to \{0,1, \ldots, |V|-1\} \cup \{\bot\}$ , such that $T[f]$ is the probability that for any pair of vertices $(u,v)$ in $S_{b_\mathsf{root}} \times S_{b_\mathsf{root}}$ the length of the shortest path from $u$ to $v$ is $f(u,v)$ (with $\bot$ denoting that no such path exists). }

\Fn{\TreeDP{$b$}}{  
  \text{For each} $f\colon S_b \times S_b \to  \{0,1, \ldots, |V|-1\}\cup\{\bot\}$, \text{set}  $T_b[f] \gets 0$\;
    \If{$b$ is a \LeafNode{} node}{
      $T_b[\text{empty function}] \gets 1$\;
      \Return{$T_b$}
    }

    \ElseIf{$b$ is a \ForgetVertNode{} node with child $b'$}{
      $T_{b'} \gets$ \TreeDP{$b'$}\;
        \ForEach{$f' \colon S_{b'} \times S_{b'} \to \{0,1, \ldots, |V|-1\} \cup \{\bot\} $}{
            Let $f$ be the restriction of $f'$ to $S_b \times S_b$\;
            $T_b[f] \gets T_b[f] + T_{b'}[f']$\;
        }
        \Return{$T_b$}
    }

    \ElseIf{$b$ is an \IntroVertNode{}  node with child $b'$ introducing vertex $v_0$}{
        $T_{b'} \gets$ \TreeDP{$b'$}\;
        \ForEach{$f'\colon S_{b'} \times S_{b'} \to  \{0,1, \ldots, |V|-1\}\cup\{\bot\}$}{
            Define $f$ by
            $
            f(u,v) \coloneq
            \begin{cases}
              f'(u,v) & \text{if } u \ne v_0 \text{ and } v \ne v_0 \\
            0 & \text{if } u = v_0 \text{ and } v = v_0 \\
            \bot & \text{otherwise}
            \end{cases}
            $
            $T_b[f] \gets T_{b}[f]+T_{b'}[f']$\;
        }
        \Return{$T_b$}
    }

    \ElseIf{$b$ is a \JoinNode{} node with children $b_1, b_2$}{
        $T_{b_1} \gets$ \TreeDP{$b_1$}\;
        $T_{b_2} \gets$ \TreeDP{$b_2$}\;
        $T_b \gets \texttt{JoinComputation$\lppl$}(T_{b_1}, T_{b_2}, S_b)$\;
        \Return{$T_b$}
    }
    \ElseIf{$b$ is an \IntroEdgeNode{} node with child $b'$ introducing edge $(u_0,v_0)$}{
        $T_{b'} \gets$ \TreeDP{$b'$}\;
        Define $f_1$ by
        $
        f_1(u,v) \coloneq
        \begin{cases}
          1 & \text{if } u = u_0 \text{ and } v = v_0 \text{ and } u \neq v \\
        0 & \text{if } u = v \\
        \bot & \text{otherwise}
        \end{cases}
        $\\
      $f_2 \gets \{(u,u) \mapsto 0 \mid u \in S_b\} \cup \{(u,v) \mapsto \bot \mid u, v \in S_b, u \neq v\}$\;
        Define $T$ by
$
T[f] \coloneq
\pi(u_0,v_0)\,\mathbf{1}_{\{f=f_1\}}
+
\bigl(1-\pi(u_0,v_0)\bigr)\,\mathbf{1}_{\{f=f_2\}}
$\\
        $T_b \gets \texttt{JoinComputation$\lppl$}(T_{b'}, T, S_b)$\;
        \Return{$T_b$}
    }

}
\Return{$\TreeDP(b_\text{root})$}
\caption{\texttt{DPComputation$\lppl$($G=(V,E,\pi),\mathcal{S})$}: Compute the
probability distribution on the shortest path length function along a tree decomposition}
\label{alg:TreeDP}
\end{algorithm}
}

\begin{algorithm}[ht]

  \KwIn{A set of vertices $S_b$ and two tables $T_1$, $T_2$ indexed with all
  possible functions $S_b \times S_b \to \{0,1,\ldots, |V|-1\} \cup \{\bot\}$}
  \KwOut{A table $T$ such that for any function $f\colon S_b \times S_b \to  \{0,1,\ldots, |V|-1\}\cup\{\bot\}$,
  the value $T[f]$ is the probability that the shortest distance from vertex $u$ to $v$ in a realization of~$\calG_b$ is given by $f(u,v)$}

\text{For each} $f\colon S_b \times S_b \to  \{0,1,\ldots, |V|-1\}\cup\{\bot\}$, \text{set}  $T[f] \gets 0$\;

\For{\text{each pair} $(f_1,f_2)$ \text{where} $f_1,f_2\colon S_b \times S_b \to  \{0,1,\ldots, |V|-1\}\cup\{\bot\}$}{
$H \gets$ Weighted graph with $\delta(v_i,v_j) \coloneq \min(f_1(v_i,v_j) , f_2(v_i,v_j) )  \text{ }\forall v_i,v_j \in S_b$\;
$f \gets \apsp(H)$\\
$T[f] \gets T[f] + T_{1}[f_1] \times T_{2}[f_2]$
}
\Return($T$)
\caption{{\texttt{JoinComputation$\lppl$$(T_1,T_2,S_b)$}}}\label{alg:joinatmost}
\end{algorithm}

\end{toappendix}

\subsection{At-Least Length Walk Variant: $\lpwg$}
\label{sec:atleastwalk}

For $\lpwg$  we can modify the algorithm of the previous section for $\lppl$ such that, for every node $b$ in the tree decomposition,
we compute for every function $f$ from $S_b \times S_b \to \{0,1,\dots,|V|-1\} \cup \{\infty,\bot\}$, the probability that $f$ represents the \emph{longest walk} function, i.e., for every pair $u,v$ of vertices, $f(u,v)$ represents the longest walk length from vertex $u$ to $v$ in a random realisation of $\calG_b$. 
Note the additional possible value $f(u,v) = \infty$, which means that there are arbitrarily long walks from~$u$ to~$v$, i.e., there is some cycle which is reachable from~$u$ and from which $v$ can be reached.
Thus, the main difference, apart from going from shortest paths to longest walks, is that we have to take care of detecting cycles (which then corresponds to reporting $\infty$).

\begin{toappendix}
\subsection{Proofs for At-Least Length Simple Path Variants}
\label{apx:proofatleastlengthsimplepaths}
\end{toappendix}
\subsection{At-Least Length Simple Path Variant: $\lppg$}\label{sub:tractmore}

We now explain how to show \cref{thm:tract} in the case of $\lppg$, whose proof
is of a different flavour than the last two.
We reuse for this the notions of path set projections that we
introduced for the proof of \cref{thm:nonprobatw}, and combine them with the
ideas used for the two other problems to correctly compute probabilities.
We first note that we can again assume without
loss of generality that $\ell < |V|$ since we are looking for simple paths (if
$\ell \geq |V|$ we can safely return $0$).

We start by defining the \emph{longest length probability distribution computation}
($\llpdf$) problem and reduce $\lppg$ to $\llpdf$.

\myparagraph{Longest Length Probability Distribution Computation} In the
\emph{longest length probability distribution computation} problem, we are
given a probabilistic graph $\G$, two vertices $s$ and $t$, and our aim is to
compute, for every integer $\ell$ ranging from $0$ to $|V|-1$, the probability
that the length of the longest (simple) path from $s$ to $t$ is $\ell$. In
addition, we also seek to compute the probability that there does not exist a
path from $s$ to $t$, denoted $\bot$.
Formally:

\begin{center}

\fbox{\begin{tabular}{lp{10cm}}
    \small{PROBLEM} : & $\llpdf$ \hspace{.5cm}(\emph{Longest Length Probability Distribution Computation})
\\
{\small INPUT} : & A probabilistic graph $\G = (V,E,\pi)$, two vertices $s,t\in V$\\
{\small OUTPUT} : &  For every $\ell  \in \{0,1,\dots, |V|-1\} \cup \{\bot\}$
the probability that the longest simple path from $s$ to $t$ is of length $\ell$
\end{tabular}}

\end{center}

It is then clear that we can reduce $\lppg$ to $\llpdf$, in the
same way as we reduced $\lppl$ to $\slpdf$.
Therefore it suffices to explain how to solve $\llpdf$ in XP for probabilistic graphs of bounded treewidth.
To this end, remember the notion of path set projection from
\cref{sec:treewidth}, in particular the set of tuples of vertices $\psp_G(A)$
which we defined for a graph $G=(V,E)$ and set of vertices $A\subseteq V$ (where $A$ will be a bag).
Building from this, we define the notions of \emph{candidate path set projection} and of \emph{longest disjoint
path length function}:

\begin{definition}\label{def:cpsp}
  For $A$ a set of vertices, let $C_A$ be the full 
  clique over~$A$, i.e., the vertices of $C_A$ are $A$ and
  its edges are $\{(u,v) \mid u,v \in A, u\neq v\}$. We then define the set
  of \emph{candidate path set projections of $A$}, written $\cpsp_A$, to be
  $\psp_{C_A}(A)$. 
\end{definition}

In other words, an element of $\cpsp_A$ is simply a set of 
tuples of elements of $A$ such that every $v\in A$ occurs at most once in these tuples; and $\cpsp_A$ contains all such sets.

\begin{definition}\label{def:ldpl}
  Let $G=(V,E)$ be a graph and $A\subseteq V$. The \emph{longest
  disjoint path length function relative to $G$ and $A$}, denoted
  $\ldpl_{G,A}\colon \cpsp_A \to \{0,1,\ldots,|V|-1\}\cup\{\bot\}$, is
  the function such that, for every candidate path set projection
  $\Pi \in \cpsp_A$, the value $\ldpl_{G,A}(\Pi)$ is equal to the
  maximum length of an $A$-vertex-disjoint set of paths $\calP$ of
  $G$ such that $\proj_{G,A}(\calP) = \Pi$ if such a $\calP$
  exists, or to $\bot$ otherwise.
\end{definition}

To solve $\llpdf$ we will then compute, for every bag $S_b$ of the
tree decomposition and every possible function $f\colon \cpsp_{S_b} \to
\{0,1,\ldots,|V|-1\}\cup\{\bot\}$, the probability that in a random
realisation~$H_b$ of~$\calG_b$, the function $f$ is precisely equal
to $\ldpl_{H_b,S_b}$. Before we explain this, let us
show that this suffices to solve $\llpdf$. By applying
this to the root bag $S_{b_{\text{root}}}=\{s,t\}$, and noticing that
$G_{b_{\text{root}}} = G$, we
obtain, for every function $f\colon \cpsp_{S_{b_{\text{root}}}} \to
\{0,1,\ldots,|V|-1\}\cup\{\bot\}$, the probability that in a random
realisation~$H$ of $G$, the function $f$ equals $\ldpl_{H,S_{b_{\text{root}}}}$;
let us call this value $T[f]$. Now, let $\Pi$ be the specific path
set projection $\Pi = \{(s,t)\}$ (or $\Pi = \{(s)\}$ in case $s=t$), which is in
$\cpsp_{S_{b_{\text{root}}}}$. Then,
for $\ell \in \{0,1,\ldots,|V|-1\} \cup \{\bot\}$, the probability
that the longest path from $s$ to $t$ in $\calG$ has length exactly~$\ell$
is equal to $\sum_g T[g]$, where $g$ ranges over all functions that
map $\Pi$ to $\ell$.
This allows us to solve $\llpdf$, which we already argued
is enough to solve $\lppg$.
Thus, all that is needed to finish our XP algorithm for
$\lppg$ over bounded-treewidth probabilistic graphs 
is to show:

\begin{toappendix}

Having shown the correctness of \cref{induction1}, we now turn to the proof of
\cref{claim:induction2}, which was established in \cref{sub:tractmore} for
showing the tractability of $\lppg$ in probabilistic graphs of bounded
treewidth. Like in the case of $\lppl$ we provide an algorithm (\Cref{alg:TreeDPlongpath}) which computes, for every bag $S_b$ in the tree decomposition, the probability distribution over the longest disjoint path length functions of $S_b$ in $\G_b$. 
In other words, it computes, for every $f\colon \CPSP_{S_b} \to \{0,1, \ldots, |V|-1\}\cup\{\bot\}$,
the probability that for a random realisation $H_b$ of $\G_b$, 
we have $f=\ldpl_{H_b,S_b}$. Further, like in the case of $\lppl$, the most interesting case of \Cref{alg:TreeDPlongpath} is that of the $\JoinNode{}$ node for which it uses the notion of an \emph{extension} of a candidate path set projection~$\Pi$, which we define next.

\begin{definition}
Let $A$ be a finite set of vertices. Consider any $\Pi \in \CPSP_A$, where
$\CPSP_A$ is the set of candidate path set projections of $A$ (recall
\cref{def:cpsp}). Let $\Pi= \{(v_{11},v_{12},\ldots, v_{1k_{1}}),\allowbreak
(v_{21},v_{22},\ldots, v_{2k_{2}}), \ldots, (v_{\lambda 1},v_{\lambda 2},\ldots,
v_{\lambda k_{\lambda}})\}$, thus $\Pi$ is a collection of $\lambda$ disjoint tuples. Let $\Pi'$ be a set such that $\Pi'$ satisfies the following property:

   For each non-singleton tuple $(v_{1},v_{2}, \dots, v_{k}) \in \Pi$, the set $\Pi'$ contains a collection of tuples formed by partitioning the index set $[k]$ into consecutive intervals and restricting $(v_{1},v_{2}, \dots, v_{k}) \in \Pi$ to each tuple, with the endpoints overlapping.

More formally, we keep each singleton tuple as-is, and we replace every non-singleton tuple $(v_1, v_2 \dots, v_k)$ by a finite sequence of tuples\\
$(\mathbf v^{(1)}, \mathbf v^{(2)}, \dots, \mathbf v^{(m)})$
for which there exist indices
\[
1 = a_1 < b_1 = a_2 < b_2 = a_3 < \cdots < b_{m-1} = a_m < b_m = k
\]
such that
\[
\mathbf v^{(j)} = (v_{a_j}, v_{a_j+1}, \dots, v_{b_j})
\quad \text{for all } j = 1,\dots,m.
\]

 Then we say that $\Pi'$ is an extension of $\Pi$. Further, we define $E(\Pi)
 \coloneq \{ \Pi' \mid \Pi' \text{ is an extension}\\ \text{ of } \Pi \}$ as the set of all possible extensions of $\Pi$.
\end{definition} 

\begin{example}
 For example let us consider $\Pi_0 = \{(v_1,v_2, v_3, v_4, v_5), (v_6,
 v_7,v_8)\}$. Then, by replacing $(v_1,v_2, v_3, v_4, v_5)$ with
 $(v_1,v_2),(v_2,v_3,v_4),(v_4,v_5)$ and replacing $(v_6, v_7,v_8)$ with
 $(v_6,v_7),(v_7,v_8)$, we get that
 $\Pi_0'=\{(v_1,v_2),(v_2,v_3,v_4),(v_4,v_5),(v_6,v_7),(v_7,v_8)\}$ is an
 extension of $\Pi_0$.
\end{example}

We now show the claim corresponding to the case of \textsc{Join} nodes (note the similarity with \cref{changeclaim}):
 
\begin{claim}\label{changeclaimlong}
  Let $\calG = (V,E,\pi)$ be a probabilistic graph with 
  underlying
  non-probabilistic graph $G=(V,E)$,
  let $\calS$ be a nice tree decomposition of $\calG$, and 
  let $b$ be a \textsc{Join} node with children $b_1$ and $b_2$.
  Note that $S_b=S_{b_1}=S_{b_2}$. Recall that we denote by $\mathcal{G}_{b_1}$ and $\mathcal{G}_{b_2}$ the probabilistic subgraphs induced by the subtrees rooted at $b_1$ and $b_2$ respectively. Let $G_1=(V_1,E_1)$ and $G_2=(V_2,E_2)$ be realisations of $\mathcal{G}_{b_1}$ and $\mathcal{G}_{b_2}$ respectively. Define $H_b = (V_1 \cup V_2, E_1 \cup E_2)$ to be the graph formed by taking the union of $G_1$ and $G_2$.
  Let $f_1=\ldpl_{G_1,S_{b_1}}$ and $f_2=\ldpl_{G_2,S_{b_2}}$(recall \cref{def:ldpl}). Then the longest disjoint path length function for $S_b$ in $H_b$, $\ldpl_{H_b,S_{b}}$ is given by $f\colon \CPSP_{S_b} \to \{0,1, \ldots, |V|-1\}\cup\{\bot\}$ where $f(\Pi)\coloneq \max_{\substack{\Pi_1, \Pi_2 \in \cpsp_{S_b} \\ \Pi_1 \cup \Pi_2 \in E(\Pi),\; \Pi_1 \cap \Pi_2 = \emptyset}} \left[ f_1(\Pi_1) + f_2(\Pi_2) \right]$ for each $\Pi \in \CPSP_{S_b}$,
  where $\bot$ in arithmetic expressions is evaluated as $-\infty$ (i.e., $\bot+x = \bot$ for each $x \in \NN \cup \{\bot\}$, and $\bot$ is smaller than all values of~$\NN$ so it ignored in the $\max$ unless all arguments are $\bot$ in which case the $\max$ evaluates to~$\bot$).
\end{claim}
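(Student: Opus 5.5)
We prove the equality $\ldpl_{H_b,S_b}(\Pi)=f(\Pi)$ for each fixed $\Pi\in\cpsp_{S_b}$ by two inequalities, in the same way as \cref{changeclaim}. The key tool is once more the notion of change-over vertex (\cref{def:changeoververtices}) together with \cref{enum1} of \cref{changeclaim}. Applied to each path of an $S_b$-vertex-disjoint path set, this item shows that every change-over vertex lies in $S_b$. The argument only uses that $S_b$ separates the edges of $G_1$ from those of $G_2$, so it transfers verbatim.

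For the inequality $\ldpl_{H_b,S_b}(\Pi)\le f(\Pi)$, take an $S_b$-vertex-disjoint set of paths $\calP$ of $H_b$ with $\proj_{H_b,S_b}(\calP)=\Pi$ and maximum length (if none exists the left side is $\bot$ and there is nothing to show). Cut each non-trivial path $P\in\calP$ at its change-over vertices into maximal subpaths whose edges lie entirely in $E_1$ or entirely in $E_2$. Single-vertex paths are kept and assigned to, say, the $G_1$ side; we must check that $f_1$ accounts for singletons consistently, i.e.\ that a singleton $(v)$ in $\Pi$ can appear in $\Pi_1$. Each piece has endpoints in $S_b$, and the pieces of one path cover its projection tuple by consecutive intervals with overlapping endpoints. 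Hence $\Pi_1\cup\Pi_2\in E(\Pi)$, where $\Pi_1$ and $\Pi_2$ are the projections of the $G_1$-pieces and the $G_2$-pieces. The pieces are vertex-disjoint within each side: two pieces of the same path share only change-over endpoints, and consecutive pieces lie on different sides, so same-side pieces of one path are disjoint. Thus $\Pi_1\in\psp_{G_1}(S_b)$ and $\Pi_2\in\psp_{G_2}(S_b)$. These sets are disjoint as sets of tuples, since a tuple with at least two vertices is realised on only one side. Summing lengths gives $\length(\calP)\le f_1(\Pi_1)+f_2(\Pi_2)$.

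For the inequality $f(\Pi)\le\ldpl_{H_b,S_b}(\Pi)$, fix a maximising pair $(\Pi_1,\Pi_2)$ with optimal witnesses $\calP_1$ in $G_1$ and $\calP_2$ in $G_2$. We glue the pieces back together along the interval structure of the extension. The main obstacle is showing that the glued walks are \emph{simple} paths and that distinct glued paths are vertex-disjoint. Vertices of $S_b$ cause no trouble, because each occurs at most once in $\Pi$, and so appears only as a shared endpoint of consecutive pieces. For a vertex $w\notin S_b$ lying on a $G_1$-piece and on a $G_2$-piece, $w$ would belong to both $V_{b_1}$ and $V_{b_2}$, hence to $S_b$ by connectedness of the decomposition, a contradiction. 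Vertices of $V_1\cap V_2\setminus S_b$ need care: such vertices lie in $S_b$ by the running-intersection property, so every vertex outside $S_b$ belongs to only one side. Within one side, disjointness comes from $\calP_1$ or $\calP_2$ itself. The glued set then projects to $\Pi$, has length $f_1(\Pi_1)+f_2(\Pi_2)$, and the conventions on $\bot$ handle the cases where no witness exists.
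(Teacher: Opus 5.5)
Your proof is correct and follows the paper's route: you cut each path of $\calP$ at its change-over vertices (which lie in $S_b$ by \cref{changeclaim}) to get $\ldpl_{H_b,S_b}(\Pi)\le f(\Pi)$, and you obtain the reverse inequality by gluing optimal witnesses for $\Pi_1$ and $\Pi_2$. Your explicit check that the glued walks are simple and pairwise vertex-disjoint (using $V_{b_1}\cap V_{b_2}\subseteq S_b$ and that each vertex occurs at most once in $\Pi$) makes rigorous a step the paper only uses implicitly, when it argues that a better partition ``would contradict'' the maximality of $\calP$.
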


\begin{proof}

  Consider any set $\Pi \in \CPSP_{S_b}$ and let $\mathcal{P}$ be a set of $S_b$-vertex-disjoint paths in $H_b$, such that $\proj_{H_b,S_b}(\calP) = \Pi$.

  Next for each path $P \in \mathcal{P}$ we break it into multiple subpaths which consist of contiguous edges which lie entirely in $G_{1}$ or $G_{2}$. By doing so we obtain two sets $\mathcal{P}_1$ and $\mathcal{P}_2$ such that the set $\mathcal{P}_1$ corresponds to a $S_b$-vertex-disjoint set of paths in $G_1$ and the set $\mathcal{P}_2$ corresponds to a $S_b$-vertex-disjoint set of paths in $G_{2}$. Now we construct two sets $\Pi_1^{*}$ and $\Pi_2^{*}$ such that the projections of $P_1$ and $P_2$ are $\Pi_{1}^{*}$ and $\Pi_2^{*}$ respectively, i.e., $\proj_{G_1,S_b}(\calP_{1}) = \Pi_{1}^{*}$ and $\proj_{G_2,S_b}(\calP_{2}) = \Pi_{2}^{*}$.

  First we construct a set $\Pi'$ which is an extension of $\Pi$. For each $P \in \mathcal{P}$ consider $\proj_{H_b,S_b}(P)$, the projection of $P$ on $S_b$. For every $P \in \calP$, we obtain a projection $\proj_{H_b,S_b}(P) \in \Pi$ which we replace with a collection of tuples in $\Pi'$ as follows: Suppose that $\proj_{H_b,S_b}(P)$ is given by  $(v_{1},v_{2}, \dots, v_{k})$. Let $v_{a_1}=v_{1}, v_{a_2} \dots, v_{a_{m-1}}, v_{a_m},v_{k}$ denote the change-over vertices (see \cref{def:changeoververtices})
  of $P$ in $H_b$, then in $\Pi'$ we break each tuple at the locations of the change-over vertices. (Note that all change-over vertices must lie in $S_b$ due to \cref{changeclaim}, hence they must occur in the projection of $P$ on $S_b$). Therefore, $\Pi'$ contains a collection of disjoint tuples formed by partitioning the index set $[k]$ into consecutive intervals and restricting $(v_{1},v_{2}, \dots, v_{k}) \in \Pi$ to each tuple, with the endpoints overlapping. By repeating this procedure for all $P \in \mathcal{P}$, i.e., for all tuples in $\Pi$, we obtain $\Pi'$.

Now note that since $v_{a_1}=v_{1}, v_{a_2} \dots, v_{a_{m-1}}, v_{a_m},v_{k}$ are the change-over vertices of $P$ in $H_b$, every tuple $v^{(j)} \text{ for all } j = 1,\dots,m$ corresponds to the projection of a path in either $G_{1}$ or $G_{2}$, and concatenation of these paths gives $P$.

 Thus we partition $\Pi'$ into two sets such that for every $j$, if the tuple $v^{(j)}$  corresponds to a path in $G_{1}$ then we add $v^{(j)}$ to $\Pi_1^{*}$, else if the tuple $v^{(j)}$  corresponds to a path in $G_{2}$ then we add $v^{(j)}$ to $\Pi_2^{*}$. By construction, note that $\Pi'$ is an extension of $\Pi$, and that we have $\Pi_1^{*} \cup \Pi_2^{*}= \Pi'$, further it is easy to verify that $\proj_{G_1,S_b}(\calP_{1}) = \Pi_{1}^{*}$ and $\proj_{G_2,S_b}(\calP_{2}) = \Pi_{2}^{*}$.
  
  Note that the length of $\mathcal{P}$ in $H_b$ is same as the sum of the lengths of $\mathcal{P}_1$ in $G_{1}$ and $\mathcal{P}_2$ in $G_{2}$ which are bounded above by $f_1(\Pi_1^{*})$ and $f_2(\Pi_2^{*})$ respectively.

  Hence the following inequality holds: \\
  Length of $\mathcal{P}$ $\leq f_1(\Pi_1^{*})+f_2(\Pi_2^{*}) \leq \max_{\substack{\Pi_1, \Pi_2 \in \cpsp_{S_b} \\ \Pi_1 \cup \Pi_2 \in E(\Pi),\; \Pi_1 \cap \Pi_2 = \emptyset}} \left[ f_1(\Pi_1) + f_2(\Pi_2) \right]$.

  In the case that $\mathcal{P}$ corresponds to the longest disjoint path length of $\Pi$, i.e., it has the greatest length among all possible sets of $S_b$-vertex disjoint paths in $H_b$ which project to $\Pi$, the length of $\mathcal{P}$ must equal $f(\Pi)$ by definition, further, $\mathcal{P}_1$ and $\mathcal{P}_2$ must also have the maximum length among all $S_b$-vertex disjoint paths which project to $\Pi_1$ and $\Pi_2$ in the graphs $G_{1}$ and $G_{2}$ respectively, so their lengths must equal $f_1(\Pi_1^{*})$ and $f_2(\Pi_2^{*})$. Therefore, the first inequality is tight and is satisfied with equality. Further among all partitions $(\Pi_1,\Pi_2)$ of an extension of $\Pi$, setting $\Pi_1\coloneq \Pi_1^{*}$ and $\Pi_2\coloneq \Pi_2^{*}$ (their union being the extension $\Pi'$) must maximise $f_1(\Pi_1)+f_2(\Pi_2)$, else it would contradict the fact that $\mathcal{P}$ corresponds to the longest disjoint path length of $\Pi$. Hence the second inequality must hold with equality as well.

  Hence, we obtain $f(\Pi)=\max_{\substack{\Pi_1, \Pi_2 \in \cpsp_{S_b} \\ \Pi_1 \cup \Pi_2 \in E(\Pi),\; \Pi_1 \cap \Pi_2 = \emptyset}} \left[ f_1(\Pi_1) + f_2(\Pi_2) \right]$.
\end{proof}

\begin{algorithm}
\DontPrintSemicolon
\SetKwProg{Fn}{Function}{:}{}
\SetKwFunction{TreeDP}{TreeDP}

\KwIn{A probabilistic graph $G=(V,E,\pi \colon E \to [0,1])$ of treewidth $k$  and a nice tree decomposition $\mathcal{S}$ with root $b_\mathsf{root}$ }
\KwOut{A table $T$ indexed with all possible functions $
\CPSP_{S_{b_\mathsf{root}}} \to \{0,1, \ldots, |V|-1\} \cup \{\bot\}$ , such that $T[f]$ is the probability that for any random realisation $H_b$ of $\G$, we have $f=\ldpl_{H_b,S_{b_\text{root}}}$}

\Fn{\TreeDP{$b$}}{  
  $T_b \gets$ \text{for each} $f\colon \CPSP_{S_b} \to  \{0,1, \ldots, |V|-1\}\cup\{\bot\}$, \text{set}  $T_b[f] =0$\;
    \If{$b$ is a \LeafNode{} node}{
    Let $f_{\emptyset}$ be the function on
$\CPSP_{\emptyset}=\{\emptyset\}$ defined by
$f_{\emptyset}(\emptyset)=0$\;
$T_b[f_{\emptyset}] \gets 1$\;
      
      \Return{$T_b$}
    }

    \ElseIf{$b$ is a \ForgetVertNode{} node with child $b'$}{
      $T_{b'} \gets$ \TreeDP{$b'$}\;
      \ForEach{$f' \colon \CPSP_{S_{b'}} \to \{0,1, \ldots, |V|-1\} \cup \{\bot\}$}{
        Define $f$ by $f(\Pi) \coloneq \max_{\substack{\Pi' \in \CPSP_{S_{b'}}\\\Pi' \text{ projects to }
        \Pi}} f'(\Pi')$\;
            $T_b[f] \gets T_b[f] + T_{b'}[f']$\;
        }
        \Return{$T_b$}
    }

    \ElseIf{$b$ is an \IntroVertNode{} node with child $b'$ introducing vertex $v_0$}{
        $T_{b'} \gets$ \TreeDP{$b'$}\;
        \ForEach{$f'\colon \CPSP_{S_{b'}} \to  \{0,1, \ldots, |V|-1\}\cup \{\bot\}$}{
          Define a function $f \colon \CPSP_{S_{b}} \to  \{0,1, \ldots, |V|-1\}\cup \{\bot\}$\;
          \ForEach{$\Pi \in \CPSP_{S_b}$}{
          $f(\Pi) \coloneq 
    \begin{cases}
    \bot & \text{if $\exists$ a 
    non-singleton 
    tuple } t \in \Pi \text{ with } v_0 \in t\\
    f'(\Pi') & \text{otherwise, for $\Pi'$ the restriction of $\Pi$ to $S_{b'}$}
    \end{cases}$\;

    }
    $T_b[f] \gets T_b[f] + T_{b'}[f']$\;
        }
        \Return{$T_b$}
    }

    \ElseIf{$b$ is a \JoinNode{} node with children $S_{b_1}, S_{b_2}$}{
        $T_{b_1} \gets$ \TreeDP{$b_1$}\;
        $T_{b_2} \gets$ \TreeDP{$b_2$}\;
        $T_b \gets \texttt{JoinComputation$\lppg$}(T_{b_1}, T_{b_2}, S_b)$\;
        \Return{$T_b$}
    }

    \ElseIf{$b$ is an \IntroEdgeNode{} node with child $S_{b'}$ introducing edge $(u_0,v_0)$}{
        $T_{b'} \gets$ \TreeDP{$b'$}\;
        Set $f_1$ as
        $f_1(\Pi) \!\coloneq\!
        \begin{cases}
          1\!\! & \!\!\!\text{if } \Pi \text{ contains $(u_0,v_0)$ and no other 
          non-singleton tuple } \\
            0\!\! & \!\!\!\text{if } \Pi \text{ contains only singleton tuples } \\
        \bot\!\! & \!\!\!\text{ otherwise}
        \end{cases}$\\

        Define $f_2$ by 
        $f_2(\Pi) \coloneq
        \begin{cases}
          0 & \text{if } \Pi \text{ contains only singleton tuples } \\
        \bot & \text{otherwise}
        \end{cases}$\\
        \[
T[f] :=
\pi(u_0,v_0)\,\mathbf{1}_{\{f=f_1\}}
+
\bigl(1-\pi(u_0,v_0)\bigr)\,\mathbf{1}_{\{f=f_2\}}.
\]
        $T_b \gets \texttt{JoinComputation$\lppg$}(T_{b'}, T, S_b)$\;
        \Return{$T_b$}
    }

}
\Return{$\TreeDP(b_\text{root})$}
\caption{\texttt{DPComputation$\lppg$($G=(V,E,\pi),\mathcal{S})$}: Compute the
probability distribution on the longest disjoint path length functions along a tree decomposition}
\label{alg:TreeDPlongpath}
\end{algorithm}

\begin{algorithm}
  \KwIn{A set of vertices $S_b$ and two tables $T_1$, $T_2$ indexed with all possible functions $\CPSP_{S_b} \to \{0,1, \ldots, |V|-1\} \cup \{\bot\} $ 
  }
  \KwOut{A table $T$ such that for any function $f\colon \CPSP_{S_b} \to  \{0,1, \ldots, |V|-1\}\cup \{\bot\}$, the value $T[f]$ is the probability that in a random realisation $H_b$ of $\G_{b}$, we have $\ldpl_{H_b,S_b}= f$}

\text{For each} $f\colon \CPSP_{S_b} \to  \{0,1, \ldots, |V|-1\}\cup \{\bot\}$, \text{set}  $T[f] \gets 0$\;

\For{\text{each pair of functions} $(f_1,f_2)$, $f_1,f_2 \colon \CPSP_{S_b} \to  \{0,1, \ldots, |V|-1\}\cup\{\bot\}$}{

\ForEach{$\Pi \in \CPSP_{S_b} $}{
            Define $f \colon \CPSP_{S_b} \to \{0,1, \ldots, |V|-1\}\cup\{\bot\}$ as follows:\\
            $ f(\Pi) =\max_{\substack{\Pi_1, \Pi_2 \in \cpsp_{S_b} \\ \Pi_1 \cup \Pi_2 \in E(\Pi),\; \Pi_1 \cap \Pi_2 = \emptyset}} \left[ f_1(\Pi_1) + f_2(\Pi_2) \right]$\;
        }
$T[f] \gets T[f] + T_1[f_1] \times T_2[f_2]$
}
\Return($T$)
\caption{{\texttt{JoinComputation$\lppg$$(T_1,T_2,S_b)$}}}\label{alg:joinlongpath}
\end{algorithm}

We are now ready to prove \cref{claim:induction2}, whose statement we recall here:
\end{toappendix}

\begin{claimrep}
  \label{claim:induction2} We can compute in XP parameterised by
  the treewidth of $\calG$, for every bag~$S_b$, a table~$T_b$
  indexed with all possible functions $f\colon \cpsp_{S_b} \to
  \{0,1,\ldots,|V|-1\}\cup\{\bot\}$, such that $T_b[f]$ equals the
  probability that in a random realisation $H_b$ of $\calG_b$, $f$ is precisely $\ldpl_{H_b,S_b}$. 
\end{claimrep}
\begin{proof}

We prove that this is what \Cref{alg:TreeDPlongpath} does, by structural
induction on the nice tree decomposition. We show that the algorithm applied to
a bag~$b$ indeed computes, for every bag~$S_b$, a table $T_b$
  indexed with all possible functions $f\colon \cpsp_{S_b} \to \{0,1,\ldots,|V|-1\}\cup\{\bot\}$, such that $T_b[f]$ equals the
  probability that in a random realisation $H_b$ of $\calG_b$, 
  the function $f$ is precisely $\ldpl_{H_b,S_b}$. 
Clearly for a leaf node, the claim
holds, since leaf nodes contain no
vertices and
longest disjoint path length function over the induced empty subgraph is
the function $f_{\emptyset}$ with $f_{\emptyset}(\emptyset)=0$, which is
what our algorithm assigns. Therefore the base case
holds. Assume that the claim holds for the
children of $b$. We now show that it holds for
$b$ as well:

\begin{description}
\item[Case 1: $b$ is a \ForgetVertNode{} node]

Note that our definition of the subgraph induced at a
node $b$ covers all the edges introduced until that
point and all the vertices (in the bag or any of its
descendants). Hence if we let the child of $b$ be $b'$, then
$\G_b = \G_{b'}$. Therefore for any realisation $H$ of $\G_{b'}$
such that the $\ldpl_{H,S_{b'}}$ is given by~$f$,
we have that in $\ldpl_{H,S_b}$ we have the choice for every candidate path
set projection $\Pi$ over~$S_b$ of either achieving the same candidate path
set projection $\Pi$ over~$S_{b'}$ or of achieving a candidate path set projection 
$\Pi'$ obtained by inserting the projected-away vertex of
$S_{b'} \setminus S_b$ as a singleton tuple or as an internal vertex of one tuple
(i.e., inside one tuple but not as the first or last element). This is what we
mean in the algorithm when we say that $\Pi'$ \emph{projects} to~$\Pi$; in
particular $\Pi$ projects to itself. The maximal
possible length of an $S_b$-vertex-disjoint set of paths realising $\Pi$ is then
the maximum taken across these possibilities.
Therefore 
the correctness of $T_b$ follows directly from the correctness of $T_{b'}$ obtained by induction hypothesis.

  \item[Case 2: $b$ is an \IntroVertNode{} node.]
Nodes of this kind do not introduce any edges, and they
only add an isolated vertex to the associated
subgraph: they also have precisely one child. Let the child of $b$ be $b'$; then
$\G_b$ differs from $\G_{b'}$ only by the addition of an isolated vertex, say $v_0$. 
We are given that the probability that a random realisation of $\G_{b'}$ admits the longest disjoint path length function $f$
is $T_{b'}[f]$, where $T_{b'}$ is the table obtained as a result of the function call $\texttt{TreeDP}(b')$, the correctness of which is guaranteed using our induction hypothesis.
Therefore for any
realisation $H_{b'}$
of $\G_{b'}$ such that $\ldpl_{H_{b'},S_{b'}}=f'$, we can map it to a 
realisation $H_b$ of $\G_b$ occurring with the same probability, differing only by the addition of a single isolated vertex $v_0$. Note that such a map from realisations of $\G_{b'}$ to realisations of $\G_b$ is surjective. Given that $\ldpl_{H_{b'},S_{b'}}=f'$, we may construct the function $f$ such that $\ldpl_{H_{b},S_{b}}=f$ in the following fashion: If the candidate path set projection $\Pi$ contains a tuple $(v_1,v_2, \dots, v_k)$ 
which is a non-singleton tuple (i.e., $k>1$) and where we have $v_i=v_0$ for some $i \in \{1,2, \dots, k\}$
then the longest disjoint path length of $\Pi$ is $\bot$,
since $v_0$ is isolated. 
Otherwise, we use the image of the function $f'$ on the candidate path set projection $\Pi'$ obtained from $\Pi$ by restricting it to vertices of $S_{b'}$, i.e., removing the singleton tuple $v_0$ if it exists (noting that this has no effect on the maximal length of paths realising the candidate path set projection).

\item[Case 3: $b$ is a \JoinNode{} node]
Let $b$ be a \JoinNode{} node with children $b_1$ and $b_2$.
Consider realisations of $\G_{b_1}$ and $\G_{b_2}$, say
$G_1 = (V_1,E_1)$ and $G_2 = (V_2,E_2)$ respectively, such that the
functions $f_1 = \ldpl_{G_1,S_{b_1}}$ and $f_2=\ldpl_{G_2,S_{b_2}}$
are known.
Consider the union of $G_1$ and $G_2$ which we define by $H_b=(V_1 \cup V_2, E_1 \cup E_2)$. Now, note that $H_b=(V_1 \cup V_2, E_1 \cup E_2)$ is a realisation of $\G_{b}$, the probabilistic graph associated with the node $b$. By \Cref{changeclaimlong}, 
we know how to express the function 
$\ldpl_{H_{b},S_{b}}$, the longest disjoint path length function of $S_b$ in the
graph $H_b=(V_1 \cup V_2, E_1 \cup E_2)$,
from the functions 
$\ldpl_{H_{b_1},S_b}$
and
$\ldpl_{H_{b_2},S_b}$.
Thus we can map pairs of realisations of $\G_{b_1}$ and $\G_{b_2}$ to
realisations of $\G_{b}$,
with the LDPL of the realisation of $\G_b$ being determined from that of
$\G_{b_1}$ and $\G_{b_2}$.
The resulting LDPL is computed by \Cref{alg:joinlongpath} as prescribed
by \Cref{changeclaimlong}.

\item[Case 4: $b$ is an \IntroEdgeNode{} node]
\Cref{alg:TreeDPlongpath} treats the addition of the new edge
from $u_0$ to $v_0$ as taking a union of the graph
under consideration with another graph with the same vertex set and with a single
edge from $u_0$ to $v_0$, all other vertices being
isolated. For the latter graph, the probability
distribution over longest disjoint path length functions of $S_b$ is
trivially known and then we use
\Cref{alg:joinlongpath} 
like for the case of a \JoinNode{} node.
\qedhere
\end{description}
\end{proof}

We show this in Appendix~\ref{apx:proofatleastlengthsimplepaths}, relying again on the properties of nice tree
decompositions to allow us to sum over mutually exclusive events and multiply
independent events.

%% file: untract.tex
In the previous section, we have shown that the problems $\genprob$
for ${\bowtie} \in \{{\leq},
{\geq}\}$ and $X \in \{\Walk, \Path\}$ are all tractable on probabilistic
graphs of bounded treewidth. This was shown by designing dynamic
programming algorithms that deal with partial solutions for every
bag of the tree decomposition, for various notions of partial
solution. At first glance then, one might think that using the
same kind of techniques would get us the tractability
of $\lppe$ and $\lpwe$. We show in this section that, 
somewhat unexpectedly, these two remaining problems can be 
\#P-hard on bounded-treewidth probabilistic graphs, in fact already on
pathwidth-2 probabilistic graphs. (Remember that, by contrast, they are
tractable on treewidth-1 graphs by \cref{thm:forest}.) 
To prove this, we will reduce again from \#PP2DNF, recalling its definition from
earlier (\cref{def:pp2dnf}). Formally:

\begin{toappendix}
  \label{apx:untract}
\end{toappendix}
\begin{theoremrep}
\label{thm:flpp}
  $\lppe$ and $\lpwe$ are $\#P$-hard on 
  probabilistic graphs of pathwidth~$2$.
\end{theoremrep}
\begin{proofsketch}
We show this for $\lppe$ and observe that the same reduction works for $\lpwe$. 
Given an instance $\varphi$ of $\spptdnf$ 
on variables $X = \{X_1,\ldots, X_{p}\}$, $Y = \{Y_1,\ldots, Y_{q}\}$ with $k$ terms of the type $X_{i_r} \land Y_{j_r}$, where $1 \leq r \leq k$,
  we 
  construct the probabilistic graph $\G=(V,E,\pi)$ described below
  (and illustrated in \Cref{fig:example-graph2} in the appendix).

\begin{itemize}
    \item Create a start vertex $s$.
    \item Create vertices
      $u_1,u_2\ldots, u_{p}$ which will respectively correspond to the variables
      $X_1,X_2\ldots, X_{p}$, and create vertices $v_1, \ldots, v_q$ which will
      respectively correspond to the variables $Y_1, \ldots, Y_q$.
  \item  Create edges with probability~$1$ 
    from $u_1$ to $u_2$, from $u_2$ to~$u_3$, and so on, with the last such edge going from $u_{p-1}$ to~$u_p$.
    \item 
      Create a vertex $u$ and an edge of probability~$1$ from~$u_p$ to~$u$.
    \item   Create paths of $2p$ edges with probability 1 
      from $v_1$ to $v_2$, from $v_2$ to~$v_3$, and so on, 
      with the last such path going from $v_{q-1}$ to $v_q$.
    
    \item For each vertex $u_i$ with $i \in \{1,\dots p\}$, create an edge from
      $s$ to $u_i$ having probability $\frac{1}{2}$.
    \item For each vertex $v_j$ with $j \in \{1,\dots q\}$, create
      an edge from $u$ to $v_j$ having probability $\frac{1}{2}$.

    \item Last, create two vertices $v$ and $t$, and an edge from $v_q$
      to~$v$
      having probability $1$. Now, for each term $X_{i_r} \land Y_{j_r}$
      of the formula with $1 \leq r \leq k$, construct
a path from~$v$ to~$t$ of edges with probability~$1$ having length $(2p)j_r+i_r$. \end{itemize}

Clearly $\calG$ can be constructed in polynomial time.
It is not hard to show that it has treewidth~$2$, and a small modification of
the construction given in the appendix ensures that it has pathwidth~$2$.
We now use the oracle for the $\lppe$ problem on $\calG$, to compute the probability $\alpha$ that there is a path of length
exactly $2pq+p+4$ from~$s$ to~$t$.
We then return
the value $\alpha \times 2^n$, for $n\coloneq p+q$ the number of variables
of~$\varphi$.  The proof that this reduction is correct, i.e.,
that $\alpha= \frac{\#\varphi}{2^{n}}$ for~$\#\varphi$ the number of satisfying
assignments of~$\varphi$, can be found in Appendix~\ref{apx:untract}.
\end{proofsketch}

\begin{toappendix}
\begin{figure}
\centering
  \begin{tikzpicture}[xscale=1.1]
\begin{scope}[every node/.style={circle,thick,draw}]
    \node (s) at (0,0) {$s$};
    
    \node (t) at (12,0) {$u_p$};
    \node (n2) at (4,0) {$u_1$};
    \node (n3) at (6,0) {$u_2$};
    \node (n4) at (8,0) {$u_3$};
    \node[draw=none] (ucdots) at (9,0) {$\cdots$};
    \node (n5) at (10,0) {$u_{p-1}$};

    \node (t1) at (12,-5) {$u$};
    \node (t5) at (2,-5) {$v_{q-1}$};
    \node[draw=none] (vcdots) at (3,-5) {$\cdots$};
    \node (t4) at (4,-5) {$v_3$};
    \node (t3) at (6,-5) {$v_2$};
    \node (t2) at (8,-5) {$v_1$};
    \node (t6) at (0,-5) {$v_{q}$};

    \node (v1) at (0,-10) {$v$};
    \node (v2) at (12,-10) {$t$};
    \node[draw=none] (clausedots) at (4,-9.5) {$\vdots$};
    \node[draw=none] (clausedots2) at (8,-9.5) {$\vdots$};
\end{scope}
\begin{scope}[>={Stealth[black]},
              every node/.style={fill=white,circle},
              every edge/.style={draw,thick}]
              
              \path [->, thick, dotted, line cap=round, draw=blue] (n2) edge node {$1$} (n3);
              \path [->, thick, dotted, line cap=round, draw=blue] (n2) edge node {$1$} (n3);
              \path [->, thick, dotted, line cap=round, draw=blue] (n3) edge node {$1$} (n4);
              
              \path [->] (n5) edge node {$1$} (t);
              \path [->, thick, dotted, line cap=round, draw= blue] (s) edge[bend left=60] node {$1$} (n2);
              \path [->, thick, dotted, line cap=round, draw= blue] (s) edge[bend left=60] node {$1$} (n3);
              \path [->, , thick, dotted, line cap=round, draw= blue] (s) edge[bend left=60] node {$1$} (n3);
              \path [->, , thick, dotted, line cap=round, draw= blue] (s) edge[bend left=60] node {$1$} (n4);
              \path [->, , thick, dotted, line cap=round, draw= blue] (s) edge[bend left=60] node {$1$} (n5);
              \path [->, , thick, dotted, line cap=round, draw= blue] (s) edge[bend left=60] node {$1$} (t);
              \path [->] (n2) edge node {$1$} (n3);
              \path [->] (n3) edge node {$1$} (n4);

              \path [->] (t) edge node {$1$} (t1);
              \path [->] (t2) edge node {$2p$} (t3);
              \path [->] (t3) edge node {$2p$} (t4);
              \path [->] (t5) edge node {$2p$} (t6);

              \path [->, thick, dotted, line cap=round, draw= blue] (t1) edge[bend right=60] node {$1$} (t2);
              \path [->, thick, dotted, line cap=round, draw= blue] (t1) edge[bend right=60] node {$1$} (t3);
              \path [->, thick, dotted, line cap=round, draw= blue] (t1) edge[bend right=60] node {$1$} (t4);
              \path [->, thick, dotted, line cap=round, draw= blue] (t1) edge[bend right=60] node {$1$} (t5);
              \path [->, thick, dotted, line cap=round, draw= blue] (t1) edge[bend right=60] node {$1$} (t6);

               \path [->, thick, dotted, line cap=round, draw= blue] (t1) edge[bend right=60] node {$1$} (t6);

              \path [->] (v1) edge[bend left =50] node
              {\shortstack{$2pj_1+i_1$ \\ \tiny $X_{i_1} \land Y_{j_1}$}} (v2);
              \path [->] (v1) edge[bend left =25] node {\shortstack{$2pj_2+i_2$ \\ \tiny $X_{i_2}\land Y_{j_2}$}} (v2);
              \path [->] (v1) edge[bend right = 15] node
              {\shortstack{$2pj_{k}+i_k$ \\ \tiny $X_{i_k}\land Y_{j_k}$}}  (v2);

              \path [->] (t6) edge node {\shortstack{$1$ \\ \tiny ${}$}} (v1);
\end{scope}
\end{tikzpicture}
\caption{Mapping an instance of $\spptdnf$ to a probabilistic
(unweighted) graph. The edges stand for paths whose intermediate vertices are
omitted for simplicity, and the number indicated in
the annotation stands for the length of that path (so edges labelled 1
correspond to a single edge).
For the solid edges in black, all edges of the path occur
with a probability of $1$; the blue dotted edges are always labelled
with~$1$ and they occur with a probability of $\frac{1}{2}$.}
    \label{fig:example-graph2}
\end{figure}
\end{toappendix}
\begin{proof}
We show this for $\lppe$ and observe that the same reduction extends to $\lpwe$.
We will give the argument with a graph of treewidth~$2$ (because the graph is simpler to
construct and to draw), and explain at the end of the proof how the graph can be modified
slightly to ensure that it has pathwidth~$2$.

We are given an instance $\varphi$ of $\spptdnf$ built on the set of variables
$X = \{X_1,\ldots, X_{p}\}$, $Y = \{Y_1,\ldots, Y_{q}\}$ with $k$ terms of the
form $X_{i_r} \land Y_{j_r}$, where $1 \leq r \leq k$.
We let $n \coloneq p+q$ be the number of variables of~$\varphi$, and write
$\#\varphi$ for the number of satisfying assignments of~$\varphi$, which is what
we wish to compute.
  We construct the probabilistic graph $\G=(V,E,\pi)$ described below
  (and illustrated in \Cref{fig:example-graph2}).

\begin{itemize}
    \item Create a start vertex $s$.
    \item Create vertices
      $u_1,u_2\ldots, u_{p}$ which will respectively correspond to the variables
      $X_1,X_2\ldots, X_{p}$, and create vertices $v_1, \ldots, v_q$ which will
      respectively correspond to the variables $Y_1, \ldots, Y_q$.
  \item  Create edges with probability~$1$ 
    from $u_1$ to $u_2$, from $u_2$ to~$u_3$, and so on, with the last such edge going from $u_{p-1}$ to~$u_p$.
    \item 
      Create a vertex $u$ and an edge of probability~$1$ from~$u_p$ to~$u$.
    \item   Create paths of $2p$ edges with probability 1 
      from $v_1$ to $v_2$, from $v_2$ to~$v_3$, and so on, 
      with the last such path going from $v_{q-1}$ to $v_q$.
    
    \item For each vertex $u_i$ with $i \in \{1,\dots p\}$, create an edge from
      $s$ to $u_i$ having probability $\frac{1}{2}$.
    \item For each vertex $v_j$ with $j \in \{1,\dots q\}$, create
      an edge from $u$ to $v_j$ having probability $\frac{1}{2}$.

    \item Last, create two vertices $v$ and $t$, and an edge from $v_q$ 
      to~$v$
      having probability $1$. Now, for each term $X_{i_r} \land Y_{j_r}$
      of the formula with $1 \leq r \leq k$, construct
a path from~$v$ to~$t$ of edges with probability~$1$
having length $(2p)j_r+i_r$. \end{itemize}

Observe that the graph thus constructed has treewidth at most 2. Indeed, the
part of the graph comprising the vertex $s$ and the vertices $u_i$ 
with $i \in \{1, \ldots, p\}$ admits a path decomposition of width 2, as does
the part of the graph comprising the vertices $u$ and the vertices $v_j$ with $j \in
\{1, \ldots, q\}$. To see this observe that we can define the node $b_i$ with the corresponding bag $S_{b_i}$ to contain the vertices $\{s,u_i,u_{i+1}\}$. Connecting these nodes linearly to form a path $b_1,b_2 \dots b_{p-1}$ gives a valid tree decomposition of the part of $\G$ comprising $s$ and $u_i$
with $i \in \{1,\ldots, p\}$. A similar construction works for the  part of the graph comprising the vertex $u$ and the vertices $v_j$ with $j \in \{1, \ldots, q\}$. Likewise, the part of the graph comprising $v$, $t$ that encodes the conjunctive terms has a tree decomposition of width $2$. We construct a tree decomposition as follows. Create a central node with bag
$\{v,t\}$. For every $v$--$t$ path of the form
$v, v_1, v_2, \dots, v_m, t$ we build a path decomposition by introducing nodes $b_1,\dots,b_m$, where for $i=1,\dots,m-1$ the bag of $b_i$ is
$S_{b_i} = \{v, v_i, v_{i+1}\}$
and the bag of $b_m$ is $S_{b_m} = \{v, v_m, t\}$.
These nodes are connected linearly to form a path, and the node $b_m$ is attached to the central node with bag $\{v,t\}$. Repeating this construction for every $v$--$t$ path yields a tree decomposition in which every bag has size at most $3$. Therefore, it may be concluded that the constructed graph has treewidth at most $2$.

We now use the oracle for the $\lppe$ problem for the graph $G=(V,E,\pi)$ thus
constructed, to compute the probability $\alpha$ that there is a path of length
exactly $2pq+p+4$ from~$s$ to~$t$.
We will return
the value $\alpha 2^n$. To establish the correctness of this reduction we must
show that $\#\varphi = \alpha2^n$, or in other words $\alpha=
\frac{\#\varphi}{2^{n}}$.

For this we define the
function $f$ which maps every assignment of variables $\theta$ to a realisation
$f(\theta)$ of the probabilistic graph $\G = (V,E,\pi)$, where the edge from
$s$ to $u_i$ is present if and only if $X_i$ is true, and the edge from $u$ to
$v_j$ is present if and only if $Y_j$ is true, $\forall X_i \in X$ and  $\forall
Y_j \in Y$. Clearly $f$ is a bijection. We say that a realisation of~$\G$ is
\emph{good} if it contains a path of length exactly $2pq+p+4$ from~$s$ to~$t$,
and \emph{bad} otherwise.

We then prove the following claim, which we denote by $(\dagger)$: the
realisation $f(\theta)$
is good if and only if $\theta$ is a satisfying assignment.

To prove the forward implication, suppose first that $\theta$ is a satisfying
assignment. This implies that for some term of the $\spptdnf$, say $X_i \land
Y_j$,  both variables $X_i$ and $Y_j$ are made true by~$\theta$. Therefore the
edges from $s$ to $u_{i}$ and from $u$ to $v_{j}$ must be present in
$f(\theta)$. Therefore consider the path $s,u_{i},u_{i+1} \dots u_p$, which
has length $1+(p-i)$. Concatenate this with the edge joining $u_p$ and $u$. Next
concatenate it with the path $u,v_{j},v_{j+1} \dots v_q$, which has length $1
+ 2p(q-j)$. Lastly concatenate it with the edge from $v_q$ to $v$, and
then with the path of length $2pj+i$ from $v$ to $t$ corresponding to  $X_{i} \land Y_{j}$ in the $\spptdnf$ formula.

The length of the path thus formed from $s$ to $t$ is then $1 + (p-i) + 1 +
1 + 2p(q-j) + 1 + 2pj+i$, i.e., $2pq + p + 4$.

We now show the reverse implication. Suppose a path of length $2pq+p+4$
exists. It must use exactly one path from $v$ to $t$. Suppose it uses the
path of length $2pj+i$ corresponding to a term $X_i \land Y_j$.
The path must take exactly one blue edge of the form $(s, u_{i'})$ and
exactly one blue edge of the form $(u,v_{j'})$, in addition to the edges
$(u_p,u)$ and $(v_q,v)$. One can see that the path length is then $1 + (p-i') +
1 + 2p(q-j') + 1 + 1 + 2pj+i$, that is, $2p(q-j'+j) + (p-i'+i) + 4$. By
assumption this must be equal to $2pq+p+4$. Now, subtracting $4$ from both
sides, and using the fact that $0 < p-i'+i < 2p$ we can consider the Euclidean
division of each side of the equation by $2p$ to conclude that $(q-j'+j) = q$
and $p-i'+i=p$, so that we must have $j=j'$ and $i=i'$.
This implies that $\theta$ is a satisfying assignment since the term $X_i \land Y_j$ is satisfied. 

Hence, this proves $(\dagger)$ and concludes the proof except that the
graph constructed so far has treewidth~$2$ and not pathwidth~$2$.
We last explain how we can change the
construction of the graph to ensure that it has pathwidth~$2$. The modification
is that the parallel paths from $v$ to~$t$ having different possible lengths are
replaced by one path (for the longest length) together with shortcut edges to
achieve the other lengths. Formally, we create one path from $v$ to~$t$ of
length $M \coloneq \max_{1 \leq r \leq k} (2p)j_r + i_r$, and for each other desired
length $(2p)j_{r'} + i_{r'}$ for some $r'$ we
create one edge from $v$ to the right vertex of this path to also ensure that
there is a path of length $(2p)j_{r'} + i_{r'}$ from~$v$ to~$t$. This
modification has no effect on the path lengths, more precisely the obvious
bijection between the realisations of the original graph and of the modified
graph ensures that there are precisely the same path lengths from~$s$ to~$t$, so
the rest of the correctness proof applies to the modified graph as well. (Note
in particular that all edges in the modified part of the graph are
deterministic.) Last, to explain why the modified graph has pathwidth~$2$, we
have in fact already explained how to build a path decomposition of the vertices
reachable from~$s$ and having a path to~$v$, so all that remains is to continue
the decomposition of the remainder of the graph. For this, we simply have bags
of the form $\{v, w_i, w_{i+1}\}$, for $w_1, \ldots, w_{M-1}$ the intermediary
vertices created on the long path, and finish by one bag $\{v, w_{M-1}, t\}$.
This concludes the proof.
\end{proof}

%% file: variants.tex
We now complete our picture of the complexity landscape of $\probname$ problems by considering 
\emph{weighted probabilistic graphs}, i.e., directed graphs where each
edge is annotated both with a positive length (written in binary) and with a probability. 
The length requirements in $\probname$ problems now ask about the existence of simple
paths or walks whose total weight (i.e., the sum of the weights of all traversed
edges) matches the requirement.

\subparagraph*{Non-probabilistic weighted graphs.}
The presence of weights written in binary has an effect on the complexity of the
$\probname$ problems, 
already in the non-probabilistic setting.
For instance, whereas $\lpwe$ was tractable on
arbitrary graphs (see \cref{sec:prelims})
and $\lppe$ was tractable on bounded-treewidth
graphs (\cref{thm:nonprobatw}),
it is easy to reduce the SUBSET-SUM problem to the exactly-$\ell$ problems,
already on pathwidth-2 graphs:

\begin{proposition}
  \label{prp:weightedhard}
  The $\lpwe$ and $\lppe$ problems are NP-hard already on pathwidth-2 non-probabilistic
  weighted graphs.
\end{proposition}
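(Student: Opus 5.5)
We reduce from SUBSET-SUM: given positive integers $a_1,\ldots,a_m$ and a target $\ell$ (all in binary), decide whether some subset of the $a_i$ sums to exactly $\ell$. This problem is NP-hard. We build a deterministic weighted graph $G$ with vertices $w_0,w_1,\ldots,w_m$, where $s\coloneq w_0$ and $t\coloneq w_m$. For each $i\in\{1,\ldots,m\}$ we create two routes from $w_{i-1}$ to $w_i$.

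\begin{itemize}
\item The first route is a single edge $(w_{i-1},w_i)$ of weight $a_i+1$.
\item The second route is a path $w_{i-1}\to c_i\to w_i$ through a fresh vertex $c_i$, with both edges of weight $1$.
\end{itemize}

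We use the weight $a_i+1$ and two unit edges so that all weights are positive, as the paper's model requires. The target length is set to $\ell'\coloneq \ell+2m$, still written in binary.

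The graph is acyclic. Every walk from $s$ to $t$ is therefore a simple path, and it visits $w_0,\ldots,w_m$ in order, choosing one of the two routes in each segment. Choosing the direct edge in segment $i$ costs $a_i+1$, and choosing the detour costs $2$. The cost is thus $2$ plus a contribution $a_i-1$ for the direct edge, or $0$ for the detour, which does not quite match subset selection. To fix this, we give the direct edge weight $a_i+2$ instead of $a_i+1$. Then segment $i$ costs $2$ if $a_i$ is not selected and $2+a_i$ if it is. Every $s$-$t$ path has total weight $2m+\sum_{i\in I}a_i$, where $I$ is the set of segments using the direct edge. Hence a path or walk of weight exactly $\ell'$ exists if and only if some subset of the $a_i$ sums to $\ell$. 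This argument covers both $\lppe$ and $\lpwe$, since walks and paths coincide here.

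It remains to bound the pathwidth by $2$. We use the bags $\{w_{i-1},c_i,w_i\}$ for $i=1,\ldots,m$, arranged in this order along a path. Each edge lies in a bag: the edges $(w_{i-1},w_i)$, $(w_{i-1},c_i)$ and $(c_i,w_i)$ all lie in bag $i$. For connectivity, $w_i$ appears only in the consecutive bags $i$ and $i+1$, and $c_i$ appears only in bag $i$. All bags have size $3$, so the width is $2$. The construction is clearly polynomial, as it has $O(m)$ vertices and edges and the weights are copied in binary.

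The only delicate point is the positivity requirement on weights, which rules out a zero-weight parallel edge for "skip". The two-edge detour together with the $+2$ offset handles this, so we do not expect a real obstacle.
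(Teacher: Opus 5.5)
Your proposal is correct and is essentially the paper's own proof: the same SUBSET-SUM reduction, with a direct edge of weight $a_i+2$ in parallel with a two-edge unit-weight detour in each segment, target $\ell+2m$, acyclicity to make walks and paths coincide, and the path decomposition with bags $\{w_{i-1},c_i,w_i\}$. In a final write-up, state the weight $a_i+2$ from the outset instead of first trying $a_i+1$ and then correcting it.
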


\begin{figure}
  \begin{tikzpicture}[xscale=1.3,yscale=.6]
    \centering
\begin{scope}[every node/.style={circle,thick,draw}]
    \node (s) at (0,0) {$v_0$};
    \node (n1) at (2,0) {$v_1$};
    \node (t) at (10,0) {$v_k$};
    \node (n2) at (4,0) {$v_2$};
    \node (n3) at (6,0) {$v_3$};
    \node[draw=none] (cdots) at (7,0) {$\cdots$};
    \node (n5) at (8,0) {$\!\!v_{k-1}\!\!$};

    \node (t1) at (1,-2) {$u_1$};
    \node (t2) at (3,-2) {$u_2$};
    \node (t3) at (5,-2) {$u_3$};
    \node (t5) at (9,-2) {$u_k$};
\end{scope}

\begin{scope}[>={Stealth[black]},
              every node/.style={fill=white,circle},
              every edge/.style={draw,thick}]
              \path [->] (s) edge[bend left= 60] node {$n_1+2$} (n1);
              \path [->] (n1) edge[bend left=60] node {$n_2+2$} (n2);
              \path [->] (n2) edge[bend left=60] node {$n_3+2$} (n3);
              \path [->] (n5) edge[bend left=60] node {$n_k+2$} (t);
              \path [->,thick,  line cap=round] (s) edge[bend left=10]  (t1);
              \path [->,thick, dotted, line cap=round, draw = blue] (t1)
                edge[bend left=10]  (n1);
              \path [->,thick,  line cap=round] (n1) edge[bend left= 10]
                 (t2);
              \path [->,thick, dotted, line cap=round, draw= blue] (t2)
                edge[bend left= 10]  (n2);
              \path [->,thick,  line cap=round] (n2) edge[bend left= 10]
                 (t3);
              \path [->,thick, dotted, line cap=round, draw= blue] (t3)
                edge[bend left= 10]  (n3);
              \path [->,thick, line cap=round] (n5) edge[bend left= 10]
                 (t5);
              \path [->,thick, dotted, line cap=round, draw= blue] (t5)
                edge[bend left= 10]  (t);
\end{scope}
\end{tikzpicture}
\caption{The weighted graph constructed from an instance of SUBSET-SUM
  in the proof of \cref{prp:weightedhard} (ignoring the colours and
  dashing of edges). The edges are annotated by
  their weights: the weight is 1 if not indicated. The same graph is reused as a
  probabilistic weighted graph for the proof of \cref{thm:weighthard}, with the
  solid edges in black occurring
with a probability of $1$ whereas the blue dotted edges occur with a probability of $\frac{1}{2}$.}
    \label{fig:example-graph}
\end{figure}

\begin{proof}
  We follow an easy reduction from the NP-hard SUBSET-SUM problem. This problem
  asks, given a set $S = \{n_1, \ldots, n_k\}$ of natural numbers and a threshold $\tau$ (all written in binary), whether there exists a subset $S'$ of $S$ such
  that $\sum_{i\in S'} i = \tau$.
  We construct the pathwidth-2 graph $G$ shown in
\cref{fig:example-graph} (ignoring for now the edge dashing and edge colours): formally, we
have vertices $v_0, \ldots, v_k$ and $u_1, \ldots, u_k$, and for each
$1 \leq i \leq k$, we have edges of weight~1 from $v_{i-1}$ to $u_i$ and from
$u_i$ to~$v_i$, and an edge of weight $n_i+2$ from $v_{i-1}$ to $v_i$.
The graph obviously has pathwidth~$2$, with bags $\{v_0, u_1, v_1\}, \ldots,
\{v_{k-1}, u_k, v_k\}$.
Further, there is
a clear bijection between the paths from~$v_0$ to~$v_k$ and the subsets $S'$
of~$S$, given by the choice of whether to go from $v_{i-1}$ to $v_i$ for each
edge by the length-2 path of weight 2 (we say that the path \emph{does not
select} $n_i$) or the single edge of weight $n_i+2$ (we say that the path
\emph{selects} it). Hence,
the weight of such a path is $2k + \sum_{i\in S'} i$ for the subset $S'$ of the
integers that it selects.
Via the bijection, this  establishes that
there is a path from $v_0$ to~$v_k$ of total weight $\tau +
2k$ iff there is a subset $S'$ of $S$ that sums to~$\tau$. Thus,
the problem $\lppe$ is NP-hard. As the graph is acyclic, the same immediately
holds for
$\lpwe$.
\end{proof}

Note that the same reduction (up to contracting the $u_i$ vertices)
shows that hardness also holds on pathwidth-1 multigraphs, i.e., graphs where we may have multiple parallel edges with the same ordered pair of endpoints but different weights.

To avoid lengthening the paper, we leave to future work a complete exploration
of the complexity landscape of $\probname$ problems on weighted graphs of bounded and
unbounded treewidth and pathwidth. We note that the question is already intriguing in the
setting of non-probabilistic graphs. Another interesting question would be to
compare the setting of non-probabilistic weighted graphs to that of
probabilistic non-weighted graphs, but there one can already remark that the tractable
cases are not identical. For instance, the problems $\lppl$ and $\lpwl$
are intractable on non-weighted probabilistic
graphs (with or without weights; see \cref{thm:bla}), but they can be easily
solved on weighted non-probabilistic graphs by computing shortest paths (e.g.,
via Dijkstra's algorithm).

Our goal in the remainder of this section is
simply to showcase how our previous tractability proofs on bounded-treewidth
probabilistic graphs rely on the absence of edge weights, by giving some
illustrative results and leaving a more complete study to future work. We will first show
that all our problem variants are \#P-hard on probabilistic weighted graphs, even when the
input is assumed to have pathwidth at most~$2$. Then, we will show that the problem $\lpwe$ is \#P-hard
already on probabilistic weighted graphs of pathwidth~$1$.

\subparagraph*{Hardness on pathwidth-2 graphs.}
We show the following result:

\begin{propositionrep}
  \label{thm:weighthard}
  All $\probname$ variants 
  are \#P-hard on probabilistic weighted graphs, even
  when the input graphs are required to have pathwidth at most~$2$.
\end{propositionrep}

By \cref{thm:flpp}, we know that $\lppe$ and $\lpwe$ are
\#P-hard, so it suffices to show this for the $\leq$-variants and the
$\geq$-variants. We reduce from the \#P-hard $\knapsack$ problem~\cite{barceló2020modelinterpretabilitylenscomputational}:

\begin{definition}
  \label{def:subssum2}
  The problem $\knapsack$ asks, given a set of natural numbers $S= \{n_1,
  \allowbreak \ldots, n_k\}$ and a threshold $\tau$, all written in binary, how many subsets $S' \subseteq S$ satisfy $\sum_{i
  \in S'} i \leq \tau$.
\end{definition}

We can now show \cref{thm:weighthard}:

\begin{proofsketch}
  We focus on $\lppl$ and $\lpwl$.
  Given an instance $S, \tau$ of $\knapsack$,
  we build a graph $G$ 
  similarly to the proof of \cref{prp:weightedhard}, but
  some edges are made probabilistic -- see \cref{fig:example-graph} for an
  illustration. The
  realisations of $G$ then correspond to subsets of $S$, and we can ensure that the subset
  sums up to at most~$\tau$ iff the corresponding realisation has some path of
  weight at most~$\tau + 2k$ from $v_0$ to~$v_k$. For $\lppg$ and $\lpwg$, we do a
  variant of this reduction.
\end{proofsketch}
\begin{toappendix}
\begin{proof}[Proof of \cref{thm:weighthard}]
We first show the claim for $\lpwl$, which implies the same for $\lppl$ because both problems are equivalent.

    We reduce from the $\knapsack$ problem. Consider an instance of $\knapsack$
    with 
    $S= \{n_1, \ldots, n_k\}$
    and a threshold $\tau$.
    Let $\phi$ be the number of subsets  $S' \subseteq S$ such that $\sum_{i \in S'} i \leq \tau$, which is the number that we want to compute.
    From the instance $S, \tau$, we construct the following probabilistic weighted graph
    $\G = (V,E,\pi,\delta)$ as shown in \Cref{fig:example-graph}:

    \begin{itemize}
        \item Create a source vertex $v_0$ and vertices $v_1,\dots, v_k$
          corresponding to the integers $n_1, \dots, n_k$.

        \item For every $i \in \{1, \dots, k\}$, add an edge from $v_{i-1}$
          to~$v_i$ of weight $n_i+2$ and probability $1$.
        \item For every $i \in \{1, \dots, k\}$, 
          create a vertex $u_i$, create an edge of weight~$1$ and
          probability~$1$ from $v_{i-1}$ to $u_i$,
          and 
          create an edge of weight~$1$ and probability $1/2$ from~$u_i$
          to~$v_{i}$.

    \end{itemize}

    It is easy to observe that $\G$ has pathwidth~$2$, with a path decomposition featuring the bags $\{v_0, u_1,
    v_1\}, \{v_1, u_2, v_2\}, \ldots, \{v_{k-1}, u_k, v_k\}$. Further, we
    consider the following 
    bijection $f$ between the set of subsets $S'$ of $S$ and the set of
    realisations of the graph $\G$:
    each subset $S' \subseteq S$ is mapped to the
realisation of $\G$ where the edge from $u_{i}$ to $v_i$ (shown in blue in
\Cref{fig:example-graph}) does \textit{not} appear if and only if $n_i \in S'$.
Let us say that a subset $S' \subseteq S$ is \emph{good} if its sum is no greater than
$\tau$, and \emph{bad} otherwise. Let us
further say that a realisation of $\G$ is \emph{good} if it has a path from $v_0$ to~$v_k$ which is no longer than $\tau+2k$, and \emph{bad} otherwise.
We will now show the following claim~$(\dagger)$:
The set $S'$ is good iff the realisation $f(S')$ of $\G$ is good.
As the probability of each realisation of~$\G$ is $1/2^k$, it is clear that $(\dagger)$ implies that we can obtain the answer to the $\knapsack$ instance (i.e., the number of good subsets $S' \subseteq S$)
from the total probability of the good realisations of~$\G$, simply by multiplying by~$2^k$. Hence, we show $(\dagger)$ in what follows.

First, we show that for any $S' \subseteq S$, if $S'$ is good then $f(S')$ is good.
Consider any set $S'$ such that $\sum_{i : n_i \in S'} n_i \leq \tau$. Let the realisation $G = (V,E')$ be the image of $S'$ under $f$. Now consider the path $P$ described as follows: $P$ contains the vertices $v_0, v_1, \dots v_k$ in that order (in addition to possibly some intermediate vertices). If $n_i \in S'$ then $P$ contains the edge of weight $n_i+2$ from $v_{i-1}$ to $v_i$, else $P$ contains the path of length~2 via~$u_i$ from $v_{i-1}$ to $v_{i}$, edges of which exist in~$G=(V,E')$ because $n_i \notin S'$. Observe now that the total weight of $P$ is $\sum_{i : n_i \in S'} (n_i + 2) + \sum_{i : n_i \notin S'} 2$, i.e., it is $2k + \sum_{i : n_i \in S'} n_i$.
Since $\sum_{i : n_i \in S'}n_i \leq \tau $, we have $\sum_{i : n_i \in S'}n_i +2k  \leq \tau + 2k$. Thus, $f(S')$ is a good realisation.

Second, we show that for any $S' \subseteq S$, if $f(S')$ is good then~$S'$ is good.
So let $S'\subseteq S$ and assume that the realisation $(V, E') = f(S')$ of~$\G$ is good, i.e., that it has a path $P$ from $v_0$ to~$v_k$ which has length less than or equal to $\tau +2k$. We assume without loss of generality that $P$ is a shortest path from $v_0$ to~$v_k$,
so that it uses the path of length~$2$ between any two adjacent vertices $v_{i-1}$ and $v_i$ whenever it is present.
Thus, the length of~$P$ is 
$\sum_{i: n_{i} \in S'}(n_i+2) + \sum_{i: n_{i} \notin S'}{2} = \sum_{n_{i} \in S'}{n_i}+2k$.
However, we know that $P$ has length $\leq \tau+2k$.
Thus, we have $\sum_{n_{i} \in S'}{n_i} \leq \tau$, so $S'$ is good. Thus, we have established both directions of $(\dagger)$.

Therefore if we have an oracle for the $\lpwl$ problem, we use the oracle to compute the probability $p$ that there exists a path of length less than or equal to $\tau+2k$. Then $2^kp$ gives the number of subsets of $S$ which sum to a value less than or equal to $\tau$, for the given instance of $\knapsack$, which concludes the reduction
and establishes the result for $\lpwl$ and $\lppl$.

\bigskip

We last explain how the proof can be modified to show the result for $\lppg$ and
$\lpwg$. Given an instance of $\knapsack$ we modify the graph in the proof slightly while maintaining the same structure. In particular, the only change that we enforce is that all edges of weight $1$ now appear with probability $1$, whereas the weighted edges of weight $n_i+2$ where $i \in \{1,2,3, \dots, k\}$ now appear with a probability $\frac{1}{2}$. 

     Consider the map from subsets of the given set $S$ to realisations of $\G$
     in the proof above. Observe that a realisation of $\G$ has a path of length
     greater than $\tau+2k$ if and only if the preimage corresponding to it is a
     subset of $S$ which sums to a value greater than~$\tau$. In other words a subset which sums to a value less than or equal to $\tau$ maps to a realisation of $\G$ which does not have a path of length at least $\tau+2k+1$.

     Thus we call the oracle for the $\lpwg$ problem to obtain the probability that there exists a path of length at least $\tau+2k+1$. Let $p$ be this probability. Then $(1-p)2^{k}$ gives us the number of subsets of $S$ which sum to a value less than or equal to $\tau$, concluding the reduction for $\lppg$ and $\lpwg$. 
\end{proof}
\end{toappendix}
\begin{figure}
\centering
  \begin{tikzpicture}[xscale=1.3,yscale=.6]
\begin{scope}[every node/.style={circle,thick,draw}]
    \node (s) at (0,0) {$s$};
    \node (n1) at (-3,0) {$v_1$};
    \node (n2) at (-1,3) {$v_2$};
    \node[draw=none] (n3) at (1,3) {$\cdots$};
    \node (n4) at (3,0) {$v_k$};
\end{scope}
\begin{scope}[>={Stealth[black]},
              every node/.style={fill=white,circle},
              every edge/.style={draw,thick}]
              \path [->] (s) edge[bend left=10,draw = blue,line cap=round,dotted] node[below] {$n_1$} (n1);
              \path [->] (s) edge[bend left=10, draw = blue,line cap=round,dotted] node[left]
              {$n_2$} (n2);
              \path [->] (s) edge[bend left=10, draw = blue,line cap=round,dotted] (n3);
              \path [->] (n1) edge[bend left=10] node[above] {$n_1$} (s);
              \path [->] (n2) edge[bend left=10] node[above] {$n_2$} (s);
              \path [->] (n3) edge[bend left=10]  (s);
              \path [->] (s) edge[bend left=10, thick, dotted, line cap=round, draw = blue
              ] node[above] {$n_k$} (n4);
              \path [->] (n4) edge[bend left=10] node[below] {$n_k$} (s);
\end{scope}
\end{tikzpicture}
\caption{The probabilistic weighted graph constructed from an instance of
  $\multsubsm$ in the proof of \cref{thm:weighthardforest}. The edges are annotated by their weights. The solid edges in black occur 
with a probability of $1$ whereas the blue dotted edges occur with a probability of $\frac{1}{2}$}

\label{fig:weighthardforests}
\end{figure}
\subparagraph*{Hardness on pathwidth-1 graphs.}
We now show that the $\lpwe$ problem, which was PTIME on pathwidth-1 probabilistic graphs
(\cref{thm:forest}), becomes \#P-hard when weights are allowed:

\begin{propositionrep}
\label{thm:weighthardforest}
    For probabilistic weighted graphs of pathwidth $1$, $\lpwe$ is $\#P$-hard.
\end{propositionrep}

\begin{proof}
  We reduce from the $\multsubsm$ problem (\cref{def:multsubsm}), which is \#P-hard by
  \cref{prp:multsubsm}.
    Given an instance of $\multsubsm$ with $S =\{n_1,n_2,\dots n_k\}$ and target
    $\tau$, we construct the graph $\G=(V,E,\pi,\delta)$ as follows, outlined in
    \Cref{fig:weighthardforests}.

    \begin{itemize}
        \item Create a start vertex $s$
        \item For every $n_i \in S$, create a vertex $v_i$
        \item For each vertex $v_i$, create an edge of weight $n_i$ and probability $\frac{1}{2}$ from $s$ to $v_i$
        \item For each vertex $v_i$, create an edge of weight $n_i$ and probability $1$ from $v_i$ to $s$
    \end{itemize}

 Observe that $\G=(V,E,\pi,\delta)$ has pathwidth $1$.
 Now consider the problem of determining if a walk of length $2\tau$ exists from
 $s$ to $s$ in this graph. Consider the obvious bijection $f$ between subsets
 of~$S$ and realisations of~$\G$, i.e., the edge from $s$ to $v_i$ is present in
 $f(S)$ if and only if $n_i \in S$. We can immediately observe that $f(S)$ 
  contains a walk of length $2\tau$ from~$s$ to~$s$ iff~$S$ is feasible. Thus, the
  number of feasible subsets can be computed from the probability that $\G$
  contains a walk of length~$2\tau$ from~$s$ to~$s$, by multiplying by~$2^k$. This
  concludes the proof.
\end{proof}

We will show this by reduction from a counting problem that we dub $\multsubsm$. It is derived from the UBK problem, which is known to be NP-hard~\cite{ubk}:

\begin{definition}
  The \emph{UBK problem} is the computational problem where we are given
  a set $S = \{n_1, \ldots, n_k\}$ of integers and an integer $\tau$ (all given in binary), and we must decide whether
  there is a linear combination of the $n_i$ with nonnegative integer
  coefficients that sums to~$\tau$, i.e., whether there exist nonnegative
  integers $u_1, \ldots, u_k$ such that $\sum_i u_i n_i = \tau$.
\end{definition}

In \#UBK, we are given a UBK instance $S, \tau$ and we must count the number of
subsets $S' \subseteq S$ such that $S', \tau$ is a positive UBK instance. In
other words:
\begin{definition}
  \label{def:multsubsm}
The problem $\multsubsm$ takes as input a set of natural numbers $S= \{n_1,
  \ldots, n_k\}$ and a threshold $\tau$, all written in binary, and it
  outputs the number of subsets $S'$ of $S$ that are \emph{feasible}, i.e., 
  such that there exist nonnegative integers $u_1, \ldots, u_k$ which ensure that $u_i = 0$ whenever $n_i \notin S'$ and that $\sum_{i=1}^{k}n_iu_i=\tau$.
\end{definition}

With standard techniques and some effort, we can 
show the hardness of $\multsubsm$ by reducing from a variant of~$\spptdnf$~\cite{pp2dnfvar}:

\begin{propositionrep}
  \label{prp:multsubsm}
  The $\multsubsm$ problem is \#P-hard.
\end{propositionrep}

\begin{proof}
  We reduce from the problem \#WPP2DNF, for \emph{weighted \#PP2DNF},
  defined as the following variant of \#PP2DNF: given two sets of variables $X = \{X_1, \ldots, X_p\}$ and $Y = \{Y_1, \ldots, Y_q\}$ 
  and a set $\phi$ of terms each of the form $X_i \land Y_j$, we want to know the number of pairs $(\phi', \theta)$ formed of a subset of terms $\phi' \subseteq \phi$ and of valuations $\theta$ of the variables $X_i \cup Y_j$ such that $\phi'$ evaluates to true under~$\theta$. Intuitively, this is like \#PP2DNF except every satisfying Boolean valuation $\theta$ is counted as many times as there are  subsets of terms 
  whose disjunction satisfies~$\theta$. The problem 
  \#WPP2DNF was introduced in~\cite{pp2dnfvar} and is shown hard (under a different but equivalent terminology) 
  in~\cite[Theorem~4.3]{amarilli2022uniform}.

  Now, from an instance $\phi$ of \#WPP2DNF 
  with $m$ terms built on the set of variables $X$ = $\{X_1,\ldots, X_{p}\}$, $Y$ = $\{Y_1,\ldots, Y_{q}\}$,
  we construct an instance of $\multsubsm$ as follows. 
We let $L= \{n_1,n_2, \ldots, n_{|L|}\}$ be a list of $p+q+m$ integers that we define below. All integers are formed of $1+6+p+q$ bits decomposed in four blocks (from most to least significant digits): a $1$, a block of 6 bits describing whether the integer codes an $X$-variable or $Y$-variable or a term, then a block of $p$ bits coding the index of a $X$-variable, then a block of $q$ bits coding the index of a $Y$-variable. The $p+q+m$ integers are as follows:

\begin{itemize}
    \item For every variable $X_i$, we introduce in $L$ an integer $a_i$. The binary representation of $a_i$ is given as follows, from most to least significant bit. The first bit is $1$, and the next six bits are $000001$. The next $p$ bits code the index of the variable in~$X$, i.e., the $i$-th bit among these $p$ bits is $1$ and all other bits are $0$. The next $q$ bits are all $0$ (they intuitively correspond to~$Y$).
    
    \item For every variable $Y_j$, we introduce in $L$ an integer $b_j$ with binary representation given as follows. The first bit is $1$, and the next six bits are $000100$. The next $p$ bits (intuitively corresponding to~$X$) are all $0$. The next $q$ bits intuitively correspond to the set $Y$ and code the index of the variable in~$Y$, i.e., the $j$-th bit among these $q$ bits is~$1$ and all other bits are~$0$.

    \item For every term $X_i \land Y_j$ in $\phi$ we introduce in $L$ an integer $c_{ij}$ with binary representation given as follows. The first bit is $1$, and the next six bits are $010000$. Then the $i$-th bit of the next $p$ bits is~$0$ (coding $X_i$) and the others are~$1$, and then the $j$-th bit of the next $q$ bits is~$0$ (coding $Y_j$) and the others are~$1$.
\end{itemize}
It is easy to check that this instance can be constructed in PTIME and that all resulting integers are pairwise distinct.

    Further we let the target value be $\tau$, whose binary representation comprises $p+q+8$ bits given as follows: the first two bits are $1$, the next six bits are $010101$ and the next $p+q$ bits are all $1$.

Now we claim that the number of solutions to the \#WPP2DNF instance is the same as the number of feasible solutions to the instance of $\multsubsm$ constructed above.
To establish this, we define the
function $f$ which maps every pair $(\theta, \phi')$ of an assignment of variables $\theta$ and a subset $\phi' \subseteq \phi$ of terms
to a set $f(\theta,\phi') \subseteq L$. The map is defined as follows. We have
$a_i \in f(\theta,\phi')$ precisely when the variable $X_i$ is true in~$\theta$,
we have $b_j \in f(\theta,\phi')$ precisely when the variable $Y_j$ is true
in~$\theta$, and we have $c_{ij} \in f(\theta,\phi')$ precisely when $X_i \land Y_j$
is in~$\phi'$.
It is immediate that $f$ is a bijection between subsets of~$L$ and pairs of a variable assignment
and a subset of~$\phi$.

Next we show the following claim ($\dagger$): we have that $\theta$ is a satisfying assignment of a subset $\phi'$
of~$\phi$ if and only if $f(\theta, \phi')$ is a feasible instance of the
$\multsubsm$ instance. The claim ($\dagger$) ensures that the answer to the $\multsubsm$ instance
is precisely the answer to \#WPP2DNF that we need to compute. Hence, all that
remains is to show ($\dagger$).

For this, in the forward direction,
consider a pair $(\theta,\phi')$ counted for the \#WPP2DNF instance, formed of 
a subset $\phi'$ of terms of~$\phi$ and an assignment $\theta$ which satisfies~$\phi'$.
Thus there exist $X_i$ and $Y_j$ which are true under~$\theta$ 
and are such that $X_i \land Y_j$ occurs in~$\phi'$.
Hence the integers $a_i,b_j,c_{ij}$ occur in $f(\theta,\phi')$.
Now, observe that $a_i+b_j+c_{ij}=\tau$. This shows that $f(\theta,\phi')$
is a feasible subset of $L$.

Now, in the backward direction,
let $f(\theta,\phi')$ be a feasible subset of $L$. We show that $\theta$
must be a satisfying assignment of~$\phi'$.
Since $f(\theta,\phi')$ is a feasible subset of $L$, there exists a tuple of weights $\{x_1,x_2 \dots x_{|L|} \}$, such that $\sum_{i=1}^{|L|}n_ix_i= \tau$, 
and only integers in $f(\theta,\phi')$ have a nonzero weight in the tuple.
We first claim that the tuple must consist of at most three integers,
i.e., $\sum_{i=1}^{|L|} x_{i}\leq 3$. This is because, looking at the two leading
bits $11$ of the target value, we know that every integer in~$L$ has value $01$
on the bits at this position (remembering that the integers in~$L$ are one bit shorter than the target value). Thus, summing four or more of them, we would obtain a sum with 
at least $p+q+9$ bits, which would overshoot the target value~$\tau$.

Now, we know that there exists at most three integers in the set $L$: we now
show that there must be exactly three and that there must be one of the form
$a_i$, one of the form $b_j$, and one of the form $c_{i,j}$. Indeed, 
in the target value~$\tau$,
the first two bits are $11$, and the following six bits are given by $010101$.
Now, the bits at the corresponding position is $010000$ for the $c_{i,j}$, 
$000100$ for the $b_j$, and
$000001$ for the $a_i$.
Thus, if two or three $c_{i,j}$'s are added 
then the first among these six bits in the sum must be~$1$, which does not match
the target.
We can apply the same reasoning to the $b_j$ about the next two bits in the sum,
and so on, so we know that we can take at most one integer of
each type. Now, to achieve the target, it is then clear that we must take in
fact precisely one integer of each type.

Thus, we must end up adding exactly one of the integers corresponding to the set of variables $X$ (say $a_i$ for the variable $X_i$), exactly one of the integers corresponding to the set of variables $Y$ (say $b_j$ for the variable $Y_j$), and exactly one of the integers corresponding to the set of terms. 
We now claim that the integer corresponding to the set of terms must be
$c_{ij}$ (for the term $X_i \land Y_j$). Indeed, 
if we consider the sum of the integers $a_i+b_j$,
only the $i^{th}$ position among the $p$ positions corresponding to the set of variables $X$ and the $j^{th}$ among the $q$ positions corresponding to the set of variables $Y$ are $1$ and the rest are $0$. 
The target requires us to add to $a_i+b_j$ an integer that consists of $1$ at all but these two positions. Thus, the third integer corresponding to a term is uniquely given by $c_{ij}$.
Now since $a_i,b_j,c_{ij} \in f(\theta,\phi')$ we must have that $X_i \land Y_j$
is a term of~$\phi'$ and that $X_i, Y_j$ are true in $\theta$, showing that
$\theta$ is a valid assignment of~$\phi'$. This establishes the backward
direction of ($\dagger$), so ($\dagger$) holds and the result is proven.
\end{proof}

We can then show \cref{thm:weighthardforest} by giving a natural reduction from 
$\multsubsm$ to $\lpwe$ on probabilistic weighted graphs of treewidth~$1$.

\begin{proof}[Proof sketch of \cref{thm:weighthardforest}]
  From a $\multsubsm$ instance $S = \{n_1, \ldots, n_k\}$ with target $\tau$, we build the weighted probabilistic graph of
  \Cref{fig:weighthardforests}, where precisely one edge of each back-and-forth
  path from the centre vertex~$s$ has probability~$1/2$. We then ask for the existence of a walk of length
  exactly~$2\tau$ from~$s$ to itself. It is not hard to see that the realisations of
  the probabilistic weighted graph having such a walk precisely correspond to
  the feasible subsets of~$S$.
\end{proof}

We remark that we expect $\lpwe$ to be the only problem among our variants to be
\#P-hard on treewidth-1 weighted probabilistic graphs, i.e., we do not expect an
analogue of \cref{thm:weighthardforest} for other problem variants, and expect that they are in fact tractable. Indeed,
all simple-path problem variants are tractable because there is at most one
witnessing path in a treewidth-1 instance, and the problem $\lpwl$ is easy
because it suffices to consider shortest paths which are again unique in
treewidth-1 instances. We last expect that $\lpwg$ is in PTIME by an adaptation
of the argument of \cref{thm:forest}.
However, we leave the
detailed study of this to future work.

%% file: conclusion.tex
We conclude the paper by summing up the results that we have achieved,
before mentioning some possible directions for future work.

\myparagraph{Summary} All $\probname$ problems are \#P-hard on arbitrary probabilistic graphs
(\cref{thm:bla}), with $\lppe$ and $\lpwe$ being hard already on probabilistic
graphs of pathwidth $2$ (\cref{thm:flpp}). The problems $\lppl$, $\lpwl$,
$\lppg$ and $\lpwg$ are, on the other hand, all tractable on probabilistic
graphs of bounded treewidth (\cref{thm:tract}): they are in XP parameterised by
treewidth, but W[1]-hard parameterised by pathwidth.
Further, $\lppe$ and $\lpwe$ are
also tractable on probabilistic graphs of treewidth $1$ (\cref{thm:forest}).
When we additionally add binary weights on the edges, all problems become hard
for pathwidth $2$ (\cref{thm:weighthard}), with 
$\lpwe$ being hard
already for pathwidth $1$ (\cref{thm:weighthardforest}).

\myparagraph{Future work}
Our work leaves open many intriguing questions. Can we show, for instance, that
bounding the treewidth of graphs is a necessary condition to obtain the
tractability of $\lppl$, $\lpwl$, $\lppg$ and $\lpwg$? In other words, can we show that
these problems are hard whenever the probabilistic graph family has unbounded
treewidth, as is the case for instance for the problems studied in
\cite{amarilli2016tractable,amarilli2022weighted}?
A related question would be to show hardness in the \emph{uniform reliability}
setting where the input probabilistic graphs only allow edges of probability
$1/2$ (in particular allowing no edges of probability~$1$), in the spirit of 
\cite{amarilli2022uniform}. This should be contrasted with the \emph{symmetric}
setting where the graph is a complete graph with all edges having the same
probability (i.e., we also do not allow the probability~$0$), which sometimes
makes probabilistic query evaluation 
tractable~\cite{beame2015symmetric}.

We also leave to future work the task to complete the picture when we have
binary edge weights in the non-probabilistic setting: we showed in
\cref{prp:weightedhard} that $\lppe$ and $\lpwe$ are hard on treewidth-$2$
graphs, and we know that $\lppl$ and $\lpwl$, as well as $\lpwg$, are tractable
on arbitrary weighted non-probabilistic graphs; this leaves open the $\lppg$
variant, which is NP-hard even without weights but which we expect to be FPT parameterised by treewidth.

Another question left to future work is whether the W[1]-hardness results of
\cref{thm:w1hard} could be strengthened to also establish \#W[1]-hardness.
We note that this question is explicitly left open for the source problem from
which we reduce in the proof of \cref{thm:w1hard} (counting linear
extensions parameterised by pathwidth or treewidth); see
\cite[Section~5]{eiben2019counting}.

Last, one could ask whether there is some quantitative version of MSO that would
capture the tractability of our problems on bounded-treewidth probabilistic
graphs, e.g., MSO with counting quantifiers as in
\cite{kreutzer2013quantitative,knop2019simplified}. However, this seems unlikely because such a logic would have to capture
precisely the tractable variants and not the intractable ones, i.e., it should
not capture $\lppe$ and $\lpwe$ which are hard for pathwidth $2$. 